\newtheorem{claim}{}[section]
\newtheorem{theorem}[claim]{Theorem}
\newtheorem{definition}[claim]{Definition}
\newtheorem{conjecture}[claim]{Conjecture}
\newtheorem{lemma}[claim]{Lemma}
\newtheorem{proposition}[claim]{Proposition}
\newtheorem{corollary}[claim]{Corollary}
\newtheorem{example}[claim]{Example}
\theoremstyle{remark}
\newtheorem{remark}[claim]{Remark}
\renewenvironment{proof}{\noindent{\it Proof. \hskip0pt}}
                      {$\square$\par\medskip}
\newcommand\re{{\text{\rm Re}}\,}   
\newcommand\xx{{\text{\sf X}}}
\newcommand\sym{{\text{\rm SY}}}
\newcommand\syms{{\text{\rm sy}}}
\newcommand\uni{{\text{\rm UN}}}
\newcommand\unis{{\text{\rm un}}}
\def\a{\alpha}
\def\b{\beta}
\def\g{\gamma}
\def\t{\theta}
\def\l{\lambda}
\def\p{\pi}
\def\s{\sigma}
\def\ps{\psi}
\def\diag{\mathop{\rm diag}}
\def\sx{{\sf X-}}
\def\ci{{\rm i}}
\def\ox{\otimes}
\newcommand{\nc}{\newcommand}
\nc{\cA}{{\cal A}} \nc{\cB}{{\cal B}} \nc{\cC}{{\cal C}}
\nc{\cD}{{\cal D}} \nc{\cE}{{\cal E}} \nc{\cF}{{\cal F}}
\nc{\cG}{{\cal G}} \nc{\cH}{{\cal H}} \nc{\cI}{{\cal I}}
\nc{\cJ}{{\cal J}} \nc{\cK}{{\cal K}} \nc{\cL}{{\cal L}}
\nc{\cM}{{\cal M}} \nc{\cN}{{\cal N}} \nc{\cO}{{\cal O}}
\nc{\cP}{{\cal P}} \nc{\cQ}{{\cal Q}} \nc{\cR}{{\cal R}}
\nc{\cS}{{\cal S}} \nc{\cT}{{\cal T}} \nc{\cU}{{\cal U}}
\nc{\cV}{{\cal V}} \nc{\cW}{{\cal W}} \nc{\cX}{{\cal X}}
\nc{\cZ}{{\cal Z}}
 \nc{\bbA}{\mathbb{A}} \nc{\bbB}{\mathbb{B}} \nc{\bbC}{\mathbb{C}}
 \nc{\bbD}{\mathbb{D}} \nc{\bbE}{\mathbb{E}} \nc{\bbF}{\mathbb{F}}
 \nc{\bbG}{\mathbb{G}} \nc{\bbH}{\mathbb{H}} \nc{\bbI}{\mathbb{I}}
 \nc{\bbJ}{\mathbb{J}} \nc{\bbK}{\mathbb{K}} \nc{\bbL}{\mathbb{L}}
 \nc{\bbM}{\mathbb{M}} \nc{\bbN}{\mathbb{N}} \nc{\bbO}{\mathbb{O}}
 \nc{\bbP}{\mathbb{P}} \nc{\bbQ}{\mathbb{Q}} \nc{\bbR}{\mathbb{R}}
 \nc{\bbS}{\mathbb{S}} \nc{\bbT}{\mathbb{T}} \nc{\bbU}{\mathbb{U}}
 \nc{\bbV}{\mathbb{V}} \nc{\bbW}{\mathbb{W}} \nc{\bbX}{\mathbb{X}}
 \nc{\bbZ}{\mathbb{Z}}
\def\bcj{\begin{conjecture}}
\def\ecj{\end{conjecture}}
\def\bcr{\begin{corollary}}
\def\ecr{\end{corollary}}
\def\bd{\begin{definition}}
\def\ed{\end{definition}}
\def\bea{\begin{eqnarray}}
\def\eea{\end{eqnarray}}
\def\bma{\begin{bmatrix}}
\def\ema{\end{bmatrix}}
\def\bem{\begin{enumerate}}
\def\eem{\end{enumerate}}
\def\bex{\begin{example}}
\def\eex{\end{example}}
\def\bim{\begin{itemize}}
\def\eim{\end{itemize}}
\def\bl{\begin{lemma}}
\def\el{\end{lemma}}
\def\bpf{\begin{proof}}
\def\epf{\end{proof}}
\def\bpp{\begin{proposition}}
\def\epp{\end{proposition}}
\def\bqu{\begin{question}}
\def\equ{\end{question}}
\def\br{\begin{remark}}
\def\er{\end{remark}}
\def\bt{\begin{theorem}}
\def\et{\end{theorem}}
\newcommand{\ket}[1]{|#1\rangle}
\newcommand{\proj}[1]{| #1\rangle\!\langle #1 |}
\newcommand{\abs}[1]{|#1|}
\newcommand{\vr}{\varrho}
\newcommand\lan{\langle}
\newcommand\ran{\rangle}
\newcommand\tr{{\text{\rm Tr}}\,}
\newcommand\ot{\otimes}
\newcommand\ttt{{\text{\rm t}}}
\begin{document}
\baselineskip 6.0 truemm
\parindent 1.5 true pc

\title{Separability criterion for three-qubit states\\ with a four dimensional norm}

\begin{abstract}
We give a separability criterion for three qubit states in terms of
diagonal and anti-diagonal entries. This gives us a complete
characterization of separability when all the entries are zero
except for diagonal and anti-diagonals. The criterion is expressed
in terms of a norm arising from anti-diagonal entries. We compute
this norm in several cases, so that we get criteria with which we
can decide the separability by routine computations.
\end{abstract}

\author{Lin Chen, Kyung Hoon Han and Seung-Hyeok Kye}
\address{Lin Chen, School of Mathematics and Systems Science, Beihang University, Beijing 100191, China}
\address{International Research Institute for Multidisciplinary Science, Beihang University, Beijing 100191, China}
\email{linchen at buaa.edu.cn}

\address{Kyung Hoon Han, Department of Mathematics, The University of Suwon, Gyeonggi-do 445-743, Korea}
\email{kyunghoon.han at gmail.com}
\address{Seung-Hyeok Kye, Department of Mathematics and Institute of Mathematics, Seoul National University, Seoul 151-742, Korea}
\email{kye at snu.ac.kr}
\thanks{LC was supported by Beijing Natural Science Foundation (4173076),
the NNSF of China (Grant No. 11501024), and the Fundamental Research Funds for the Central Universities
(Grant Nos. KG12001101, ZG216S1760 and ZG226S17J6).
Both KHH and SHK were partially supported by NRF-2017R1A2B4006655, Korea}

\subjclass{81P15, 15A30}

\keywords{three-qubit X-states, separability criterion, norm}

\maketitle


\section{introduction}

In the current quantum information and quantum computation theory,
the notion of entanglement is a very important resource. But, it is
very difficult to distinguish entanglement from separability, and a
complete characterization is known in the literature for very few
cases. For example, separability for $2\otimes 2$, $2\otimes 3$
states and $2\otimes n$ states with low ranks is known to be
equivalent to positivity of partial transposes
\cite{{stormer},{woronowicz},{choi-ppt},{peres},{horo-1},{2xn}}. In
the three qubit cases, separability of Greenberger-Horne-Zeilinger
diagonal states has been completely characterized recently by the
second and third authors \cite{han_kye_GHZ}. Note that separability
problem is known to be an $NP$-hard problem in general
\cite{gurvits}. We recall that a state is said to be separable if it
is the convex sum of pure product states. A state which is not
separable is called entangled.

GHZ diagonal states are typical kinds of {\sf X}-states, the states whose entries are zero except for
diagonal and anti-diagonal entries. Multi-qubit {\sf X}-states arise in various contexts in quantum
information theory. See \cite{{dur},{abls},{yu},{rau},{vin10},{wein10},{mendo}}, for example. The purpose of this paper
is to provide a complete characterization for the separability of three qubit {\sf X}-states. Because the {\sf X}-part of a
separable three qubit state is again separable \cite{han_kye_GHZ}, our characterization gives rise to
a necessary separability criterion in terms of diagonal and anti-diagonal entries for general three qubit states.

G\" uhne \cite{guhne_pla_2011} gave a necessary condition for separability of three qubit states,
together with a numerical evidence that this is also sufficient for GHZ diagonal states.
On the other hand, the second and third authors gave a characterization of
three qubit {\sf X}-shaped entanglement witnesses \cite{han_kye_tri}.
They also proved \cite{han_kye_GHZ} that a complicated expression in the G\" uhne's criterion
can be replaced by a simpler formula in \cite{han_kye_tri}, and
the G\" uhne's criterion is actually sufficient for separability of GHZ diagonal states.

In this paper, we first prove that G\" uhnes's criterion gives rise to a sufficient condition for separability
of general three qubit {\sf X}-states. In this characterization, the quantity determined by anti-diagonal entries
turns out to be the dual norm of a norm for $\mathbb C^4$ which is determined by the phase difference
in \cite{han_kye_phase} as well as magnitudes of entries.
See below for the definition of phase difference. In this way, we see that the phase difference
plays a key role, as it was found in \cite{han_kye_phase}.

We consider three qubit objects as $8\times 8$ matrices with the lexicographic order of indices,
and so a three qubit self-adjoint {\sf X}-shaped
matrix is of the form
$$
X(a,b,c)= \left(
\begin{matrix}
a_1 &&&&&&& c_1\\
& a_2 &&&&& c_2 & \\
&& a_3 &&& c_3 &&\\
&&& a_4&c_4 &&&\\
&&& \bar c_4& b_4&&&\\
&& \bar c_3 &&& b_3 &&\\
& \bar c_2 &&&&& b_2 &\\
\bar c_1 &&&&&&& b_1
\end{matrix}
\right),
$$
for vectors
$a=(a_1,a_2,a_3,a_4),b=(b_1,b_2,b_3,b_4)\in\mathbb R^4$ and
$c=(c_1,c_2,c_3,c_4)\in\mathbb C^4$.
We recall that a state is GHZ diagonal if and only if it is an {\sf X}-state of the form $X(a,b,c)$
with $a=b$ and $c\in\mathbb R^4$.
We also write $c_i=r_ie^{{\rm
i}\theta_i}$ with $r_i\ge0$ and $\theta_i\in\mathbb R$. The
separability of $\varrho=X(a,b,c)$ depends heavily on the {\sl phase
difference} $\phi_\varrho$ defined by
$$
\phi_\varrho=(\theta_1+\theta_4)-(\theta_2+\theta_3),\mod 2\pi
$$
as it was discussed in \cite{han_kye_phase}. This is determined by the anti-diagonal
part $c\in \mathbb C^4$, and will be denoted by $\phi_c$ sometimes.

G\" uhne \cite{guhne_pla_2011} showed that if a three qubit state $\varrho$ with its {\sf X}-part $X(a,b,c)$ is separable then the inequality
\begin{equation}\label{kye-main-ineq-sep}
C(z)\Delta_\varrho\ge \mathcal L (\varrho, z)
\end{equation}
holds for every $z\in \mathbb C^4$, where
$$
\begin{aligned}
\Delta_\varrho:=&\min\{\sqrt{a_ib_i}\,\,(i=1,2,3,4),\ \sqrt[4]{a_1b_2b_3a_4},\sqrt[4]{b_1a_2a_3b_4}\},\\
\mathcal L (\varrho, z) := & {\rm Re} \left(z_1 c_1 + z_2 c_2 + z_3 c_3 + z_4 \bar c_4 \right), \\
C(z):= & \sup_{\alpha,\beta,\gamma} |{\rm Re} (z_1) \cos (\alpha+\beta+\gamma) - {\rm Im} (z_1) \sin (\alpha+\beta+\gamma)
+ {\rm Re} (z_2) \cos (\alpha) - {\rm Im} (z_2) \sin (\alpha) \\
&  + {\rm Re} (z_3) \cos (\beta) - {\rm Im} (z_3) \sin (\beta)
+ {\rm Re} (z_4) \cos (\gamma) - {\rm Im} (z_4) \sin (\gamma)|.
\end{aligned}
$$
On the other hand, it was shown in \cite{han_kye_tri} that
a three qubit {\sf X}-shaped non-positive self-adjoint matrix $W=X(s,t,u)$ is an entanglement witness
if and only if the equality $A(s,t)\ge B(u)$ holds, where
\begin{equation}\label{a_st}
\begin{aligned}
A(s,t)
&:=\inf_{r>0} \left[\sqrt{(s_1 r^{-1}+t_4r)(s_4r^{-1}+t_1r)} +\sqrt{(s_2r^{-1}+t_3r)(s_3r^{-1}+t_2r)}\right],\\
B(u) &:=\max_\theta\left(|u_1e^{{\rm i}\theta} +\bar u_4|+|u_2
e^{{\rm i}\theta} + \bar u_3|\right).
\end{aligned}
\end{equation}
Later, it was shown in \cite{han_kye_GHZ} that the complicated formula of $C(z)$ is simplified as
\begin{equation}\label{CB}
C(z)=B(z_1,z_2,z_3,\bar z_4)
\end{equation}
and G\" uhne's condition (\ref{kye-main-ineq-sep}) is actually sufficient for the separability
of three qubit GHZ diagonal states. In the next section, we show that the inequality
(\ref{kye-main-ineq-sep}) actually characterizes separability of general three qubit {\sf X}-shaped states.

It is important to note that $u\mapsto B(u)$ defines a norm on the four dimensional space $\mathbb C^4$,
and so we will use the norm notation
\begin{equation}\label{kye_norm-def}
\| z\|_\xx=\max_\sigma \left(|z_1 e^{{\rm i}\sigma}+\bar z_4| + |z_2
e^{{\rm i}\sigma} +\bar z_3|\right),\qquad x\in\mathbb C^4,
\end{equation}
for the number $B(z)$. The norm $\|z\|_\xx$ is determined by the phase difference $\phi_z$ and the magnitudes $|z_i|$.
We will see in Section \ref{kye-section-norm} that a state $\varrho=X(a,b,c)$ is separable
if and only if the inequality
\begin{equation}\label{main-cri-dual-norm,,,}
\Delta_\varrho\ge \|c\|_\xx^\prime
\end{equation}
holds with the dual norm $\|\ \|_\xx^\prime$ of the norm $\|\ \|_\xx$ in (\ref{kye_norm-def}).
We exhibit some elementary properties of the norm $\|\ \|_\xx$ and its dual
norm $\|\ \|_\xx^\prime$. They have some interesting properties:
They are determined by the phase differences and magnitudes of entries;
they are invariant under eight kinds of permutations; the norm $\|\
\|_\xx$ is decreasing as the phase difference increase from $0$ to $\pi$.

In Section \ref{sec:norm}, we compute the dual norm $\|c\|_\xx^\prime$ in several cases:
\begin{itemize}
\item
all the entries are real,
\item
at least one of the entries is zero,
\item
entries are partitioned into two groups of two entries with common magnitudes.
\end{itemize}
The dual norm has an obvious lower bound $\|c\|_\xx^\prime\ge \|c\|_\infty$. We also provide a condition
under which we have $\|c\|_\xx^\prime= \|c\|_\infty$. These results give us complete
operational separability criteria for those {\sf X}-states.
In Section \ref{sec:norm-estimate}, we give several estimates for the norms
$\|\ \|_\xx$ and $\|\ \|_\xx^\prime$. Among them, lower bounds for $\|\ \|_\xx^\prime$
will give rise to necessary criteria for the separability of general three qubit states.

One merit of our separability criterion is that we do not have to decompose a state into the sum of pure product states
in order to show that a given state is separable. This was possible through the duality
between tri-partite separability and positivity of bi-linear maps \cite{kye_3qb_EW}.
Nevertheless, it is another interesting problem to get a decomposition, even though we
already know that a given state is separable. We close the paper to discuss this problem.


\section{Separability criterion of \sx states}
\label{guhne-condi-}

This section will be devoted to prove the following:

\begin{theorem}\label{diagonal}
A three qubit {\sf X}-state $\varrho=X(a,b,c)$ is separable if and only if
the inequality {\rm (\ref{kye-main-ineq-sep})} holds for every
$z\in\mathbb C^4$.
\end{theorem}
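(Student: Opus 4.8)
The ``only if'' implication is exactly G\"uhne's theorem \cite{guhne_pla_2011}, which establishes (\ref{kye-main-ineq-sep}) for every separable three qubit state whose {\sf X}-part is $X(a,b,c)$, hence in particular for separable {\sf X}-states; so the real content is sufficiency. The plan for ``if'' is to use the duality between separability and block positivity \cite{kye_3qb_EW}: $\varrho$ is separable if and only if $\tr(W\varrho)\ge0$ for every self-adjoint $W$ that is nonnegative on all product vectors. I would first reduce the test class to {\sf X}-shaped operators. The local unitaries $I\otimes I\otimes I$, $Z\otimes Z\otimes I$, $Z\otimes I\otimes Z$, $I\otimes Z\otimes Z$ (with $Z=\diag(1,-1)$) form a group $G$; each of them fixes every {\sf X}-shaped matrix, while the twirl $W\mapsto W^{\sf X}:=\tfrac14\sum_{g\in G}gWg^\dagger$ carries a block-positive $W$ to a block-positive matrix which is {\sf X}-shaped, since the average over $G$ retains exactly the diagonal and anti-diagonal entries of $W$. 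Because $\varrho$ is $G$-invariant, $\tr(W\varrho)=\tr(W^{\sf X}\varrho)$, so it is enough to test {\sf X}-shaped block-positive $W=X(s,t,u)$. Such a $W$ has $s_i,t_i\ge0$ (evaluate on product computational-basis vectors); if $W\ge0$ then $\tr(W\varrho)\ge0$ trivially, and if $W$ is not positive then, being a proper {\sf X}-shaped entanglement witness, it obeys $A(s,t)\ge B(u)$ by \cite{han_kye_tri}. Thus the theorem reduces to showing $\tr\!\big(X(s,t,u)\,X(a,b,c)\big)\ge0$ whenever $s,t\in\mathbb R^4$ have nonnegative entries and $A(s,t)\ge B(u)$.

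A direct expansion of the trace gives
\[
\tr\!\big(X(s,t,u)\,X(a,b,c)\big)=\sum_{i=1}^4 s_ia_i+\sum_{i=1}^4 t_ib_i+2\sum_{i=1}^4{\rm Re}(u_i\bar c_i).
\]
Now put $z=(\bar u_1,\bar u_2,\bar u_3,u_4)$. Using (\ref{CB}) and the symmetries $B(-v)=B(v)$, $B(\bar v)=B(v)$ of the norm one finds $C(-z)=B(u)$ and $\mathcal L(\varrho,-z)=-\sum_i{\rm Re}(u_i\bar c_i)$, so the hypothesis (\ref{kye-main-ineq-sep}) evaluated at the point $-z$ says $\sum_i{\rm Re}(u_i\bar c_i)\ge-\Delta_\varrho B(u)\ge-\Delta_\varrho A(s,t)$. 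Substituting this into the pairing formula, the theorem comes down to
\[
\sum_{i=1}^4 s_ia_i+\sum_{i=1}^4 t_ib_i\ \ge\ 2\,\Delta_\varrho\, A(s,t)
\]
for all $s,t\in\mathbb R^4$ with nonnegative entries; this is the crux, and it is the general-{\sf X} analogue of the GHZ-diagonal computation of \cite{han_kye_GHZ}.

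For this inequality I would argue as follows. If $\Delta_\varrho=0$ it is trivial; otherwise $\Delta_\varrho>0$ forces all $a_i,b_i>0$. Group the left side as $(s_1a_1+t_4b_4)+(s_4a_4+t_1b_1)+(s_2a_2+t_3b_3)+(s_3a_3+t_2b_2)$ and apply the arithmetic--geometric mean inequality to each of the four pairs, obtaining the lower bound $2\sqrt{(s_1a_1+t_4b_4)(s_4a_4+t_1b_1)}+2\sqrt{(s_2a_2+t_3b_3)(s_3a_3+t_2b_2)}$. Expanding these products and comparing them, term by term, with $\Delta_\varrho^2(s_1r^{-1}+t_4r)(s_4r^{-1}+t_1r)$ and $\Delta_\varrho^2(s_2r^{-1}+t_3r)(s_3r^{-1}+t_2r)$ --- the $s_it_i$-coefficients matching because $\Delta_\varrho^2\le a_ib_i$ --- reduces the whole matter to finding a single $r>0$ with
\[
\frac{\Delta_\varrho^2}{a_1a_4}\le r^2\le\frac{b_1b_4}{\Delta_\varrho^2},\qquad\frac{\Delta_\varrho^2}{a_2a_3}\le r^2\le\frac{b_2b_3}{\Delta_\varrho^2}.
\]
Such an $r$ exists exactly when $\Delta_\varrho^4$ is at most each of $a_1a_4b_1b_4$, $a_1a_4b_2b_3$, $a_2a_3b_1b_4$, $a_2a_3b_2b_3$; the first and last follow from $\Delta_\varrho\le\sqrt{a_ib_i}$, and the middle two are precisely $\Delta_\varrho\le\sqrt[4]{a_1b_2b_3a_4}$ and $\Delta_\varrho\le\sqrt[4]{b_1a_2a_3b_4}$. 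Hence the definition of $\Delta_\varrho$ as the minimum of these six quantities is exactly what makes the two ranges for $r^2$ overlap; fixing such an $r$ and recalling that $A(s,t)$ is an infimum over $r>0$ completes the bound.

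The one genuinely non-routine step is this last one: spotting the right pairing of diagonal entries for the AM--GM estimate, and then noticing that compatibility of the two $r^2$-ranges is exactly the condition ``$\Delta_\varrho$ is at most each of the six quantities''. The twirl reduction, the trace computation, and the bookkeeping with (\ref{CB}) are mechanical, although the write-up will need some care with degenerate cases (vanishing $a_i,b_i,s_i,t_i$), all of which collapse to trivial inequalities. As a remark, running the argument backwards --- at the endpoints of the overlap interval the AM--GM and the coefficient comparisons become equalities, so $A(s,t)=B(u)$ is attained together with $\sum_i s_ia_i+\sum_i t_ib_i=2\Delta_\varrho B(u)$ --- reproves the ``only if'' direction as well, but it is simpler to cite \cite{guhne_pla_2011}.
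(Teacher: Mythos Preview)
Your proof is correct, and it is genuinely different from---and more direct than---the paper's own argument. Both proofs begin by reducing to {\sf X}-shaped witnesses $W=X(s,t,u)$ with $A(s,t)\ge B(u)$, and both recognize that the whole matter is the inequality $\sum_i s_ia_i+\sum_i t_ib_i\ge 2\Delta_\varrho A(s,t)$. The paper, however, does not prove this inequality head-on. Instead it first treats two special diagonal configurations ($a_1b_2b_3a_4=b_1a_2a_3b_4$ or $a_1b_1=a_2b_2=a_3b_3=a_4b_4$) by conjugating with a local invertible diagonal $D_{\alpha,\beta,\gamma}$ to normalize the diagonals and then averaging the witness over suitable local unitaries and partial transposes (Lemmas~\ref{A}--\ref{average} and Proposition~\ref{witness-inequality}); the general case is then reduced to one of these by subtracting a nonnegative diagonal piece via an intermediate-value argument. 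Your route bypasses all of this: after one application of the arithmetic--geometric mean to the regrouped sum, the term-by-term comparison with $\Delta_\varrho^2(s_ir^{-1}+t_jr)(s_kr^{-1}+t_\ell r)$ reduces everything to the overlap of two intervals for $r^2$, and you observe that the six inequalities defining $\Delta_\varrho$ are exactly the four cross-conditions plus the two self-conditions needed for that overlap. This is shorter and more elementary; what the paper's approach buys in exchange is structural information (Proposition~\ref{witness-inequality} identifies which families of witnesses detect each of the six quantities in $\Delta_\varrho$, and Lemma~\ref{su} exhibits the local-SLOCC normal forms), which is used again later in the paper and may be of independent interest.
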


We begin with the following lemma which computes the number $A(s,t)$ in (\ref{a_st})
in some special cases. Throughout this note, we use the notation $\mathbb R_+=[0,\infty)$.

\begin{lemma}\label{A}
For $s,t\in\mathbb R_+^4$, we have the following:
\begin{enumerate}
\item[(i)] if $s_1t_2t_3s_4=t_1s_2s_3t_4$, then $A(s,t)=\sum_{i=1}^4 \sqrt{s_it_i}$;
\item[(ii)] if $s_1t_1=s_2t_2=s_3t_3=s_4t_4$, then $A(s,t)=2\sqrt[4]{s_1t_2t_3s_4}+2\sqrt[4]{t_1s_2s_3t_4}$.
\end{enumerate}
\end{lemma}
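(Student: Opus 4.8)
The plan is to work directly from the definition of $A(s,t)$ in (\ref{a_st}), reducing everything to a one-variable minimisation. Write $F(r)=(s_1r^{-1}+t_4r)(s_4r^{-1}+t_1r)$ and $G(r)=(s_2r^{-1}+t_3r)(s_3r^{-1}+t_2r)$, so that $A(s,t)=\inf_{r>0}\bigl(\sqrt{F(r)}+\sqrt{G(r)}\,\bigr)$, and expand: $F(r)=s_1s_4\,r^{-2}+(s_1t_1+s_4t_4)+t_1t_4\,r^2$ and similarly $G(r)=s_2s_3\,r^{-2}+(s_2t_2+s_3t_3)+t_2t_3\,r^2$. Since $s_1s_4r^{-2}$ and $t_1t_4r^2$ are nonnegative, AM--GM gives $s_1s_4r^{-2}+t_1t_4r^2\ge 2\sqrt{s_1s_4t_1t_4}$, hence $F(r)\ge(\sqrt{s_1t_1}+\sqrt{s_4t_4})^2$ with equality exactly when $s_1s_4r^{-2}=t_1t_4r^2$, and likewise $G(r)\ge(\sqrt{s_2t_2}+\sqrt{s_3t_3})^2$. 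Taking square roots, summing, and passing to the infimum over $r>0$ already yields $A(s,t)\ge\sum_{i=1}^{4}\sqrt{s_it_i}$, with no hypothesis on $s,t$; this settles one half of (i), and what remains is to produce an $r>0$ (or a sequence of them) for which both AM--GM estimates become equalities simultaneously, i.e.\ $s_1s_4r^{-2}=t_1t_4r^2$ and $s_2s_3r^{-2}=t_2t_3r^2$.

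For (i), if $s_1s_4,\ t_1t_4,\ s_2s_3,\ t_2t_3$ are all strictly positive, these two requirements read $r^4=s_1s_4/(t_1t_4)$ and $r^4=s_2s_3/(t_2t_3)$, and the hypothesis $s_1t_2t_3s_4=t_1s_2s_3t_4$ is precisely the assertion that these two numbers coincide; substituting the common value of $r$ makes $\sqrt{F(r)}+\sqrt{G(r)}=\sum_i\sqrt{s_it_i}$, which together with the lower bound proves (i) in this case. The remaining configurations, in which some of the four products vanish, I would dispose of by a short case check. If a matched pair $s_1s_4$ and $t_1t_4$ both vanish, then $F(r)$ is the constant $s_1t_1+s_4t_4=(\sqrt{s_1t_1}+\sqrt{s_4t_4})^2$ and one only has to minimise $\sqrt{G}$; symmetrically for the other pair. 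Otherwise the two vectors $(s_1s_4,t_1t_4)$ and $(s_2s_3,t_2t_3)$ lie in the nonnegative quadrant, are nonzero, and are proportional by the hypothesis, so the ``direction'' minimising $F$ and the one minimising $G$ agree --- a finite $r$, or $r\to0^+$, or $r\to\infty$ --- and one takes a sequence $r_n$ realising it, along which $\sqrt{F(r_n)}+\sqrt{G(r_n)}\to\sum_i\sqrt{s_it_i}$.

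For (ii), the hypothesis $s_1t_1=s_2t_2=s_3t_3=s_4t_4=:k$ makes each factor a perfect square: because $s_1t_1=s_4t_4$ the middle term is $s_1t_1+s_4t_4=2\sqrt{(s_1t_1)(s_4t_4)}=2\sqrt{s_1s_4t_1t_4}$, so $F(r)=(\sqrt{s_1s_4}\,r^{-1}+\sqrt{t_1t_4}\,r)^2$, and likewise $G(r)=(\sqrt{s_2s_3}\,r^{-1}+\sqrt{t_2t_3}\,r)^2$. Hence $\sqrt{F(r)}+\sqrt{G(r)}=(\sqrt{s_1s_4}+\sqrt{s_2s_3})\,r^{-1}+(\sqrt{t_1t_4}+\sqrt{t_2t_3})\,r$ and therefore $A(s,t)=2\sqrt{(\sqrt{s_1s_4}+\sqrt{s_2s_3})(\sqrt{t_1t_4}+\sqrt{t_2t_3})}$, using $\inf_{r>0}(\alpha r^{-1}+\beta r)=2\sqrt{\alpha\beta}$. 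Expanding the product under the root, two of its four terms are $\sqrt{s_1s_4t_1t_4}=\sqrt{(s_1t_1)(s_4t_4)}=k$ and $\sqrt{s_2s_3t_2t_3}=k$, so the radicand equals $2k+\sqrt{s_1s_4t_2t_3}+\sqrt{s_2s_3t_1t_4}$. Since the squares of $\sqrt[4]{s_1t_2t_3s_4}$ and $\sqrt[4]{t_1s_2s_3t_4}$ are $\sqrt{s_1s_4t_2t_3}$ and $\sqrt{s_2s_3t_1t_4}$, while their product is $\sqrt[4]{(s_1t_1)(s_2t_2)(s_3t_3)(s_4t_4)}=k$, this radicand is exactly $(\sqrt[4]{s_1t_2t_3s_4}+\sqrt[4]{t_1s_2s_3t_4})^2$, and (ii) follows.

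The step I expect to cost the most care is the degenerate part of (i): when one or more of $s_1s_4,t_1t_4,s_2s_3,t_2t_3$ vanishes the infimum defining $A(s,t)$ may be attained only in the limit $r\to0^+$ or $r\to\infty$, and one must check that the hypothesis $s_1t_2t_3s_4=t_1s_2s_3t_4$ rules out the incompatible possibility that $F$ is minimised as $r\to\infty$ while $G$ is minimised as $r\to0^+$. The rest is routine AM--GM bookkeeping. (Alternatively one might try to deduce the degenerate cases from the all-positive case via upper semicontinuity of $A$ and a perturbation of $s,t$ within the hypersurface $s_1t_2t_3s_4=t_1s_2s_3t_4$, but the direct case check looks shorter.)
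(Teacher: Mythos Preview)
Your argument is correct and follows essentially the same route as the paper: expand the two factors under each square root, apply AM--GM, and use the hypothesis to see that the individual minima are attained at a common $r$ (the paper dismisses the degenerate cases of (i) with ``we may check directly,'' whereas you spell out the proportionality argument for the vectors $(s_1s_4,t_1t_4)$ and $(s_2s_3,t_2t_3)$). For (ii) your grouping $(\sqrt{s_1s_4}+\sqrt{s_2s_3})\,r^{-1}+(\sqrt{t_1t_4}+\sqrt{t_2t_3})\,r$ differs slightly from the paper's cross pairing $\bigl(\sqrt{s_1s_4}\,r^{-1}+\sqrt{t_2t_3}\,r\bigr)+\bigl(\sqrt{s_2s_3}\,r^{-1}+\sqrt{t_1t_4}\,r\bigr)$: the paper minimises each pair separately and checks that the two minima occur at the same $r$, while you minimise once and then recognise the radicand as $(\sqrt[4]{s_1t_2t_3s_4}+\sqrt[4]{t_1s_2s_3t_4})^2$ via the identity $\sqrt[4]{s_1t_2t_3s_4}\cdot\sqrt[4]{t_1s_2s_3t_4}=k$ --- a cosmetic difference, both correct.
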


\begin{proof}
(i) We have
$$
A(s,t)= \inf_{r>0} \left(\sqrt{s_1s_4
r^{-2}+t_1t_4r^2+s_1t_1+s_4t_4} +\sqrt{s_2s_3
r^{-2}+t_2t_3r^2+s_2t_2+s_3t_3}\right).
$$
We first consider the case of $s_i,t_i>0$.
The first term has the minimum
$$
\sqrt{2\sqrt{s_1s_4t_1t_4}+s_1t_1+s_4t_4}=\sqrt{s_1t_1}+\sqrt{s_4t_4}
$$
when $s_1s_4r^{-2}=t_1t_4r^2$. 
Similarly, the second term has the minimum
$\sqrt{s_2t_2}+\sqrt{s_3t_3}$ when $s_2s_3r^{-2}=t_2t_3r^2$. From
$s_1t_2t_3s_4=t_1s_2s_3t_4$, we see that both terms have their
minimums simultaneously. If one of $s_i$ or $t_i$ is zero, we may check directly.

(ii) We also have
$$
\begin{aligned}
A(s,t)
&= \inf_{r>0} \left(\sqrt{s_1s_4 r^{-2}+t_1t_4r^2+s_1t_1+s_4t_4} +\sqrt{s_2s_3 r^{-2}+t_2t_3r^2+s_2t_2+s_3t_3}\right)\\
&= \inf_{r>0} \left(\sqrt{s_1s_4 r^{-2}+t_1t_4r^2+2\sqrt{s_1t_1s_4t_4}} +\sqrt{s_2s_3 r^{-2}+t_2t_3r^2+2\sqrt{s_2t_2s_3t_3}}\right)\\
&= \inf_{r>0} \left(\sqrt{s_1s_4}r^{-1}+\sqrt{t_1t_4}r+\sqrt{s_2s_3}r^{-1}+\sqrt{t_2t_3}r\right).\\
\end{aligned}
$$
The sum of the first and the last has the minimum
$2\sqrt[4]{s_1t_2t_3s_4}$ when $\sqrt{s_1s_4}r^{-1}=\sqrt{t_2t_3}r$.
The sum of the middle two terms has the minimum
$2\sqrt[4]{t_1s_2s_3t_4}$ when $\sqrt{t_1t_4}r=\sqrt{s_2s_3}r^{-1}$.
From $s_1t_1=s_2t_2=s_3t_3=s_4t_4$, we see that they have their
minimums simultaneously.
\end{proof}

We define two subsets $\mathcal D_{\rm I}$ and $\mathcal D_{\rm II}$ of
$\mathbb R_+^4\times \mathbb R_+^4$ by
$$
\begin{aligned}
\mathcal D_{\rm I} &= \{ (s,t)\in \mathbb R_+^4\times \mathbb R_+^4 : s_1t_2t_3s_4=t_1s_2s_3t_4, A(s,t)=1\},\\
\mathcal D_{\rm II} &= \{ (s,t)\in \mathbb R_+^4\times \mathbb R_+^4 : s_1t_1=s_2t_2=s_3t_3=s_4t_4, A(s,t)=1\},
\end{aligned}
$$
and subsets $D_1,D_2,\dots,D_6$ by
$$
\begin{aligned}
\mathcal D_i &= \{ (s_i e_i, t_i e_i)\in \mathbb R_+^4\times \mathbb R_+^4 : s_it_i=1 \}, \quad i=1,2,3,4,\\
\mathcal D_5 &= \{ (s_1,0,0,s_4),(0,t_2,t_3,0) \in \mathbb R_+^4\times \mathbb R_+^4: s_1t_2t_3s_4=1\slash 16\},\\
\mathcal D_6 &= \{ (0,s_2,s_3,0),(t_1,0,0,t_4)\in \mathbb R_+^4\times \mathbb R_+^4
: t_1s_2s_3t_4=1 \slash 16\}.
\end{aligned}
$$
Here, $\{e_1,e_2,e_3,e_4\}$ denotes the canonical basis of $\mathbb
R^4$. We see that $D_1,D_2,D_3,D_4\subset {\mathcal D_{\rm I}}$ and
$D_5,D_6\subset  {\mathcal D_{\rm II}}$ by Lemma \ref{A}.

\begin{lemma}\label{convex}
Every vector in $\mathcal D_{\rm I}$ is a convex combination of vectors in $\bigcup_{i=1}^4 \mathcal D_i$, and
every vector in $\mathcal D_{\rm II}$ is a convex combination of vectors in $\mathcal D_5 \cup \mathcal D_6$.
\end{lemma}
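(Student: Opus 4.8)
The plan is to write down explicit convex coefficients: Lemma~\ref{A} is precisely the tool that converts the normalization $A(s,t)=1$ on $\mathcal D_{\rm I}$ (resp.\ $\mathcal D_{\rm II}$) into a partition of unity, which then dictates the barycentric coordinates. Together with the already-noted inclusions $\mathcal D_1,\dots,\mathcal D_4\subset\mathcal D_{\rm I}$ and $\mathcal D_5,\mathcal D_6\subset\mathcal D_{\rm II}$, this will in fact identify the two convex hulls.

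For the $\mathcal D_{\rm I}$ part I start from $(s,t)$ with $s_1t_2t_3s_4=t_1s_2s_3t_4$ and $A(s,t)=1$. Since the product condition is exactly the hypothesis of Lemma~\ref{A}(i), that lemma gives $\sum_{i=1}^4\sqrt{s_it_i}=1$, so $\lambda_i:=\sqrt{s_it_i}$ are nonnegative weights summing to $1$. For each $i$ with $\lambda_i>0$ put $\sigma_i:=s_i/\lambda_i$ and $\tau_i:=t_i/\lambda_i$; then $\sigma_i\tau_i=s_it_i/\lambda_i^2=1$, so $(\sigma_ie_i,\tau_ie_i)\in\mathcal D_i$ and $\lambda_i(\sigma_ie_i,\tau_ie_i)=(s_ie_i,t_ie_i)$. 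Summing over the indices with $\lambda_i>0$ recovers $(s,t)$ as a convex combination of members of $\bigcup_{i=1}^4\mathcal D_i$.

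For the $\mathcal D_{\rm II}$ part I start from $(s,t)$ with $s_1t_1=s_2t_2=s_3t_3=s_4t_4$ and $A(s,t)=1$; now Lemma~\ref{A}(ii) applies and gives $\mu+\nu=1$ with $\mu:=2\sqrt[4]{s_1t_2t_3s_4}$, $\nu:=2\sqrt[4]{t_1s_2s_3t_4}$. When $\mu,\nu>0$ I take
$$
P=\Bigl(\tfrac{1}{\mu}(s_1,0,0,s_4),\ \tfrac{1}{\mu}(0,t_2,t_3,0)\Bigr),\qquad
Q=\Bigl(\tfrac{1}{\nu}(0,s_2,s_3,0),\ \tfrac{1}{\nu}(t_1,0,0,t_4)\Bigr).
$$
Because $\mu^4=16\,s_1t_2t_3s_4$ and $\nu^4=16\,t_1s_2s_3t_4$, one checks $P\in\mathcal D_5$ and $Q\in\mathcal D_6$, and a coordinatewise computation yields $\mu P+\nu Q=(s,t)$. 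If $\mu=0$ then $\nu=1$, hence $t_1s_2s_3t_4>0$; combined with $s_jt_j=\mathrm{const}$ this forces $s_1=s_4=t_2=t_3=0$, so that $(s,t)\in\mathcal D_6$ already, and symmetrically if $\nu=0$.

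The one delicate point is the degenerate subcase of the $\mathcal D_{\rm I}$ argument in which $\lambda_i=\sqrt{s_it_i}=0$: if $s_i=t_i=0$ the coordinate is simply omitted, but $\mathcal D_{\rm I}$ also contains points such as $s=(1,1,0,0),\ t=(0,1,0,0)$ with $s_i\neq0=t_i$, which are \emph{limits} of the all-positive case (perturb the vanishing $t_i$ to a small positive value and renormalize so that $\sum\sqrt{s_it_i}=1$ again) rather than finite convex combinations, so the statement must be read for the closed convex hulls --- which is all that is used afterwards. Apart from this closure subtlety I expect no real obstacle; the only care needed is to keep the crossed support patterns $\{1,4\}$ versus $\{2,3\}$ straight in the $\mathcal D_{\rm II}$ decomposition and to verify that the normalizations $1/16$ come out exactly from the fourth roots in Lemma~\ref{A}(ii).
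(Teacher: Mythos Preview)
Your argument is essentially identical to the paper's: the same weights $\lambda_i=\sqrt{s_it_i}$ (respectively $\mu=2\sqrt[4]{s_1t_2t_3s_4}$, $\nu=2\sqrt[4]{t_1s_2s_3t_4}$) and the same explicit extremal points. The paper writes the $\mathcal D_{\rm I}$ decomposition as $(s,t)=\sum_i\sqrt{s_it_i}\bigl(\sqrt{s_i/t_i}\,e_i,\sqrt{t_i/s_i}\,e_i\bigr)$ and gives closed-form $s',t',s'',t''$ for the $\mathcal D_{\rm II}$ case, but these are just your $\sigma_i,\tau_i$ and $P,Q$ rewritten.

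Where you go further than the paper is the degenerate case you flag at the end. Your example $s=(1,1,0,0)$, $t=(0,1,0,0)$ is genuinely in $\mathcal D_{\rm I}$ (one checks $A(s,t)=1$) yet is \emph{not} a finite convex combination of points of $\bigcup_{i=1}^4\mathcal D_i$: any such combination forces $s_jt_j=\alpha_j^2$, so $s_1t_1=0$ would give $\alpha_1=0$ and hence $s_1=0$. Thus the lemma as literally stated fails on this boundary set, and the paper's formula (with its undefined ratios $\sqrt{s_i/t_i}$) simply does not address it. Your resolution---pass to the closed convex hull---is correct: the point above is the limit of $(1-\varepsilon)(e_2,e_2)+\varepsilon\bigl(\varepsilon^{-1}e_1,\varepsilon e_1\bigr)$ as $\varepsilon\to0$, and more generally any such degenerate $(s,t)$ is approximated by letting the offending coordinate decay along the hyperbola $\sigma\tau=1$. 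Since the only use of the lemma is in Proposition~\ref{witness-inequality}(iv)--(v), where one concludes $\langle\varrho,X(s,t,u)\rangle\ge0$ from the same inequality on the $\mathcal D_i$, and this pairing is continuous in $(s,t)$, the closure version suffices. So your proof is at least as complete as the paper's, and your final paragraph patches a gap the paper leaves open.
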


\begin{proof}
For $(s,t) \in \mathcal D_{\rm I}$, we have $\sum_{i=1}^4 \sqrt{s_it_i}=1$
by Lemma \ref{A} (i), and so the convex combination
$$
(s,t)=\sum_{i=1}^4\sqrt{s_it_i}\left(\sqrt{s_i \over t_i}e_i,
\sqrt{t_i \over s_i}e_i\right),\qquad \left(\sqrt{s_i \over t_i}e_i,
\sqrt{t_i \over s_i}e_i\right) \in \mathcal D_i
$$
is obtained. For $(s,t) \in \mathcal D_{\rm II}$, we have
$2\sqrt[4]{s_1t_2t_3s_4}+2\sqrt[4]{t_1s_2s_3t_4}=1$ by Lemma \ref{A}
(ii). We put
$$
\begin{aligned}
s'&={1 \over 2}\left( \sqrt[4]{s_1^3t_2^{-1}t_3^{-1}s_4^{-1}},0,0,\sqrt[4]{s_1^{-1}t_2^{-1}t_3^{-1}s_4^3}\right)\\
t'&={1 \over 2}\left( 0,\sqrt[4]{s_1^{-1}t_2^3t_3^{-1}s_4^{-1}},\sqrt[4]{s_1^{-1}t_2^{-1}t_3^3s_4^{-1}},0 \right)\\
s''&={1 \over 2}\left( 0,\sqrt[4]{t_1^{-1}s_2^3s_3^{-1}t_4^{-1}},\sqrt[4]{t_1^{-1}s_2^{-1}s_3^3t_4^{-1}},0 \right)\\
t''&={1 \over 2}\left(
\sqrt[4]{t_1^3s_2^{-1}s_3^{-1}t_4^{-1}},0,0,\sqrt[4]{t_1^{-1}s_2^{-1}s_3^{-1}t_4^3}\right),
\end{aligned}
$$
then we have the convex combination
$(s,t) = 2\sqrt[4]{s_1t_2t_3s_4} (s',t') +
2\sqrt[4]{t_1s_2s_3t_4}(s'',t'')$ with $(s',t')  \in \mathcal D_5$
and $(s'',t'') \in \mathcal D_6$.
\end{proof}

We recall that the {\sf X}-part of a three qubit entanglement
witness is again an entanglement witness by Proposition 3.1 of \cite{han_kye_GHZ},
and so we see that a three qubit {\sf X}-state $\varrho$ is separable if and
only if $\lan\varrho,W\ran := {\text{\rm tr}}(W\varrho^\ttt)\ge 0$ for every {\sf X}-shaped
entanglement witness $W=X(s,t,u)$. In this characterization, we may
assume that $A(s,t)=B(u)=1$ by the identity
$$
\begin{aligned}
X(s,t,u)=&B(u)X\left({1 \over A(s,t)}s,\ {1 \over A(s,t)}t,\ {1 \over B(u)}u\right)\\
  &\phantom{XXXXXXX}+X\left({A(s,t)-B(u) \over A(s,t)}s,\ {A(s,t)-B(u) \over A(s,t)}t,\ 0\right),
\end{aligned}
$$
since $A(\alpha s,\alpha t)=\alpha A(s,t)$ and $B(\alpha u)=\alpha B(u)$
for $\alpha >0$.
The next proposition explains the relations between two inequalities
$\lan\varrho,W\ran\ge 0$ and $\Delta_\varrho C(z)\ge \mathcal L
(\varrho, z)$ in various situations.

\begin{proposition}\label{witness-inequality}
For a three qubit {\sf X}-state $\varrho=X(a,b,c)$, we have the following:
\begin{enumerate}
\item[(i)]
for $i=1,2,3,4$, we have $\langle \varrho, W \rangle \ge 0$ for every {\sf X}-shaped witness $W=X(s,t,u)$
with $(s,t)\in {\mathcal D}_i$
if and only if $\sqrt{a_i b_i}C(z)\ge \mathcal L (\varrho, z)$ for
every $z \in \mathbb C^4$;
\item[(ii)]
$\langle \varrho, W \rangle \ge 0$ for every {\sf X}-shaped witness $W=X(s,t,u)$
with $(s,t)\in {\mathcal D}_5$
if and only if $\sqrt[4]{a_1b_2b_3a_4}C(z)\ge \mathcal L (\varrho, z)$ for every $z \in \mathbb C^4$;
\item[(iii)]
$\langle \varrho, W \rangle \ge 0$ for every {\sf X}-shaped witness $W=X(s,t,u)$
with $(s,t)\in {\mathcal D}_6$
if and only if $\sqrt[4]{b_1a_2a_3b_4}C(z)\ge\mathcal L (\varrho,
z)$ for every $z \in \mathbb C^4$;
\item[(iv)]
$\langle \varrho, W \rangle \ge 0$ for every {\sf X}-shaped witness $W=X(s,t,u)$
with $(s,t)\in {\mathcal D}_{\rm I}$
if and only if $\min_{1 \le i \le 4} \sqrt{a_i b_i}C(z)\ge \mathcal
L (\varrho, z)$ for every $z \in \mathbb C^4$;
\item[(v)]
$\langle \varrho, W \rangle \ge 0$ for every {\sf X}-shaped witness $W=X(s,t,u)$
with $(s,t)\in {\mathcal D}_{\rm II}$
if and only if $ \min \{ \sqrt[4]{a_1b_2b_3a_4},
\sqrt[4]{b_1a_2a_3b_4}\}C(z)\ge \mathcal L (\varrho, z)$ for every
$z \in \mathbb C^4$.
\end{enumerate}
\end{proposition}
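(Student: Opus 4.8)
The plan is to replace the witness pairing $\lan\varrho,W\ran$ by its explicit form, reduce parts (i)--(iii) to a one- or four-variable AM--GM optimization, and then deduce (iv)--(v) from Lemma \ref{convex}. First I would compute, for $\varrho=X(a,b,c)$ and $W=X(s,t,u)$, using $\varrho^\ttt=X(a,b,\bar c)$ and pairing diagonals with diagonals and anti-diagonals with anti-diagonals,
$$
\lan\varrho,W\ran=\sum_{i=1}^4\big(s_ia_i+t_ib_i\big)+2\,\re\big(u_1c_1+u_2c_2+u_3c_3+u_4c_4\big).
$$
With $z=(u_1,u_2,u_3,\bar u_4)$ one has $\re(u_4c_4)=\re(z_4\bar c_4)$, so the second summand equals $2\,\mathcal L(\varrho,z)$, and $C(z)=B(u)=\|u\|_\xx$ by (\ref{CB}). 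Since $u\mapsto z$ is an $\mathbb R$-linear bijection of $\mathbb C^4$, the functional $C$ is a norm (in particular $C(-z)=C(z)$) and $z\mapsto\mathcal L(\varrho,z)$ is $\mathbb R$-linear with $\mathcal L(\varrho,-z)=-\mathcal L(\varrho,z)$. I would then record a reduction step valid for each $E\in\{\mathcal D_1,\dots,\mathcal D_4,\mathcal D_5,\mathcal D_6,\mathcal D_{\rm I},\mathcal D_{\rm II}\}$: every $(s,t)\in E$ has $A(s,t)=1$ (by Lemma \ref{A} for the first six, by definition for the last two), so for such $(s,t)$ and any $u$ with $\|u\|_\xx\le1$ the matrix $X(s,t,u)$ is either positive semidefinite, whence $\lan\varrho,X(s,t,u)\ran={\rm tr}\big(X(s,t,u)\varrho^\ttt\big)\ge0$ because $\varrho\ge0$, or non-positive, whence an {\sf X}-shaped witness by \cite{han_kye_tri}. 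Consequently ``$\lan\varrho,W\ran\ge0$ for every {\sf X}-witness $W=X(s,t,u)$ with $(s,t)\in E$'' is equivalent to ``$\lan\varrho,X(s,t,u)\ran\ge0$ for all $(s,t)\in E$ and all $u$ with $\|u\|_\xx\le1$''.

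For part (i), after this reduction the left-hand condition reads: $s_ia_i+t_ib_i+2\mathcal L(\varrho,z)\ge0$ for all $s_it_i=1$ and all $C(z)\le1$, the two blocks of variables being independent. Replacing $z$ by $-z$ turns this into $s_ia_i+t_ib_i\ge2\mathcal L(\varrho,z)$ for all such data; taking the infimum over $s_it_i=1$ and using $\inf_{s_it_i=1}(s_ia_i+t_ib_i)=2\sqrt{a_ib_i}$ (AM--GM; this also holds when $a_i$ or $b_i$ vanishes), it becomes $\sqrt{a_ib_i}\ge\mathcal L(\varrho,z)$ for all $C(z)\le1$, i.e.\ $\sqrt{a_ib_i}\,C(z)\ge\mathcal L(\varrho,z)$ for every $z$ by homogeneity. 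Parts (ii) and (iii) run verbatim, the only difference being that for $(s,t)\in\mathcal D_5$ the diagonal part equals $s_1a_1+s_4a_4+t_2b_2+t_3b_3$, and the four-term AM--GM under $s_1s_4t_2t_3=1/16$ gives $\inf(s_1a_1+s_4a_4+t_2b_2+t_3b_3)=4\sqrt[4]{a_1a_4b_2b_3/16}=2\sqrt[4]{a_1b_2b_3a_4}$, and $2\sqrt[4]{b_1a_2a_3b_4}$ for $\mathcal D_6$; the degenerate cases are again handled by letting one variable tend to infinity.

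For (iv), the inclusions $\mathcal D_1,\dots,\mathcal D_4\subset\mathcal D_{\rm I}$ give the forward direction: the hypothesis restricts to each $\mathcal D_i$, so (i) yields $\sqrt{a_ib_i}\,C(z)\ge\mathcal L(\varrho,z)$ for all $i$, hence $\min_i\sqrt{a_ib_i}\,C(z)\ge\mathcal L(\varrho,z)$ because $C(z)\ge0$. Conversely, the min-inequality forces each individual inequality, so by (i) we have $\lan\varrho,X(s,t,u)\ran\ge0$ whenever $(s,t)\in\bigcup_i\mathcal D_i$ and $\|u\|_\xx\le1$; for a general $(s,t)\in\mathcal D_{\rm I}$, Lemma \ref{convex} writes it as $(s,t)=\sum_k\lambda_k(s^{(k)},t^{(k)})$ with $(s^{(k)},t^{(k)})\in\bigcup_i\mathcal D_i$, so $X(s,t,u)=\sum_k\lambda_kX(s^{(k)},t^{(k)},u)$ and $\lan\varrho,X(s,t,u)\ran=\sum_k\lambda_k\lan\varrho,X(s^{(k)},t^{(k)},u)\ran\ge0$. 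Part (v) is the same argument with $\mathcal D_5,\mathcal D_6\subset\mathcal D_{\rm II}$, parts (ii)--(iii), and the second half of Lemma \ref{convex}.

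The only genuinely delicate point is the opening computation: keeping the conjugations straight so that the anti-diagonal contribution is exactly $2\mathcal L(\varrho,z)$ with $C(z)=\|u\|_\xx$, and phrasing the ``positive semidefinite versus {\sf X}-witness'' alternative cleanly so that one may pass freely between {\sf X}-witnesses with $(s,t)\in E$ and arbitrary $u$ with $\|u\|_\xx\le1$. Everything after that is the $z\mapsto-z$ symmetry, homogeneity of $C$, AM--GM, and Lemma \ref{convex}.
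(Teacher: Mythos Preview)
Your proof is correct and takes essentially the same approach as the paper: compute the pairing explicitly, relate the anti-diagonal contribution to $\mathcal L(\varrho,z)$ via the correspondence $u\leftrightarrow z$ with $C(z)=B(u)$, reduce the diagonal infimum to AM--GM, and obtain (iv)--(v) from (i)--(iii) by Lemma~\ref{convex}. Your explicit treatment of the PSD/witness dichotomy (working with all $u$ satisfying $\|u\|_\xx\le 1$) is just a minor variant of the paper's normalization $B(u)=1$.
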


\begin{proof}
We may assume that $A(s,t)=B(u)=1$ for an entanglement witness
$W=X(s,t,u)$.
For a given $z\in\mathbb C^4$, we will put $u=\dfrac
1{C(z)}(-z_1,-z_2,-z_3,-\bar z_4)$. Then we have $B(u)=1$ by (\ref{CB}), and
$$
\langle \varrho, W \rangle
= \sum_{j=1}^4 a_j s_j +\sum_{j=1}^4 b_j t_j - 2 {\mathcal L (\varrho, z) \over C(z)}.
$$
Therefore, we have $\langle \varrho, W \rangle \ge 0$ if and only if
$(\sum_{j=1}^4 a_j s_j +\sum_{j=1}^4 b_j t_j) C(z) \ge 2 {\mathcal L
(\varrho, z)}$. By Lemma \ref{A}, the statements (i), (ii) and (iii)
follow from the identities
$$
\begin{aligned}
\inf &\left\{\sum_{j=1}^4 a_j s_j +\sum_{j=1}^4 b_j t_j : (s,t) \in \mathcal D_i \right\}\\
=& \phantom{X}
\begin{cases}
\inf \{a_i s_i + b_i t_i : s_i t_i=1\} = 2\sqrt{a_ib_i},&\qquad 1 \le i \le 4 \\
\inf \{a_1s_1+b_2t_2+b_3t_3+a_4s_4 : s_1t_2t_3s_4={1 \over 16}\} = 2\sqrt[4]{a_1b_2b_3a_4},&\qquad i=5 \\
\inf \{b_1t_1+a_2s_2+a_3s_3+b_4t_4 : t_1s_2s_3t_4={1 \over 16}\} =
2\sqrt[4]{b_1a_2a_3b_4},&\qquad i=6.
\end{cases}
\end{aligned}
$$
The statements (iv) and (v) follows from (i), (ii), (iii) and Lemma \ref{convex}.
\end{proof}

For $\alpha,\beta,\gamma>0$, we define $D_{\alpha,\beta,\gamma}$ by
$$
D_{\alpha,\beta,\gamma} =
\diag (\alpha^{1\slash2},\alpha^{-1\slash2})\otimes
\diag (\beta^{1\slash2},\beta^{-1\slash2})\otimes
\diag (\gamma^{1\slash2},\gamma^{-1\slash2})
\in M_2\otimes M_2\otimes M_2.
$$

\begin{lemma}\label{su}
For every three qubit {\sf X}-shaped self-adjoint matrix $X(a,b,c)$ and $\a,\b,\g>0$, we have the following:
\begin{enumerate}
\item[(i)] $D_{\alpha,\beta,\gamma} X(a,b,c) D_{\alpha,\beta,\gamma}^*$ is again {\sf X}-shaped
of the form $X(x,y,c)$. The vectors $x,y$ satisfy the relations
$x_iy_i=a_ib_i$, $x_1y_2y_3x_4=a_1b_2b_3a_4$ and
$y_1x_2x_3y_4=b_1a_2a_3b_4$;
\item[(ii)]
if $a_1b_2b_3a_4=b_1a_2a_3b_4$, then there exist unique
$\alpha,\beta,\gamma>0$ so that $x_i=y_i=\sqrt{a_ib_i}$ for each
$i=1,2,3,4$;
\item[(iii)]
if $a_1b_1=a_2b_2=a_3b_3=a_4b_4$, then there exist unique
$\alpha,\beta,\gamma>0$ so that
$x_1=y_2=y_3=x_4=\sqrt[4]{a_1b_2b_3a_4}$ and
$y_1=x_2=x_3=y_4=\sqrt[4]{b_1a_2a_3b_4}$.
\end{enumerate}
\end{lemma}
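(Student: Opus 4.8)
The plan is to verify (i) by a direct computation with the diagonal congruence, and to obtain (ii) and (iii) by solving overdetermined linear systems in $\log\alpha,\log\beta,\log\gamma$ whose compatibility conditions turn out to be precisely the stated hypotheses.

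First, for (i): since $D:=D_{\alpha,\beta,\gamma}$ is a positive diagonal matrix, conjugation by it multiplies the $(i,j)$ entry of any matrix by $d_id_j$, where $d_1,\dots,d_8$ denote the diagonal entries of $D$ listed in the lexicographic order. The tensor structure gives $d_i^2\in\{\alpha^{\pm1}\beta^{\pm1}\gamma^{\pm1}\}$, each of the eight sign patterns occurring once, and the one fact to notice is that position $9-i$ carries the sign pattern opposite to that of position $i$, so that $d_id_{9-i}=1$ for $i=1,2,3,4$. Hence every anti-diagonal entry is fixed and $DX(a,b,c)D^*=X(x,y,c)$ with $x_i=d_i^2a_i$ and $y_i=d_{9-i}^2b_i$. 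The identity $x_iy_i=a_ib_i$ is then immediate from $d_i^2d_{9-i}^2=(d_id_{9-i})^2=1$, while $x_1y_2y_3x_4=a_1b_2b_3a_4$ and $y_1x_2x_3y_4=b_1a_2a_3b_4$ follow because in each of these four-fold products the exponents of $\alpha$, of $\beta$, and of $\gamma$ separately cancel --- the same cancellation that fixed the anti-diagonal.

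For (ii) and (iii) it suffices to produce positive $\alpha,\beta,\gamma$ forcing the diagonal into the asserted shape; the exact values are then read off from the invariants in (i) (in (ii), $x_i=y_i$ with $x_iy_i=a_ib_i$ gives $x_i=y_i=\sqrt{a_ib_i}$; in (iii), $x_1=y_2=y_3=x_4=:P$ and $y_1=x_2=x_3=y_4=:Q$ together with $x_1y_2y_3x_4=a_1b_2b_3a_4$ and $y_1x_2x_3y_4=b_1a_2a_3b_4$ give $P=\sqrt[4]{a_1b_2b_3a_4}$ and $Q=\sqrt[4]{b_1a_2a_3b_4}$). Assuming all $a_i,b_i>0$, pass to logarithms and set $u=\log\alpha$, $v=\log\beta$, $w=\log\gamma$. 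In (ii) the four conditions $x_i=y_i$ read $2u\pm2v\pm2w=\log(b_i/a_i)$ with $(v,w)$-signs $(+,+),(+,-),(-,+),(-,-)$; the unique compatibility relation of this $4\times3$ system is $\log(b_1/a_1)+\log(b_4/a_4)=\log(b_2/a_2)+\log(b_3/a_3)$, that is, exactly $a_1b_2b_3a_4=b_1a_2a_3b_4$, and granting it, any three of the equations have coefficient matrix a nonzero multiple of $\begin{pmatrix}1&1&1\\1&1&-1\\1&-1&1\end{pmatrix}$, which is invertible, so $(u,v,w)$ and hence $(\alpha,\beta,\gamma)$ are unique. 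Case (iii) is entirely parallel: writing the six equalities among the eight diagonal entries in logarithmic form produces an overdetermined system whose three compatibility conditions collapse to $a_1b_1=a_2b_2=a_3b_3=a_4b_4$, after which the remaining equations pin down $(u,v,w)$ uniquely through an invertible sign matrix.

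The routine parts are the exponent-bookkeeping in (i) and the explicit solution of the small sign systems in (ii) and (iii). The step I expect to need the most care is the degenerate case where some $a_i$ or $b_i$ vanishes: logarithms are then unavailable, and, taken literally, (ii) and (iii) may even fail (e.g. $x_1=y_1$ forces $a_1b_1=x_1y_1=0$ with $x_1=\alpha\beta\gamma a_1$, $y_1=(\alpha\beta\gamma)^{-1}b_1$, which is impossible unless $a_1=b_1=0$). I expect to deal with this either by restricting to matrices with strictly positive diagonal --- the only case in which the lemma is actually used --- or by perturbing all diagonal entries by a common small positive constant, conjugating, and passing to the limit; getting this reduction exactly right is the one genuinely delicate point.
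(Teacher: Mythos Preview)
Your argument is correct and follows essentially the same approach as the paper: a direct computation for (i), and solving for $\alpha,\beta,\gamma$ in (ii)--(iii). The only stylistic difference is that the paper writes down closed-form values (e.g.\ $\alpha^4=b_1b_4/a_1a_4$, $\beta^4=b_1a_3/a_1b_3$, $\gamma^4=b_1a_2/a_1b_2$ in (ii)) and verifies them by substitution, whereas you linearize via logarithms and invoke invertibility of the sign matrices; these are the same computation in different clothing. Your worry about vanishing diagonal entries is legitimate but also applies verbatim to the paper's explicit formulas; the paper simply does not address it, relying on the fact that the lemma is only invoked (in Lemma~\ref{average} via the case reduction in the proof of Theorem~\ref{diagonal}) for states with strictly positive diagonal, so your suggested restriction is exactly the intended scope.
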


\begin{proof}
We check directly that $D_{\alpha,\beta,\gamma} \varrho
D_{\alpha,\beta,\gamma}^*=X(x,y,c)$ with
\begin{equation}\label{hhjkljklhkjg}
\begin{aligned}
x&=(\alpha\beta\gamma a_1,\ \alpha\beta\gamma^{-1} a_2,\ \alpha\beta^{-1}\gamma a_3,\ \alpha\beta^{-1}\gamma^{-1}a_4),\\
y&=(\alpha^{-1}\beta^{-1}\gamma^{-1}b_1,\
\alpha^{-1}\beta^{-1}\gamma b_2,\
    \alpha^{-1}\beta\gamma^{-1} b_3,\ \alpha^{-1}\beta\gamma b_4),
\end{aligned}
\end{equation}
from which the required relations follow. For the statement (ii), we
put
$$
\alpha^4:={b_1b_4 \over a_1a_4}={b_2b_3 \over a_2a_3}, \qquad
\beta^4:={b_1a_3 \over a_1b_3}={b_2a_4 \over a_2b_4}, \qquad
\gamma^4:={b_1a_2 \over a_1b_2}={b_3a_4 \over a_3b_4}.
$$
Then we have
$$
x_1^4=\alpha^4\beta^4\gamma^4a_1^4={b_1b_4 \over
a_1a_4}\cdot{b_2a_4 \over a_2 b_4}\cdot{b_1a_2 \over a_1
b_2}\cdot a_1^4 = a_1^2b_1^2,
$$
and $x_i^4=a_i^2b_i^2$ for $i=2,3,4$, similarly.
From the relation $x_iy_i=a_ib_i$ in (i), we also see that
$y_i^4=a_i^2b_i^2$. For the uniqueness, we multiply
two relations $x_1=\alpha\beta\gamma a_1$ and $x_4=\alpha\beta^{-1}\gamma^{-1}a_4$ from
(\ref{hhjkljklhkjg}).
For (iii), we put
$$
\alpha^4:={b_2b_3 \over a_1a_4}, \qquad \beta^4:={b_2a_4 \over
a_1b_3}, \qquad \gamma^4:={b_3a_4 \over a_1b_2}.
$$
Then we have $x_1^4=y_2^4=y_3^4=x_4^4=a_1b_2b_3a_4$ by a direct computation.
From the relation $x_i^4y_i^4=a_i^4b_i^4=a_1b_1a_2b_2a_3b_3a_4b_4$,
we also have $y_1=x_2=x_3=y_4=\sqrt[4]{b_1a_2a_3b_4}$.
Uniqueness also follows in a similar way as in (ii).
\end{proof}

The following lemma is the main part of the proof, which shows that
the inequality {\rm (\ref{kye-main-ineq-sep})} implies the
separability when the diagonal entries of $\varrho=X(a,b,c)$ satisfy
some identities.

\begin{lemma}\label{average}
Suppose that $a_1b_2b_3a_4=b_1a_2a_3b_4$ or
$a_1b_1=a_2b_2=a_3b_3=a_4b_4$. Then a three qubit {\sf X}-state
$\varrho=X(a,b,c)$ is separable if and only if the inequality {\rm
(\ref{kye-main-ineq-sep})} holds for every $z\in\mathbb C^4$.
\end{lemma}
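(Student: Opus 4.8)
I will prove the nontrivial implication "if", since "only if" is exactly G\"uhne's criterion recalled above. Throughout I use the characterisation stated just before Proposition~\ref{witness-inequality}: an {\sf X}-state $\varrho$ is separable if and only if $\langle\varrho,W\rangle\ge0$ for every {\sf X}-shaped entanglement witness $W=X(s,t,u)$, and in checking this we may normalise $A(s,t)=B(u)=1$. The plan is to first use Lemma~\ref{su} to bring $\varrho$ to a normal form whose diagonal has a reflection symmetry, and then, for a general witness $W$, to symmetrise the diagonal of $W$ accordingly and invoke Proposition~\ref{witness-inequality}(iv) or (v).

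For the reduction: the congruence $\varrho\mapsto D_{\alpha,\beta,\gamma}\varrho D_{\alpha,\beta,\gamma}^*$ is a product of three invertible single-qubit congruences, so it preserves separability; by Lemma~\ref{su}(i) it fixes $c$ and fixes the three quantities $a_ib_i,\ a_1b_2b_3a_4,\ b_1a_2a_3b_4$ that determine $\Delta_\varrho$, hence it preserves both sides of (\ref{kye-main-ineq-sep}). So under the first hypothesis $a_1b_2b_3a_4=b_1a_2a_3b_4$ we may assume (Lemma~\ref{su}(ii)) that $\varrho=X(m,m,c)$ with $m_i=\sqrt{a_ib_i}$, and under the second hypothesis we may assume (Lemma~\ref{su}(iii)) that $\varrho=X(x,y,c)$ with $x=(p,q,q,p)$, $y=(q,p,p,q)$, $p=\sqrt[4]{a_1b_2b_3a_4}$, $q=\sqrt[4]{b_1a_2a_3b_4}$.

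In the first case $\Delta_\varrho=\min_i m_i=\min_i\sqrt{m_im_i}$ (as $\sqrt[4]{m_1m_2m_3m_4}\ge\min_i m_i$), so by Proposition~\ref{witness-inequality}(iv) the hypothesis (\ref{kye-main-ineq-sep}) is \emph{equivalent} to $\langle\varrho,W\rangle\ge0$ for all {\sf X}-shaped entanglement witnesses $W=X(s,t,u)$ with $(s,t)\in\mathcal D_{\rm I}$. Given an arbitrary {\sf X}-witness $W=X(s,t,u)$ with $A(s,t)=B(u)=1$, I set $W_{\rm sym}=X\big(\tfrac{s+t}2,\tfrac{s+t}2,u\big)$. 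Then $\langle\varrho,W_{\rm sym}\rangle=\langle\varrho,W\rangle$, because $a=b=m$ makes the diagonal part of $\langle\varrho,\cdot\,\rangle$ depend only on $s+t$ while the anti-diagonal part uses only $c$ and $u$. Moreover $A$ is concave: for each $r$ the bracket in the definition of $A$ is a sum of geometric means of pairs of linear forms in $(s,t)$, hence concave, and the infimum of concave functions is concave; combined with $A(s,t)=A(t,s)$ (substitute $r\mapsto r^{-1}$) this gives $A\big(\tfrac{s+t}2,\tfrac{s+t}2\big)\ge A(s,t)=1$. Writing $A'$ for this value and decomposing $W_{\rm sym}=X\big(\tfrac{s+t}{2A'},\tfrac{s+t}{2A'},u\big)+X\big((1-\tfrac1{A'})\tfrac{s+t}2,(1-\tfrac1{A'})\tfrac{s+t}2,0\big)$, the first summand has $A=B=1$ and satisfies the identity defining $\mathcal D_{\rm I}$ (namely $s_1t_2t_3s_4=t_1s_2s_3t_4$, which is trivial when $s=t$), so it pairs non-negatively with $\varrho$ — either as an entanglement witness covered by the displayed equivalence, or because it is positive semidefinite; the second summand is a non-negative diagonal matrix, which pairs non-negatively with the non-negative diagonal of $\varrho$. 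Hence $\langle\varrho,W\rangle\ge0$ for every {\sf X}-witness $W$, i.e.\ $\varrho$ is separable.

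The second case $\varrho=X(x,y,c)$ runs in parallel, with $\Delta_\varrho=\min\{p,q\}$ (since $\sqrt{pq}\ge\min\{p,q\}$) and Proposition~\ref{witness-inequality}(v) in place of (iv), so that (\ref{kye-main-ineq-sep}) becomes equivalent to $\langle\varrho,W\rangle\ge0$ for witnesses with $(s,t)\in\mathcal D_{\rm II}$. For a general witness $W=X(s,t,u)$ one now takes $W_{\rm sym}=X(s',t',u)$ with $s'_1=s'_4=t'_2=t'_3=\tfrac14(s_1+s_4+t_2+t_3)$ and $s'_2=s'_3=t'_1=t'_4=\tfrac14(s_2+s_3+t_1+t_4)$; this $(s',t')$ equals the average of $(s,t)$ over the four ``even'' products of single-qubit bit-flips, each of which leaves $A$ invariant (a routine check from its defining formula), so $A(s',t')\ge A(s,t)$ by concavity, and $s'_1t'_1=s'_2t'_2=s'_3t'_3=s'_4t'_4$ holds by construction. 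Since the diagonal of $\varrho$ is constant on the two orbits $\{1,4,6,7\}$ and $\{2,3,5,8\}$ of that group of bit-flips, $\langle\varrho,W_{\rm sym}\rangle=\langle\varrho,W\rangle$, and the decomposition argument of the previous paragraph applies verbatim. The step I expect to be the main obstacle is precisely this symmetrisation: one has to verify that replacing $W$ by the diagonal-averaged witness $W_{\rm sym}$ both leaves $\langle\varrho,\cdot\,\rangle$ unchanged (which is the whole reason for reducing $\varrho$ to a diagonally-symmetric normal form in the first step) and does not decrease $A(s,t)$ (which rests on the concavity of $A$ together with its invariance under single-qubit bit-flip permutations). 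The surrounding bookkeeping — the normalisation $A'\ge1$ in the decomposition and the ``positive semidefinite or genuine witness'' alternative for the leading summand — is then routine.
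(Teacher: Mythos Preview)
Your argument is correct and follows the same overall architecture as the paper's proof: reduce $\varrho$ to a diagonally symmetric normal form via the local congruence $D_{\alpha,\beta,\gamma}$, then symmetrise the diagonal part of an arbitrary {\sf X}-witness $W$ so that it falls under Proposition~\ref{witness-inequality}(iv) or (v). The difference lies in how the symmetrisation of $W$ is justified.

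The paper realises the symmetrised witness as an explicit average of block-positivity-preserving operations. In Case~I this is $W^\sym=\tfrac12\bigl(W+(UWU^*)^\ttt\bigr)$ with the local flip $U=S\otimes S\otimes S$; in Case~II it is $W^\uni=\tfrac14\sum_{i=0}^3 (U_iWU_i^*)^{\tau_i}$, where the partial transposes $\tau_i$ are needed precisely to restore the anti-diagonal $u$ after each local flip permutes it. Since local unitary conjugation and (partial) transposition preserve block-positivity, the average is again block-positive, hence a witness or PSD.

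You instead work directly with the functional $A$: concavity of $A$ (as an infimum of geometric means of linear forms) together with its invariance under $(s,t)\mapsto(t,s)$ in Case~I, and under the four even bit-flip permutations of the diagonal in Case~II, gives $A(s',t')\ge A(s,t)$ straight away, without ever invoking partial transposes. The explicit decomposition of $W_{\rm sym}$ into a normalised piece with $A=B=1$ plus a non-negative diagonal then handles the ``witness-or-PSD'' dichotomy cleanly --- a point the paper leaves implicit when it applies Proposition~\ref{witness-inequality}(iv) to $X(x,y,u)$ without checking $A(x,y)=1$. Your route is thus a genuine simplification, particularly in Case~II.
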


\begin{proof}
It suffices to prove the sufficiency.
Recall that $\vr$ is separable if and only if $\lan \varrho, W \ran
\ge 0$ for every {\sf X}-shaped entanglement witness $W$. So, we
assume the inequality {\rm (\ref{kye-main-ineq-sep})} for every
$z\in\mathbb C^4$ and will show that $\lan \varrho, W \ran \ge 0$
for every {\sf X}-shaped entanglement witness $W$. We prove the
assertion separately.

[Case I: $a_1b_2b_3a_4=b_1a_2a_3b_4$]: For a three qubit {\sf X}-shaped
self-adjoint matrix $X=X(s,t,u)$, we define
$$
X^\sym=X\left({s+t \over 2}, {s+t \over 2}, u\right)={X + (UXU^*)^\ttt
\over 2},
$$
with the local unitary $U=S\ot S\ot S$ and $S=\left(\begin{matrix}0&1\\1&0\end{matrix}\right)$.
Note that $X^\sym$ has {\sl symmetric} diagonal entries and shares the
anti-diagonal part with $X$. Furthermore, we have $X^\sym=X$ if and only if $s=t$. We apply Lemma \ref{su} (ii) to take $\a,\b,\g>0$ so
that $\varrho_\syms:=D_{\alpha,\beta,\gamma} \varrho
D_{\alpha,\beta,\gamma}^*$ has symmetric diagonals, that is,
satisfies $(\varrho_\syms)^\sym=\varrho_\syms$. Let
$W$ be an arbitrary three qubit {\sf X}-shaped entanglement witness. With notation
$D:=D_{\alpha,\beta,\gamma}^{-1}=D_{\alpha^{-1},\beta^{-1},\gamma^{-1}}$,
we have
$$
\begin{aligned}
\langle \varrho, W \rangle
= \langle D \varrho_\syms D^*, W \rangle = \langle \varrho_\syms, DWD^* \rangle
&= \langle (\varrho_\syms)^\sym, DWD^* \rangle \\
&= \left\langle {\varrho_\syms + (U\varrho_\syms U^*)^\ttt \over 2},
DWD^* \right\rangle.
\end{aligned}
$$
It follows that
$$
\begin{aligned}
\langle \varrho, W \rangle
=& {1 \over 2}\langle \varrho_\syms, DWD^* \rangle +{1 \over 2}\langle U\varrho_\syms U^*, (DWD^*)^\ttt \rangle \\
=& {1 \over 2}\langle \varrho_\syms, DWD^* \rangle +{1 \over 2}\langle \varrho_\syms, (U(DWD^*)U^*)^\ttt \rangle \\
=& \langle \varrho_\syms, (DWD^*)^\sym \rangle = \langle \varrho,
D^{-1}(DWD^*)^\sym (D^{-1})^* \rangle.
\end{aligned}
$$
Write $D^{-1}(DWD^*)^\sym (D^{-1})^*=X(x,y,u)$ by Lemma \ref{su} (i).
Since $(DWD^*)^\sym$ has symmetric diagonal entries,
$x$ and $y$ satisfy the relation $x_1y_2y_3x_4=y_1x_2x_3y_4$ by
Lemma \ref{su} (i) again. Furthermore, our assumption implies
$\Delta_\varrho=\min_{1 \le i \le 4} \sqrt{a_ib_i}$ by the relation
$(a_1b_2b_3a_4)(b_1a_2a_3b_4)=\Pi_{i=1}^4 a_ib_i$. Therefore, the
inequalities in the right side of Proposition
\ref{witness-inequality} (iv) hold, and we have finally $\langle
\varrho, W \rangle=\langle \varrho, X(x,y,u) \rangle\ge0$ by
Proposition \ref{witness-inequality} (iv). Therefore, we conclude
that $\varrho$ is separable.

[Case II: $a_1b_1=a_2b_2=a_3b_3=a_4b_4$]: In this case, our
assumption implies $\Delta_\varrho=\min \{ \sqrt[4]{a_1b_2b_3a_4},
\sqrt[4]{b_1a_2a_3b_4}\}$ similarly, and so we have the inequalities
in the right side of Proposition \ref{witness-inequality} (v). We consider the symmetric unitaries
$$
U_0=I\ot I\ot I,\quad
U_1=I\ot S\ot S,\quad
U_2=S\ot I\ot S,\quad
U_3=S\ot S\ot I
$$
with the identity $I$ and $S$ as above, and the partial transposes
$$
\tau_0=\tau_{\emptyset},\ \tau_1=\tau_{BC},\ \tau_2=\tau_{CA},\
\tau_3=\tau_{AB}, \quad \tau_0'=\tau_{ABC},\ \tau_1'=\tau_A,\
\tau_2'=\tau_B,\ \tau_3'=\tau_C
$$
on $M_A \otimes M_B \otimes M_C$ with $M_A = M_B = M_C = M_2$.
For a given three qubit {\sf X}-shaped self-adjoint matrix $X=X(s,t,u)$, we
define the matrix $X^\uni$  by
$$
X^\uni := {1 \over 4} \sum_{i=0}^3 (U_i X U_i^*)^{\tau_i} = X(s',t',u),
$$
with the two families of {\sl uniform} diagonal entries
$$
s_1'=t_2'=t_3'=s_4'= {1 \over 4}(s_1+t_2+t_3+s_4), \quad
t_1'=s_2'=s_3'=t_4'= {1 \over 4}(t_1+s_2+s_3+t_4).
$$
Note that $X^\uni$ shares the anti-diagonal part with $X$. We also have
$X^\uni=X$ if and only if $s_1=t_2=t_3=s_4$ and $t_1=s_2=s_3=t_4$.

We apply Lemma \ref{su} (iii) to take $\alpha,\beta,\gamma$ so that
$\varrho_\unis:=D_{\alpha,\beta,\gamma} \varrho
D_{\alpha,\beta,\gamma}^*$ has two families of uniform diagonal entries, that is,
satisfies $(\varrho_\unis)^\uni=\varrho_\unis$. Then we have
$$
\begin{aligned}
\langle \varrho, W \rangle = \langle D \varrho_\unis D^*, W \rangle &=
\langle \varrho_\unis, DWD^* \rangle \\
&= \langle (\varrho_\unis)^\uni, DWD^* \rangle = \left\langle {1 \over 4}
\sum_{i=0}^3 (U_i \varrho_\unis U_i^*)^{\tau_i}, DWD^* \right\rangle.
\end{aligned}
$$
One can verify that the relations
$(U_i P U_i^*)^{\tau_i'}=U_i P^{\tau_i'} U_i^*$ and
$\tr(P^{\tau_i'}Q)=\tr(PQ^{\tau_i'})$
hold for product matrices $P$ and $Q$, hence for any tensors $P$ and
$Q$. Therefore, we have
$$
\begin{aligned}
4\langle \varrho, W \rangle =& \sum_{i=0}^3 \tr \left((U_i \varrho_\unis
U_i^*)^{\tau_i} (DWD^*)^\ttt \right)
=  \sum_{i=0}^3 \tr \left((U_i \varrho_\unis U_i^*)^{\tau_i'} (DWD^*) \right) \\
=&  \sum_{i=0}^3 \tr \left((U_i \varrho_\unis U_i^*) (DWD^*)^{\tau'_i}
\right)
= \sum_{i=0}^3 \tr \left(\varrho_\unis (U_i(DWD^*)^{\tau'_i}U_i^*) \right) \\
=& \sum_{i=0}^3 \tr \left(\varrho_\unis (U_i(DWD^*)U_i^*)^{\tau_i'}
\right) =  \sum_{i=0}^3  \langle \varrho_\unis,
(U_i(DWD^*)U_i^*)^{\tau_i} \rangle,
\end{aligned}
$$
which implies
$\langle \varrho, W \rangle = \langle \varrho_\unis, (DWD^*)^\uni \rangle =
\langle \varrho, D^{-1}(DWD^*)^\uni (D^{-1})^* \rangle$.
Now, we write $D^{-1}(DWD^*)^\uni (D^{-1})^*=X(x,y,u)$. Since
$(DWD^*)^\uni$ has two families of uniform diagonal entries, we have
$x_1y_1=x_2y_2=x_3y_3=x_4y_4$ by Lemma \ref{su} (i). Therefore, we
have $\langle \varrho, W \rangle = \langle \varrho, D^{-1}(DWD^*)^\uni
(D^{-1})^* \rangle\ge 0$ by Proposition \ref{witness-inequality}
(v). This completes the proof.
\end{proof}

Now, we are ready to prove Theorem \ref{diagonal}. Suppose that
$\varrho$ is a three qubit {\sf X}-state $X(a,b,c)$ satisfying the inequality
{\rm (\ref{kye-main-ineq-sep})} for every $z \in \mathbb C^4$.
Applying Proposition \ref{witness-inequality} to the witness
$W=X\left({\sqrt{b_1 \over a_1} e_1, \sqrt{a_1 \over b_1}
e_1,-{|c_j| \over c_j} e_j}\right)$, we have $|c_j| \le\sqrt{a_1b_1}$
for $j=1,2,3,4$. If $b_2b_3a_4=0$ then the inequality {\rm
(\ref{kye-main-ineq-sep})} implies that $\vr$ is diagonal, and so
$\vr$ is separable. Let $b_2b_3a_4\ne0$. We may assume without loss
of generality that $\min_{1 \le i \le 4}
\sqrt{a_ib_i}=\sqrt{a_1b_1}$ and $\sqrt[4]{a_1b_2b_3a_4} \le
\sqrt[4]{b_1a_2a_3b_4}$ by the symmetry. We consider two cases
separately.

We first consider the case
$\Delta_\varrho=\min\{\sqrt[4]{a_1b_2b_3a_4},\sqrt[4]{b_1a_2a_3b_4}\}$.
In this case, we have $\Delta_\varrho = \sqrt[4]{a_1b_2b_3a_4}\le
\sqrt{a_1b_1}$. Put
$$
\lambda_2:={a_2b_2-a_1b_1 \over b_2}, \qquad
\lambda_3:={a_3b_3-a_1b_1 \over b_3}, \qquad \mu_4:={a_4b_4-a_1b_1
\over a_4},
$$
which are nonnegative. Then we have $(a_2-\lambda_2)b_2=a_1b_1$,
$(a_3-\lambda_3)b_3=a_1b_1$ and $a_4(b_4-\mu_4)=a_1b_1$, and so
$$
b_1(a_2-\lambda_2)(a_3-\lambda_3)(b_4-\mu_4)
=b_1 \cdot{a_1b_1 \over b_2}\cdot{a_1b_1 \over b_3}\cdot{a_1b_1 \over a_4}
={(a_1b_1)^4 \over a_1b_2b_3a_4}
\ge a_1b_2b_3a_4.
$$
We put $a'=(a_1,a_2-\lambda_2,a_3-\lambda_3,a_4)$ and
$b'=(b_1,b_2,b_3,b_4-\mu_4)$. Then $\varrho'=X(a',b',c)$ is an {\sf
X}-state satisfying $a_i'b_i'=a_1b_1$  for $i=1,2,3,4$ and
$\Delta_{\varrho'} = \sqrt[4]{a_1b_2b_3a_4}= \Delta_\varrho$.
Therefore, we have $\mathcal L (\varrho', z) = \mathcal L (\varrho,
z) \le C(z) \Delta_\varrho = C(z) \Delta_{\varrho'}$ for every $z
\in \mathbb C^4$. By the relation $a_i'b_i'=a_1b_1$ and Lemma
\ref{average}, we see that $\varrho'$ is separable. Therefore, we
can conclude that $\varrho=\varrho'+{\rm diag}(0,\lambda_2,\lambda_3,0,\mu_4,0,0,0)$
is separable.

Next, we consider the case $\Delta_\varrho=\min_{1 \le i \le 4}
\sqrt{a_ib_i}$. Define the continuous functions $f_2,f_3,f_4,f :
[0,1] \to \mathbb R$ by
$$
\begin{aligned}
f_2(t)&=(1-t)a_2 + t{a_1b_1 \over b_2}, \qquad &f_3(t)&=(1-t)a_3 + t{a_1b_1 \over b_3}\\
f_4(t)&=(1-t)b_4 + t{a_1b_1 \over a_4}, \qquad &f(t)&=b_1 f_2(t)
f_3(t) f_4(t).
\end{aligned}
$$
The assumption $\Delta_\varrho=\sqrt{a_1b_1}$ implies
$$
\frac{a_1b_1}{b_2}\le f_2(t) \le a_2, \qquad
\frac{a_1b_1}{b_3}\le f_3(t) \le a_3, \qquad
\frac{a_1b_1}{a_4}\le f_4(t) \le b_4
$$
and
$f(0)=b_1a_2a_3b_4 \ge a_1b_2b_3a_4 \ge {(a_1b_1)^4 \over
a_1b_2b_3a_4} = f(1)$.
By the intermediate value theorem, there exists $t_0\in[0,1]$ such
that $f(t_0)=a_1b_2b_3a_4$. We put
$a'=(a_1,f_2(t_0),f_3(t_0),a_4)$ and $b'=(b_1,b_2,b_3,f_4(t_0))$.
Then, $\varrho'=X(a',b',c)$ is an {\sf X}-state satisfying
$a_1'b_2'b_3'a_4'=b_1'a_2'a_3'b_4'$ and $\Delta_{\varrho'} =
\sqrt{a_1b_1}= \Delta_\varrho$. Therefore, we have
$\mathcal L (\varrho', z) = \mathcal L (\varrho, z) \le C(z)
\Delta_\varrho = C(z) \Delta_{\varrho'}$
for every $z \in \mathbb C^4$. By the identity $a_1'b_2'b_3'a_4'=b_1'a_2'a_3'b_4'$ and Lemma
\ref{average}, we  see that $\varrho'$ is separable. By the relation
$$
\varrho=\varrho'+{\rm
diag}(0,a_2-f_2(t_0),a_3-f_3(t_0),0,b_4-f_4(t_0),0,0,0),
$$
we conclude that $\varrho$ is separable. This
completes the proof of Theorem \ref{diagonal}.


\section{Norms arising from separability problem}\label{kye-section-norm}

In this section, we introduce the norm $\|\ \|_\xx$ 
and its dual norm $\|\ \|_\xx^\prime$ on the four-dimensional space $\mathbb C^4$,
and discuss how these norms are related to
the separability problem. We also investigate their elementary properties.
Theorem \ref{diagonal} shows that a three qubit {\sf X}-state
$\varrho=X(a,b,c)$ is separable if and only if the inequality
\bea\label{kye_main}
\Delta_\varrho
\ge \max_{z\in\mathbb C^4} \frac {\re (c_1z_1 + c_2z_2+ c_3z_3 +
\bar c_4 z_4)} {\max_\sigma \left(|z_1 e^{{\rm i}\sigma}+z_4| + |z_2
e^{{\rm i}\sigma} +\bar z_3|\right)  } \eea holds. Since $\re (\bar
c_4 z_4)=\re (c_4\bar z_4)$, we see that the right hand side is
equal to
$$
\max_{z\in\mathbb C^4} \frac {\re (c_1z_1 + c_2z_2+ c_3z_3 +  c_4
\bar z_4)} {\max_\sigma \left(|z_1 e^{{\rm i}\sigma}+z_4| + |z_2
e^{{\rm i}\sigma} +\bar z_3|\right)  } = \max_{z\in\mathbb C^4}
\frac {\re (c_1z_1 + c_2z_2+ c_3z_3 +  c_4 z_4)} {\max_\sigma
\left(|z_1 e^{{\rm i}\sigma}+\bar z_4| + |z_2 e^{{\rm i}\sigma}
+\bar z_3|\right)  }.
$$
Therefore, it is natural to consider $\|z\|_\xx$, as it was defined
by (\ref{kye_norm-def}).
We write $z_i= s_ie^{{\rm i}{\sigma_i}}$
with $s_i\ge0$. From the relation
\begin{equation}\label{eq:zx}
\begin{aligned}
\| z\|_\xx &=\max_\sigma \left(|s_1 e^{{\rm
i}(\sigma_1+\sigma_4+\sigma)}+s_4|
   + |s_2 e^{{\rm i}(\sigma_2+\sigma_3+\sigma)} +s_3|\right)\\
&=\max_\sigma \left(|s_1 e^{{\rm i}(\phi_z+\sigma)}+s_4|
   + |s_2 e^{{\rm i}\sigma} +s_3|\right)\\
&=\max_\sigma \left(|s_1 e^{{\rm i}(-\phi_z-\sigma)}+s_4|
   + |s_2 e^{{\rm i}(-\sigma)} +s_3|\right)),
\end{aligned}
\end{equation}
we have
$$
\| z\|_\xx=\| (s_1 e^{{\rm i}{\phi_z}}, s_2,s_3,s_4)\|_\xx=\| (s_1
e^{-{\rm i}{\phi_z}}, s_2,s_3,s_4)\|_\xx.
$$
Therefore, we have the following:
\begin{equation}\label{kye-norm-phase=basic}
|z_i|=|w_i|, \ |\phi_z|=|\phi_w|\ \Longrightarrow\
\|z\|_\xx=\|w\|_\xx,
\end{equation}
for $z,w\in\mathbb C^4$. Because $\phi_z=\phi_{\alpha z}$ for a
complex scalar $\alpha$, we have $\|\alpha
z\|_\xx=|\alpha|\|z\|_\xx$. This equality can be seen also directly from the definition.
The inequality $\|z+w\|_\xx\le
\|z\|_\xx+\|w\|_\xx$ is clear, and so we see that $\|\ \|_\xx$ is
really a norm on the vector space $\mathbb C^4$.
We list up several
elementary properties of the norm $\|\ \|_\xx$ without proof. We use
the standard notations for norms; $\|z\|_\infty=\max_{1\le j\le 4}
\abs{z_j}$ and $\|z\|_1=\sum^4_{j=1}\abs{z_j}$.

\begin{proposition}\label{X-norm}
For $z\in\mathbb C^4$, we have the following:
\begin{enumerate}
\item[(i)]
$\|(z_1,z_2,z_3,z_4)\|_\xx=\|(\bar z_1,\bar z_2, \bar z_3,\bar z_4)\|_\xx$;
\item[(ii)]
if one entry of $z$ is zero then we have $\|z\|_\xx=\|
(|z_1|,|z_2|, |z_3|, |z_4|)\|_\xx=\|z\|_1$;
\item[(iii)]
if $\phi_z=0$ then $\|z\|_\xx=\| (|z_1|,|z_2|, |z_3|,
|z_4|)\|_\xx=\|z\|_1$.
\end{enumerate}
\end{proposition}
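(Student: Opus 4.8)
The plan is to verify the three items directly from the definition
$$
\| z\|_\xx=\max_\sigma \left(|z_1 e^{{\rm i}\sigma}+\bar z_4| + |z_2 e^{{\rm i}\sigma} +\bar z_3|\right)
$$
together with the already-established reduction \eqref{kye-norm-phase=basic}, which says that $\|z\|_\xx$ depends only on the magnitudes $|z_i|$ and on $|\phi_z|$. For item (i), I would simply observe that conjugating every entry sends $\phi_z=(\theta_1+\theta_4)-(\theta_2+\theta_3)$ to $-\phi_z$, while the magnitudes are unchanged; hence \eqref{kye-norm-phase=basic} gives $\|z\|_\xx=\|\bar z\|_\xx$. (Alternatively, replace $\sigma$ by $-\sigma$ in the $\max$ and take the complex conjugate inside each modulus.)

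For items (ii) and (iii), the key computational fact is the elementary identity $\max_\sigma\,|p e^{{\rm i}\sigma}+q| = |p|+|q|$ for any $p,q\in\mathbb C$, attained by choosing $\sigma$ so that $p e^{{\rm i}\sigma}$ and $q$ have the same argument. If I could choose a \emph{single} $\sigma$ that simultaneously aligns $z_1 e^{{\rm i}\sigma}$ with $\bar z_4$ and $z_2 e^{{\rm i}\sigma}$ with $\bar z_3$, then $\|z\|_\xx=|z_1|+|z_2|+|z_3|+|z_4|=\|z\|_1$. This simultaneous alignment is possible exactly when $\bar z_4/z_1$ and $\bar z_3/z_2$ have the same argument (when the relevant entries are nonzero), i.e. when $\arg z_1+\arg z_4=\arg z_2+\arg z_3$, which is precisely the condition $\phi_z=0$; this proves (iii). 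For (ii), by \eqref{kye-norm-phase=basic} we may assume the entries are real and nonnegative, and then one of them is $0$; whichever entry vanishes, the remaining modulus in the corresponding summand is already a nonnegative real, so the alignment constraint in that summand is vacuous and the single choice $\sigma=0$ already yields $\| z\|_\xx=|z_1|+|z_2|+|z_3|+|z_4|=\|z\|_1$. In all cases the reverse inequality $\|z\|_\xx\le\|z\|_1$ is automatic from the triangle inequality, so equality holds.

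I do not expect any serious obstacle here: the only point requiring a little care is bookkeeping the phase of $\bar z_i$ versus $z_i$ (the bar on $z_3,z_4$ in the definition is what makes $\phi_z=0$, rather than $\theta_1+\theta_4+\theta_2+\theta_3=0$, the relevant alignment condition), and handling the degenerate sub-cases in (ii) where more than one entry vanishes — but those only make the alignment easier, so $\|z\|_\xx=\|z\|_1$ persists. This is why the proposition is stated without proof in the paper.
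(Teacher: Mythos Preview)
Your argument is correct in all three parts, and since the paper explicitly lists these properties ``without proof'' there is no approach to compare against; your direct verification from the definition together with \eqref{kye-norm-phase=basic} is exactly the kind of routine check the authors have in mind. One very minor point of bookkeeping: in (ii) your appeal to \eqref{kye-norm-phase=basic} is slightly informal because $\phi_z$ is not well defined when an entry vanishes, but this is harmless---either assign an arbitrary phase to the zero entry so that $\phi_z=0$ and then invoke \eqref{kye-norm-phase=basic}, or (cleaner) work directly from \eqref{eq:zx} and observe that the term containing the vanished $s_i$ no longer depends on $\sigma$, so the other term can be maximized freely.
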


If $\phi_z=\pi$, then we can evaluate $\|z\|_\xx=\| (-|z_1|,|z_2|,
|z_3|, |z_4|)\|_\xx$ in terms of entries by a result in
\cite{han_kye_GHZ}. To explain this, we define
$$
\begin{aligned}
\Omega^-_i&=\left\{ z\in\mathbb C^4: \textstyle{ \frac
1{|z_i|} \ge \sum_{j\neq i}\frac 1{|z_j|}}\right\},\qquad i=1,2,3,4,\\
\Omega^-_0&=\mathbb
C^4\setminus\left(\sqcup_{i=1}^4\Omega^-_i\right)
=\left\{z\in\mathbb C^4: \textstyle{\frac 1{|z_i|} < \sum_{j\neq
i}\frac 1{|z_j|}},\ i=1,2,3,4\right\}.
\end{aligned}
$$
Then by Proposition 5.1 of \cite{han_kye_GHZ}, we have
\begin{equation}\label{X-norm-phase_pi}
\|z\|_\xx=
\begin{cases}
|z_1|+|z_2|+|z_3|+|z_4|-2|z_i|,\quad &z\in \Omega^-_i\ (i=1,2,3,4),\\
\Lambda(|z_1|,|z_2|,|z_3|,|z_4|),\quad & z\in \Omega^-_0,
\end{cases}
\end{equation}
whenever $\phi_z=\pi$, where
\bea
\label{eq:lambda}
\Lambda(a_1,a_2,a_3,a_4):=
\sqrt{\frac{(a_1a_2+a_3a_4)(a_1a_3+a_2a_4)(a_1a_4+a_2a_3)}{a_1a_2a_3a_4}}.
\eea

For $z=(re^{{\rm i}\phi},r,s,s)$, we can also compute the norm.
To do this, we note
$$
\begin{aligned}
\|(re^{{\rm i}\phi},r,s,s)\|_{\sf X}
&=\max_\theta |re^{{\rm i}\phi} e^{{\rm i}\theta}+s|+|re^{{\rm i}\theta}+s|\\
&=\max_\theta |re^{{\rm i}\phi} + se^{{\rm i}\theta}|+|r+se^{{\rm i}\theta}|.
\end{aligned}
$$
We fix two points $A(-r)$ and $B(-re^{{\rm i}\phi})$ on the complex
plane, and move around the point $P(se^{{\rm i}\theta})$ along the
circle with the radius $s$. Then the maximum in the above formula is
taken when the three points $A$, $B$ and $P$ make an isosceles
triangle, or equivalently when $P$ is located at $s e^{{\rm
i}\theta}$ or $-s e^{{\rm i}\theta}$ with $\theta=\frac 12
(\pi+(\phi+\pi))=\phi/2+\pi$. See Figure 1. Therefore, we have
\begin{equation}\label{norm-two-two}
\begin{aligned}
\|(re^{{\rm i}\phi},r,s,s)\|_{\sf X}
&= 2\max_\pm |r \pm se^{{\rm i}(\phi/2+\pi)}|\\
&= 2\max_\pm |r \pm se^{{\rm i}{\phi/2}}|
= 2\sqrt{r^2+s^2+2rs |\cos (\phi/2)|}.
\end{aligned}
\end{equation}
Especially, if the entries share a common magnitude then we have
\begin{equation}\label{X-norm-share-mag}
|z_i|=r\ (i=1,2,3,4)\ \Longrightarrow\ \|z\|_\xx= 2\sqrt 2\,r
\sqrt{1+|\cos(\phi/2)|}.
\end{equation}

\setlength{\unitlength}{1 mm}
\begin{figure}
\setlength{\unitlength}{1 mm}
\begin{center}
\begin{picture}(50,50)
  \qbezier(25.000,0.000)(33.284,0.000)
          (39.142,5.858)
  \qbezier(39.142,5.858)(45.000,11.716)
          (45.000,20.000)
  \qbezier(45.000,20.000)(45.000,28.284)
          (39.142,34.142)
  \qbezier(39.142,34.142)(33.284,40.000)
          (25.000,40.000)
  \qbezier(25.000,40.000)(16.716,40.000)
          (10.858,34.142)
  \qbezier(10.858,34.142)( 5.000,28.284)
          ( 5.000,20.000)
  \qbezier( 5.000,20.000)( 5.000,11.716)
          (10.858,5.858)
  \qbezier(10.858,5.858)(16.716,0.000)
          (25.000,0.000)
\put(25,20){\circle*{1}}
\put(10,20){\circle*{1}}
\put(17.5,7.01){\circle*{1}}
\put(42.31,30){\circle*{1}}
\drawline(10,20)(17.5,7.01)
\drawline(17.5,7.01)(42.31,30)
\drawline(42.31,30)(10,20)
\drawline(42.31,30)(13.75,13.5)
\drawline(10,20)(40.2,7.01)
\drawline(17.5,7.01)(40.2,7.01)
\put(6,20){$A$}
\put(16.5,3.01){$B$}
\put(43.31,31){$P$}
\end{picture}
\end{center}
\caption{Suppose that $A$ and $B$ are two fixed points with the same distances from the origin,
and $P$ moves around a circle centered at the origin. Then, the length
$\overline{AP}+\overline{PB}$ takes the maximum when the three
points $A,B$ and $P$ make an isosceles triangle.
}
\end{figure}
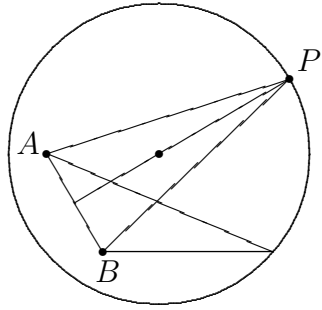

With the bi-linear form $\lan c,z\ran=\sum_{i=1}^4c_iz_i$, we see
that the right-hand side of \eqref{kye_main} becomes the dual norm
$\|c\|_\xx^\prime$ as follows:
$$
\begin{aligned}
\|c\|_\xx^\prime
&=\max_z \frac {\re \lan c,z\ran}{\| z\|_\xx}
=\max_z \frac {|\lan c,z\ran|}{\| z\|_\xx}\\
&=\max\{\re \lan c,z\ran: \|z\|_\xx=1\} =\max\{|\lan c,z\ran|:
\|z\|_\xx=1\}.
\end{aligned}
$$
By Theorem \ref{diagonal}, we have the following:

\begin{theorem}\label{kye-main-dual-norm}
A three qubit {\sf X}-state $\varrho=X(a,b,c)$ is separable if and
only if the inequality $\Delta_\varrho\ge \|c\|^\prime_\xx$ holds.
\end{theorem}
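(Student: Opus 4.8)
The plan is to deduce the statement directly from Theorem~\ref{diagonal} by recognizing the right-hand side of \eqref{kye_main} as the dual norm $\|c\|_\xx^\prime$. Most of the required identification is already carried out in the two displays preceding the theorem; I would simply organize those manipulations into a proof, taking a little care with the complex conjugations.

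First I would recall that, by Theorem~\ref{diagonal}, separability of $\varrho=X(a,b,c)$ is equivalent to the validity of $C(z)\Delta_\varrho\ge\mathcal L(\varrho,z)$ for every $z\in\mathbb C^4$. For $z=0$ both sides vanish, while for $z\neq 0$ we have $C(z)=\|(z_1,z_2,z_3,\bar z_4)\|_\xx>0$ by \eqref{CB} and the fact that $\|\ \|_\xx$ is a norm; dividing by $C(z)$ shows that this family of inequalities holds if and only if
\[
\Delta_\varrho\ \ge\ \max_{z\neq 0}\ \frac{\mathcal L(\varrho,z)}{C(z)},
\]
the quotient being scale-invariant, so that the supremum may be taken over the compact set $\{z:C(z)=1\}$, on which the continuous function $\mathcal L(\varrho,\cdot)$ attains its maximum. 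Next, using $\re(\bar c_4z_4)=\re(c_4\bar z_4)$ in the numerator $\mathcal L(\varrho,z)$ and then performing the substitution $z_4\mapsto\bar z_4$, which is a bijection of $\mathbb C^4$ fixing $z_1,z_2,z_3$, the right-hand side becomes
\[
\max_{z\neq 0}\ \frac{\re(c_1z_1+c_2z_2+c_3z_3+c_4z_4)}{\max_\sigma\bigl(|z_1e^{{\rm i}\sigma}+\bar z_4|+|z_2e^{{\rm i}\sigma}+\bar z_3|\bigr)}\ =\ \max_{z\neq0}\ \frac{\re\lan c,z\ran}{\|z\|_\xx},
\]
where the denominator is exactly the norm \eqref{kye_norm-def} and $\lan c,z\ran=\sum_{i=1}^4 c_iz_i$. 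Finally, since $\|\alpha z\|_\xx=|\alpha|\,\|z\|_\xx$ for every $\alpha\in\mathbb C$, replacing $z$ by $\alpha z$ with $|\alpha|=1$ chosen so that $\alpha\lan c,z\ran=|\lan c,z\ran|$ shows that this last quantity equals $\max_{z\neq0}|\lan c,z\ran|/\|z\|_\xx=\|c\|_\xx^\prime$. Chaining the three displays then gives: $\varrho$ is separable if and only if $\Delta_\varrho\ge\|c\|_\xx^\prime$.

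The only point needing attention is the bookkeeping with the conjugations in passing from $C(z)$ and $\mathcal L(\varrho,z)$ to the bilinear pairing $\lan c,z\ran$ and the norm \eqref{kye_norm-def}; there is no substantive obstacle here, since all of the real content---namely that \eqref{kye-main-ineq-sep} characterizes separability of three qubit $\xx$-states---is already supplied by Theorem~\ref{diagonal} (which in turn rests on Proposition~\ref{witness-inequality} and Lemma~\ref{average}).
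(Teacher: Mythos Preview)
Your proposal is correct and follows essentially the same route as the paper: both deduce the result from Theorem~\ref{diagonal} by rewriting the family of inequalities $C(z)\Delta_\varrho\ge\mathcal L(\varrho,z)$ via the conjugation/substitution $z_4\mapsto\bar z_4$ (using \eqref{CB} and $\re(\bar c_4 z_4)=\re(c_4\bar z_4)$) so that the supremum becomes $\max_{z}\re\lan c,z\ran/\|z\|_\xx=\|c\|_\xx^\prime$. Your added remarks on the case $z=0$ and on attainment of the supremum are just extra care and do not change the argument.
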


It is also possible to describe the dual norm $\|\ \|_\xx^\prime$ in
terms of separability. For given $t>0$ and $c\in\mathbb C^4$,
Theorem \ref{kye-main-dual-norm} tells us that the state $X(t{\bf
1},t{\bf 1}, c)$ is separable if and only if $t\ge
\|c\|_\xx^\prime$, where ${\bf 1}=(1,1,1,1)$. We
denote by $\mathbb S$ the convex cone of all unnormalized separable
states. Then, we have
\begin{equation}\label{dual-sep}
\|c\|_\xx^\prime=\inf\{t>0: X\left(t{\bf 1},t{\bf 1}, c\right)\in\mathbb S\}.
\end{equation}
This formula tells us that the dual norm $\|c\|_\xx^\prime$ is nothing
but the order unit norm of $X(0,0,c)$ in the ordered $*$-vector space $M_2 \otimes M_2 \otimes
M_2$ equipped with the positive cone $\mathbb S$ and the order unit
$\bf 1$. For the general information on the order unit and the order unit norm, we
refer to \cite{A} and \cite{PT}. The formula (\ref{dual-sep}) is also useful to investigate the
properties of the dual norm. For example, we will use the notion of
separability to prove that the dual norm is invariant under the same
phase difference.

\begin{proposition}\label{pp:state}
Suppose that $c,d\in\mathbb C^4$ satisfy $|c_i|=|d_i|$ for each
$i=1,2,3,4$ and $|\phi_c|=|\phi_d|$. Then $X(a,b,c)$ is separable if
and only if $X(a,b,d)$ is separable. Especially, if one of
anti-diagonals of an {\sf X}-state is zero, then the remaining three
phases are irrelevant to the separability.
\end{proposition}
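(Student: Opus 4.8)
The plan is to deduce Proposition~\ref{pp:state} from the characterization of separability in Theorem~\ref{kye-main-dual-norm}, namely that $X(a,b,c)$ is separable precisely when $\Delta_\varrho \ge \|c\|_\xx'$. Since the diagonal data $a,b$ are the same for both states, the quantity $\Delta_\varrho$ is identical in the two cases, so the whole statement reduces to showing that the dual norm depends only on the magnitudes $|c_i|$ and the absolute value of the phase difference: that is, $|c_i|=|d_i|$ for all $i$ and $|\phi_c|=|\phi_d|$ imply $\|c\|_\xx' = \|d\|_\xx'$.

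First I would recall that we have already established the analogous invariance for the norm $\|\ \|_\xx$ itself in \eqref{kye-norm-phase=basic}: if $|z_i|=|w_i|$ and $|\phi_z|=|\phi_w|$ then $\|z\|_\xx=\|w\|_\xx$. The idea is to transport this to the dual norm via an explicit bijection of $\mathbb C^4$ that preserves $\|\ \|_\xx$ and carries the linear functional $\langle c,\cdot\rangle$ to (the real part of) $\langle d,\cdot\rangle$. Writing $c_i=|c_i|e^{\ci\theta_i}$ and $d_i=|d_i|e^{\ci\eta_i}$, I would first reduce to the case $\phi_c=\phi_d$: the remaining freedom is to replace $d$ by $\bar d$, and since $\phi_{\bar d}=-\phi_d$ and $\|\ \|_\xx'$ is invariant under complex conjugation of all entries (inherited from Proposition~\ref{X-norm}(i) applied to the dual norm, using $\langle \bar c,\bar z\rangle = \overline{\langle c,z\rangle}$), this handles the sign ambiguity. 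So assume $\phi_c=\phi_d$. Then, looking at the three expressions for $\|z\|_\xx$ in \eqref{eq:zx}, the norm depends on $z=(s_1e^{\ci\sigma_1},\dots,s_4e^{\ci\sigma_4})$ only through the magnitudes $s_i$ and the single combination $\sigma_1+\sigma_4-\sigma_2-\sigma_3=\phi_z$. Hence the diagonal unitary $V=\diag(e^{\ci\psi_1},e^{\ci\psi_2},e^{\ci\psi_3},e^{\ci\psi_4})$ with $\psi_1+\psi_4=\psi_2+\psi_3$ acts isometrically on $(\mathbb C^4,\|\ \|_\xx)$. Choosing the $\psi_i$ to solve $\theta_i+\psi_i \equiv \eta_i$ — which is possible exactly because $\sum(\pm)\theta_i = \phi_c = \phi_d = \sum(\pm)\eta_i$ forces the compatibility condition $\psi_1+\psi_4=\psi_2+\psi_3$ — we get a $\|\ \|_\xx$-isometry $V$ with $\langle d, Vz\rangle = \langle c', z\rangle$ where $c'$ has the same magnitudes as $c$ but its phases shifted; in fact with the right bookkeeping $V^{\mathsf t} d$ has entries $|d_i|e^{\ci(\eta_i-\psi_i)} = |c_i|e^{\ci\theta_i}=c_i$. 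Therefore
\[
\|d\|_\xx' = \max_{z}\frac{\re\langle d,z\rangle}{\|z\|_\xx} = \max_{z}\frac{\re\langle d, Vz\rangle}{\|Vz\|_\xx} = \max_{w}\frac{\re\langle V^{\mathsf t}d, w\rangle}{\|w\|_\xx} = \|c\|_\xx',
\]
using that $z\mapsto Vz$ is a bijection and $\|Vz\|_\xx=\|z\|_\xx$.

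The final sentence of the proposition is then a special case: if, say, $c_4=d_4=0$, then $\phi_c$ and $\phi_d$ are both undefined (or, following the convention of the paper, irrelevant), and indeed formula \eqref{eq:zx} shows $\|z\|_\xx$ in the presence of a zero entry does not see the corresponding phase; more directly, one can cite Proposition~\ref{X-norm}(ii), and then observe that the dual norm of a vector with a zero coordinate likewise only depends on the magnitudes of the remaining three. So two {\sf X}-states whose anti-diagonal vectors agree in magnitude and have one common zero entry have equal $\|\cdot\|_\xx'$, hence are simultaneously separable or not. I expect the only delicate point to be the bookkeeping in choosing the phases $\psi_i$: one must verify that the linear system $\theta_i+\psi_i\equiv\eta_i \pmod{2\pi}$ together with the constraint $\psi_1+\psi_4\equiv\psi_2+\psi_3$ is consistent, and this consistency is exactly equivalent to $\phi_c\equiv\phi_d$, so nothing is lost. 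Everything else is a routine manipulation of suprema under an invertible isometric change of variables.
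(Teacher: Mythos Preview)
Your argument is correct (up to a harmless sign slip in the bookkeeping: with the bilinear pairing $\langle c,z\rangle=\sum c_iz_i$ and $V$ diagonal one has $\langle d,Vz\rangle=\langle Vd,z\rangle$, so to get $Vd=c$ you want $\psi_i=\theta_i-\eta_i$ rather than $\eta_i-\theta_i$; the consistency check $\psi_1+\psi_4=\psi_2+\psi_3\Leftrightarrow\phi_c=\phi_d$ is unaffected). However, your route is genuinely different from the paper's.

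The paper does \emph{not} go through Theorem~\ref{kye-main-dual-norm} at all. Instead it exhibits an explicit \emph{local} (product) unitary
\[
P=\diag\bigl(1,e^{\ci(\theta_2+\theta_3)/2}\bigr)\otimes\diag\bigl(1,e^{\ci(\theta_1-\theta_3)/2}\bigr)\otimes\diag\bigl(1,e^{\ci(\theta_1-\theta_2)/2}\bigr)
\]
on $M_2\otimes M_2\otimes M_2$ satisfying $P\,X(a,b,c)\,P^*=X(a,b,c')$ with $c'=(|c_1|,|c_2|,|c_3|,|c_4|e^{\ci\phi_c})$, and likewise for $d$; then $\phi_c=\phi_d$ gives $X(a,b,c')=X(a,b,d')$, while $\phi_c=-\phi_d$ gives $X(a,b,c')=X(a,b,d')^{\ttt}$. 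Since separability is preserved under local unitaries and under global transpose, the conclusion follows. The ``especially'' clause is handled by yet another product unitary making the three surviving anti-diagonal entries real and nonnegative.

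What each approach buys: the paper's argument is self-contained and more informative---it shows the two states are locally unitarily equivalent (possibly after a transpose), a stronger statement than mere simultaneous separability, and it does not rely on the hard Theorem~\ref{diagonal}. Your argument, by contrast, reverses the logical flow of Section~\ref{kye-section-norm}: you prove the dual-norm invariance (Proposition~\ref{dual-norm}(i)) directly from \eqref{kye-norm-phase=basic} via a diagonal $\|\ \|_\xx$-isometry of $\mathbb C^4$, and then read off Proposition~\ref{pp:state} from Theorem~\ref{kye-main-dual-norm}. This is exactly the alternative the paper alludes to in the sentence following Proposition~\ref{dual-norm} (``Proposition~\ref{dual-norm}(i) can be proved directly without using Proposition~\ref{pp:state}''), and it has the virtue of giving Propositions~\ref{pp:state} and~\ref{dual-norm}(i) in one stroke. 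There is no circularity, since Theorem~\ref{kye-main-dual-norm} rests only on Section~\ref{guhne-condi-}.
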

\begin{proof}
Write $c_j=\abs{c_j}e^{{\rm i}\t_j}$ for $j=1,2,3,4$. Define the
product unitary matrix $P$ by
$$
P=\diag\left(1,e^{{\rm i}{\t_2+\t_3\over2}}\right)
\ox\diag\left(1,e^{{\rm i}{\t_1-\t_3\over2}}\right)\ox
\diag\left(1,e^{{\rm i}{\t_1-\t_2\over2}}\right).
$$
One can verify that $P X(a,b,c) P^*=X(a,b,c')$ where
$c'=(\abs{c_1},\abs{c_2},\abs{c_3},\abs{c_4}e^{{\rm i}\phi_c})$. Similarly,
we can find a product unitary matrix $Q$ such that $Q X(a,b,d)
Q^*=X(a,b,d')$, where
$d'=(\abs{d_1},\abs{d_2},\abs{d_3},\abs{d_4}e^{{\rm i}\phi_d})$. If
$\phi_c=\phi_d$ then $X(a,b,c')=X(a,b,d')$. If $\phi_c=-\phi_d$ then
$X(a,b,c')=X(a,b,d')^\ttt$. In either case, $X(a,b,c)$ is
separable if and only if $X(a,b,d)$ is separable, because the
separability is unchanged under product unitary transformation.

It remains to prove the last assertion. If one of anti-diagonals of
an \sx state is zero, one can find another product unitary matrix
$R$, such that the other three anti-diagonals of $RX(a,b,c)R^*$ are
nonnegative and real. So, their phases are irrelevant to the
separability.
\end{proof}

\begin{proposition}\label{dual-norm}
For $c,d\in\mathbb C^4$, we have the following:
\begin{enumerate}
\item[(i)]
if $|c_i|=|d_i|$ for each $i=1,2,3,4$ and $|\phi_c|=|\phi_d|$  then
$\|c\|^\prime_\xx=\| d\|^\prime_\xx$;
\item[(ii)]
$\|(c_1,c_2,c_3,c_4)\|^\prime_\xx=\|\bar c_1,\bar c_2, \bar c_3,\bar
c_4)\|^\prime_\xx$;
\item[(iii)]
if one of entries of $c$ is zero then we have
$\|c\|_\xx^\prime=\|(|c_1|, |c_2|, |c_3|, |c_4|)\|_\xx^\prime$.
\end{enumerate}
\end{proposition}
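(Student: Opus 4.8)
The plan is to derive all three parts from the separability description (\ref{dual-sep}) of the dual norm, namely $\|c\|_\xx^\prime=\inf\{t>0: X(t{\bf 1},t{\bf 1},c)\in\mathbb S\}$, combined with Propositions \ref{pp:state} and \ref{X-norm}; no new computation should be needed beyond what is already packaged there.

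For (i), fix $t>0$ and apply Proposition \ref{pp:state} with $a=b=t{\bf 1}$: its hypotheses are precisely $|c_i|=|d_i|$ for all $i$ and $|\phi_c|=|\phi_d|$, so it tells us that $X(t{\bf 1},t{\bf 1},c)$ is separable if and only if $X(t{\bf 1},t{\bf 1},d)$ is. Hence the two sets $\{t>0: X(t{\bf 1},t{\bf 1},c)\in\mathbb S\}$ and $\{t>0: X(t{\bf 1},t{\bf 1},d)\in\mathbb S\}$ coincide, and taking infima in (\ref{dual-sep}) gives $\|c\|_\xx^\prime=\|d\|_\xx^\prime$. Part (iii) is the same argument using instead the last assertion of Proposition \ref{pp:state}: if some $c_j=0$, the phases of the remaining anti-diagonals are irrelevant to separability, so $X(t{\bf 1},t{\bf 1},c)$ and $X(t{\bf 1},t{\bf 1},(|c_1|,|c_2|,|c_3|,|c_4|))$ are simultaneously separable, and (\ref{dual-sep}) yields $\|c\|_\xx^\prime=\|(|c_1|,|c_2|,|c_3|,|c_4|)\|_\xx^\prime$. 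Note that (iii) is not literally a special case of (i), since $\phi_c$ need not be defined when an anti-diagonal entry vanishes; this is exactly why the separate clause of Proposition \ref{pp:state} must be invoked.

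For (ii), the quickest route is to observe that $\bar c$ has the same moduli as $c$ and $\phi_{\bar c}=-\phi_c$, hence $|\phi_{\bar c}|=|\phi_c|$, so (ii) follows from (i). Alternatively one can argue straight from the dual formula $\|c\|_\xx^\prime=\max\{\re\langle c,z\rangle : \|z\|_\xx=1\}$: the substitution $z\mapsto\bar z$ preserves $\|\cdot\|_\xx$ by Proposition \ref{X-norm}(i) and, since $\langle\bar c,\bar z\rangle=\overline{\langle c,z\rangle}$, carries $\re\langle\bar c,z\rangle$ to $\re\langle c,z\rangle$, so the two maxima agree. I do not expect a genuine obstacle here; the only point demanding care is keeping the degenerate case — a vanishing anti-diagonal entry, where $\phi_c$ is undefined — logically separate, as just noted.
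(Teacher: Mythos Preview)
Your proof is correct and follows essentially the same route as the paper: parts (i) and (iii) are obtained by combining Proposition~\ref{pp:state} with the separability description (\ref{dual-sep}), and (ii) is deduced as a special case of (i). Your treatment is in fact a bit more careful than the paper's, which simply asserts that (ii) is a special case of (i) without spelling out that $|\phi_{\bar c}|=|\phi_c|$, and which does not explicitly flag the degenerate case in (iii) where $\phi_c$ is undefined.
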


\begin{proof}
The statements (i) and (iii) follow from Proposition \ref{pp:state}
and (\ref{dual-sep}).
The identity (ii) is a special case of (i).
\end{proof}

We mention here that Proposition \ref{dual-norm} (i) can be proved directly
without using Proposition \ref{pp:state}.
We now consider the question which permutations of entries preserve the
norms $\|\ \|_\xx$ and $\|\ \|_\xx^\prime$.
We will use the notation $\sigma=\lan \sigma(1)\sigma(2)\sigma(3)\sigma(4)\ran$ for a
permutation $\sigma$ on $\{1,2,3,4\}$. For example, $\lan 1234\ran$
will denote the identity permutation. For $z\in\mathbb C^4$, we will
define $z^\sigma=(z_{\sigma(1)}, z_{\sigma(2)}, z_{\sigma(3)},
z_{\sigma(4)})\in\mathbb C^4$. It is clear by definition that
$\|z^\sigma\|_\xx=\|z\|_\xx$ when $\sigma$ is one of the following
eight permutations:
\begin{equation}\label{permutation}
\lan 1234\ran,\ \lan 1324\ran,\ \lan 4231\ran,\ \lan 4321\ran,\ \lan
2143\ran,\ \lan 2413\ran,\ \lan 3142\ran,\ \lan 3412\ran.
\end{equation}

\begin{proposition}\label{kyeperm-char}
For a permutation $\sigma$ on $\{1,2,3,4\}$, the following are equivalent:
\begin{enumerate}
\item[(i)]
$\|z^\sigma\|_\xx=\|z\|_\xx$ for every $z\in\mathbb C^4$;
\item[(ii)]
$\|c^\sigma\|^\prime_\xx=\|c\|^\prime_\xx$ for every $c\in\mathbb C^4$;
\item[(iii)]
$|\phi_{z^\sigma}|=|\phi_z|$ for every $z\in\mathbb C^4$;
\item[(iv)]
$\sigma$ is one of the permutations listed in {\rm (\ref{permutation})}.
\end{enumerate}
\end{proposition}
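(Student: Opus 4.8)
The plan is to prove the cyclic chain of implications (i) $\Rightarrow$ (iii) $\Rightarrow$ (iv) $\Rightarrow$ (i), and then (iv) $\Rightarrow$ (ii) $\Rightarrow$ (iii), so that all four become equivalent. The implication (iv) $\Rightarrow$ (i) is already noted in the text right before the statement, and (iv) $\Rightarrow$ (ii) follows from it together with the definition of the dual norm (a permutation that preserves $\|\cdot\|_\xx$ automatically preserves its dual norm, since permuting coordinates is a linear isometry and the bilinear pairing $\lan c,z\ran=\sum c_iz_i$ is permutation-covariant in the obvious sense). So the real content is to show that (i), (ii), or (iii) forces $\sigma$ to lie in the list (\ref{permutation}).

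First I would handle (i) $\Rightarrow$ (iv) and (ii) $\Rightarrow$ (iv) by producing explicit separating examples. For (iii) $\Rightarrow$ (iv): note that the group $G$ of the eight permutations in (\ref{permutation}) is exactly the stabilizer-up-to-sign of the alternating-type form $\phi_z=(\theta_1+\theta_4)-(\theta_2+\theta_3)$; concretely, $G$ is generated by the transposition $\lan1324\ran$ (swap $2\lra3$, fixing $\phi$), the transposition $\lan4231\ran$ (swap $1\lra4$, fixing $\phi$), and the double transposition $\lan2143\ran$ (which sends $\phi\mapsto-\phi$). If $\sigma\notin G$, I claim one can choose phases $\theta_i$ with $\phi_z$ and $\phi_{z^\sigma}$ having different absolute values: since $|S_4/G|=3$ and the three cosets are represented by $\lan1234\ran$, $\lan2134\ran$, $\lan3124\ran$, it suffices to check that $\lan2134\ran$ (swap $1\lra2$) and $\lan3124\ran$ send a generic phase vector to one with a genuinely different $|\phi|$ — e.g. $z=(e^{\ci\theta_1},1,1,1)$ gives $\phi_z=\theta_1$ but $\phi_{z^{\lan2134\ran}}=-\theta_1+0=-\theta_1$ has the same absolute value, so one needs a less degenerate test vector, say $z$ with $\theta_1,\theta_2,\theta_3,\theta_4$ algebraically independent mod $2\pi$, for which $|(\theta_2+\theta_4)-(\theta_1+\theta_3)|\ne|(\theta_1+\theta_4)-(\theta_2+\theta_3)|$ in general. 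This gives (iii) $\Rightarrow$ (iv).

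For (i) $\Rightarrow$ (iii) — which then feeds the previous paragraph — the idea is that the norm $\|\cdot\|_\xx$ \emph{detects} the phase difference: by (\ref{kye-norm-phase=basic}) and the explicit formula (\ref{X-norm-share-mag}), if all $|z_i|=r$ then $\|z\|_\xx=2\sqrt2\,r\sqrt{1+|\cos(\phi_z/2)|}$, which is a strictly decreasing function of $|\phi_z|$ on $[0,\pi]$. Hence, restricting to the diagonal $|z_1|=|z_2|=|z_3|=|z_4|=r$, the equality $\|z^\sigma\|_\xx=\|z\|_\xx$ for all such $z$ forces $|\cos(\phi_{z^\sigma}/2)|=|\cos(\phi_z/2)|$, hence $|\phi_{z^\sigma}|=|\phi_z|$ on a set of $z$ rich enough (all four magnitudes equal, phases free) to conclude $|\phi_{z^\sigma}|=|\phi_z|$ for all $z\in\mathbb C^4$, since $\phi$ depends only on the phases. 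The analogous argument for (ii) $\Rightarrow$ (iii) uses Proposition \ref{dual-norm}(i): the dual norm is also a function of the magnitudes and $|\phi_c|$ alone, and on the equal-magnitude slice it is likewise strictly monotone in $|\phi_c|$ (being the dual of a strictly monotone norm on that slice, or computed directly), so equality of dual norms after applying $\sigma$ again forces $|\phi_{c^\sigma}|=|\phi_c|$.

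\textbf{Main obstacle.} The delicate point is making the monotonicity-and-genericity argument for (i) $\Rightarrow$ (iii) and (ii) $\Rightarrow$ (iii) fully rigorous: I must be sure that the restriction to equal-magnitude vectors already pins down $|\phi|$, i.e. that two permutations which agree on the equal-magnitude slice agree on the phase difference everywhere — this is fine because $\phi_z$ literally depends only on $(\theta_1,\ldots,\theta_4)$ and not on the magnitudes, so evaluating at $|z_i|\equiv1$ loses nothing. A secondary nuisance is that on the equal-magnitude slice a permutation $\sigma$ might act on phases in a way that happens to preserve $|\phi|$ without $\sigma\in G$ — here one must use the \emph{full} freedom of the four phases (not, say, three of them equal), and the explicit coset analysis $|S_4/G|=3$ with representatives $\lan1234\ran,\lan2134\ran,\lan3124\ran$ to check that each non-trivial coset genuinely changes $|\phi|$ for suitably chosen algebraically-independent phase angles. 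Once that coset computation is done, everything else is bookkeeping.
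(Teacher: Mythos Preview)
Your argument for (i)$\Rightarrow$(iii) and (iii)$\Rightarrow$(iv) is essentially the paper's: restrict to the equal-magnitude slice, use the explicit formula (\ref{X-norm-share-mag}) to recover $|\phi|$ from the norm, and then observe that the identity
\[
|(\theta_1+\theta_4)-(\theta_2+\theta_3)|=|(\theta_{\sigma(1)}+\theta_{\sigma(4)})-(\theta_{\sigma(2)}+\theta_{\sigma(3)})|
\]
for all $(\theta_i)$ forces $\{\sigma(1),\sigma(4)\}\in\{\{1,4\},\{2,3\}\}$. The paper states this last step directly rather than via a coset/generic-phases computation, but the content is the same.

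The one place where you make life harder than necessary is the handling of (ii). You correctly note that (i)$\Rightarrow$(ii) follows from the permutation-covariance of the pairing $\lan c,z\ran$ together with the definition of the dual norm. But the exact same argument, run with the roles of $\|\cdot\|_\xx$ and $\|\cdot\|_\xx'$ swapped, gives (ii)$\Rightarrow$(i), since $\|\cdot\|_\xx$ is the bidual of itself. The paper does exactly this: it records ``the equivalence between (i) and (ii) follows by the duality'' in one line. Your proposed route (ii)$\Rightarrow$(iii) instead appeals to strict monotonicity of $\|c\|_\xx'$ in $|\phi_c|$ on the equal-magnitude slice; that fact is true (it is later computed to be $r\sqrt{1+|\sin(\phi_c/2)|}$), but it is not available at this point in the paper and your heuristic ``being the dual of a strictly monotone norm on that slice'' is not a proof. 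So your ``main obstacle'' is self-inflicted: replace the (ii)$\Rightarrow$(iii) step by the one-line biduality argument (ii)$\Rightarrow$(i), and the obstacle disappears.
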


\begin{proof}
We have already seen the implication (iv) $\Longrightarrow$ (i), and the
equivalence between (i) and (ii) follows by the duality.
Consider $w=(e^{{\rm i}\theta_1}, e^{{\rm i}\theta_2},e^{{\rm
i}\theta_3},e^{{\rm i}\theta_4})\in\mathbb C^4$. Then we have
$$
\|w^\sigma\|_\xx=2\sqrt 2\sqrt{1+|\cos (\phi_{w^\sigma}/2)|},\qquad
\|w\|_\xx=2\sqrt 2\sqrt{1+|\cos (\phi_{w}/2)|}
$$
by (\ref{X-norm-share-mag}), and so we also have
\begin{equation}\label{kye_temp_xx}
w=(e^{{\rm i}\theta_1}, e^{{\rm i}\theta_2},e^{{\rm
i}\theta_3},e^{{\rm i}\theta_4}),\
\|w^\sigma\|_\xx=\|w\|_\xx\ \Longrightarrow\ |\phi_{w^\sigma}|= |\phi_w|.
\end{equation}
Suppose that (i) holds. For a given $z\in\mathbb C^4$, we write $z_i=r_i e^{{\rm i}\theta_i}$,
and put $w=(e^{{\rm i}\theta_1}, e^{{\rm i}\theta_2},e^{{\rm
i}\theta_3},e^{{\rm i}\theta_4})$. Then $\|w^\sigma\|_\xx=\|w\|_\xx$ implies
$|\phi_{z^\sigma}|=|\phi_{w^\sigma}|= |\phi_w|=|\phi_z|$ by (\ref{kye_temp_xx}).
This proves the direction (i) $\Longrightarrow$ (iii).
If (iii) is true then we have the relation
$$
|(\theta_1+\theta_4)-(\theta_2+\theta_3)|=|(\theta_{\sigma(1)}+\theta_{\sigma(4)})-(\theta_{\sigma(2)}+\theta_{\sigma(3)})|
$$
holds for every $\theta_1,\theta_2,\theta_3$ and $\theta_4$.
Therefore, we see that either
$\{\sigma(1),\sigma(4)\}=\{1,4\}$ or
$\{\sigma(1),\sigma(4)\}=\{2,3\}$ must hold. This shows that
$\sigma$ is one of permutations listed in {\rm (\ref{permutation})}.
\end{proof}

It is clear by Proposition \ref{kyeperm-char} that the eight
permutations in (\ref{permutation}) make a group. It turns out that
this is isomorphic to the dihedral group $D_4$ of order eight. We note that $\lan
1234\ran$, $\lan 2143\ran$, $\lan 3412\ran$ and $\lan 4321\ran$ are
even permutations, and the others are odd. These four even
permutations reflect the fact that the separability is invariant
under partial transposes. Indeed, if we denote by $\Gamma_A, \Gamma_B$ and
$\Gamma_C$ the partial transposes with respect to the $A$, $B$ and $C$
systems, respectively, then we have
$$
\begin{aligned}
X(a,b,c)^{\Gamma_C}&=X(a,b,(c_2,c_1,c_4,c_3)),\\
X(a,b,c)^{\Gamma_B}&=X(a,b,(c_3,c_4,c_1,c_2)),\\
X(a,b,c)^{\Gamma_A}&=X(a,b,(\bar c_4,\bar c_3,\bar c_2,\bar c_1)).
\end{aligned}
$$
On the other hand, if we interchange the $B$ and $C$ systems then
$\varrho=X(a,b,c)$ becomes
$$
X((a_1,a_3,a_2,a_4),(b_1,b_3,b_2,b_4),(c_1,c_3,c_2,c_4)).
$$
This reflects the odd permutation $\lan 1324\ran$ in
(\ref{permutation}). The remaining permutations in
(\ref{permutation}) are composition of $\lan 1324\ran$ and the
others. If we interchange $A$-$C$ and $A$-$B$ systems, then the
state $\varrho=X(a,b,c)$ becomes
\begin{equation}\label{kye-system-change}
\begin{aligned}
&X((a_1,b_4,a_3,b_2),(b_1,a_4,b_3,a_2),(c_1,\bar c_4,c_3,\bar c_2)),\\
&X((a_1,a_2,b_4,b_3),(b_1,b_2,a_4,a_3),(c_1,c_2,\bar c_4,\bar c_3)),
\end{aligned}
\end{equation}
respectively. We note that  the phase differences are invariant in both case.

\begin{proposition}\label{fytygdgfnd}
We have the following identities:
\begin{enumerate}
\item[(i)]
$\|(c_1,c_2,c_3,c_4)\|_\xx^\prime=\|(c_1,\bar c_4, c_3,\bar c_2)\|_\xx^\prime=\|(c_1,c_2,\bar c_4,\bar c_3)\|_\xx^\prime$;
\item[(ii)]
$\|(c_1,c_2,c_3,c_4)\|_\xx=\|(c_1,\bar c_4, c_3,\bar c_2)\|_\xx=\|(c_1,c_2,\bar c_4,\bar c_3)\|_\xx$.
\end{enumerate}
\end{proposition}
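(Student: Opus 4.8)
The plan is to prove the dual-norm identities in (i) first, via the separability description of $\|\ \|_\xx^\prime$, and then to obtain the norm identities in (ii) from (i) by duality.

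For (i), I would start from the order-unit formula \eqref{dual-sep}, which reduces the claim to showing that for each fixed $t>0$ the three states $X(t{\bf 1},t{\bf 1},c)$, $X(t{\bf 1},t{\bf 1},(c_1,\bar c_4,c_3,\bar c_2))$ and $X(t{\bf 1},t{\bf 1},(c_1,c_2,\bar c_4,\bar c_3))$ are simultaneously separable. The key observation is that interchanging the $A$ and $C$ (resp.\ the $A$ and $B$) tensor factors is implemented by a product permutation unitary on $M_2\otimes M_2\otimes M_2$, hence preserves the separable cone $\mathbb S$; and by \eqref{kye-system-change} this interchange carries $X(a,b,c)$ to $X((a_1,b_4,a_3,b_2),(b_1,a_4,b_3,a_2),(c_1,\bar c_4,c_3,\bar c_2))$ (resp.\ to $X((a_1,a_2,b_4,b_3),(b_1,b_2,a_4,a_3),(c_1,c_2,\bar c_4,\bar c_3))$). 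The point to check is that when $a=b=t{\bf 1}$ the new diagonal vectors are again $t{\bf 1}$, so the transformed state is exactly of the required form with the anti-diagonal replaced; then \eqref{dual-sep} yields the two equalities of dual norms at once.

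For (ii), I would set $T_1(c)=(c_1,\bar c_4,c_3,\bar c_2)$ and $T_2(c)=(c_1,c_2,\bar c_4,\bar c_3)$, note that these are $\mathbb R$-linear involutions of $\mathbb C^4$, and verify that each is self-adjoint for the real bilinear form $B(c,z)=\re\sum_{i=1}^4 c_iz_i$ used to define the dual norm (this uses only $\re(\bar\alpha\beta)=\re(\alpha\bar\beta)$ to move the conjugations across, which shows $B(T_jc,z)=B(c,T_jz)$). Since $\|\ \|_\xx$ is its own double dual on the finite-dimensional space $\mathbb C^4$, one has $\|z\|_\xx=\max\{B(c,z):\|c\|_\xx^\prime\le1\}$; as $T_j$ maps the unit ball of $\|\ \|_\xx^\prime$ onto itself by (i), the change of variable $c\mapsto T_jc$ inside this maximum gives $\|T_jz\|_\xx=\|z\|_\xx$, which is (ii).

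The computations in both parts are routine; the only place that needs care is the duality step for (ii), where one must be sure that the \emph{same} $\mathbb R$-linear map $T_j$ — not merely its adjoint — preserves the original norm, so verifying $T_j^*=T_j$ (equivalently, that the conjugations and index swaps in $T_j$ conspire correctly with the pairing $\{1,4\},\{2,3\}$ underlying $\|\ \|_\xx$) is the crux. A more computational alternative for (ii) is to manipulate the $\max_\sigma$-definition of $\|\ \|_\xx$ directly, relabelling the rotation parameter and using $|\bar\zeta|=|\zeta|$; this works but forces one to reconcile the two different ``$2+2$'' pairings of the entries, so the duality route via \eqref{dual-sep} and \eqref{kye-system-change} is cleaner and reuses machinery already in place. (Consistently with the remark preceding the statement, the three vectors appearing in each identity have the same entry magnitudes up to permutation and the same phase difference.)
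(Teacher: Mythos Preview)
Your proposal is correct and follows essentially the same route as the paper: part (i) is deduced from \eqref{dual-sep} together with \eqref{kye-system-change}, and part (ii) is obtained from (i) by duality, using that the involutions $T_j$ satisfy $\re\langle T_jc,T_jz\rangle=\re\langle c,z\rangle$ (equivalently, $B(T_jc,z)=B(c,T_jz)$ since $T_j^2={\rm id}$) so that the supremum defining $\|T_jc\|_\xx$ can be rewritten via the substitution $z\mapsto T_jz$ and (i). Your formulation makes the self-adjointness of $T_j$ more explicit than the paper's one-line computation, but the argument is the same.
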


\begin{proof}
The identities in (i) follow from
(\ref{kye-system-change}) and (\ref{dual-sep}). 
We denote $\tilde c=(c_1,\bar c_4, c_3,\bar c_2)$, and $\tilde z=(z_1,\bar z_4, z_3,\bar z_2)$.
Note that $\|\ \|_\xx$ is the dual norm of $\|\ \|_\xx^\prime$ by the duality. Therefore, we have
$$
\|\tilde c\|_\xx
=\sup\frac{{\rm Re}\, \lan \tilde c,\tilde z\ran}{\|\tilde z\|_\xx^\prime}
=\sup\frac{{\rm Re}\, \lan  c, z\ran}{\|\tilde z\|_\xx^\prime}
=\sup\frac{{\rm Re}\, \lan  c, z\ran}{\|z\|_\xx^\prime}
=\|c\|_\xx
$$
by (i). The same argument is applied for $(c_1,c_2,\bar c_4,\bar c_3)$.
\end{proof}

\begin{theorem}\label{kye_two-partition}
Let $z\in\mathbb C^4$. If there exists a partition
$\{i_1,i_2\}\cup\{i_3,i_4\}=\{1,2,3,4\}$ such that
$|z_{i_1}|=|z_{i_2}|=r$ and $|z_{i_3}|=|z_{i_4}|=s$, then
$\|z\|_\xx=2\sqrt{r^2+s^2+2rs |\cos (\phi_z/2)|}$.
\end{theorem}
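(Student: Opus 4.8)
The plan is to reduce everything to the identity (\ref{norm-two-two}), which already evaluates $\|z\|_\xx$ whenever the two equal-magnitude pairs occupy positions $\{1,2\}$ and $\{3,4\}$, by bringing a general $z$ into that shape using the invariances (\ref{kye-norm-phase=basic}), Proposition \ref{kyeperm-char} and Proposition \ref{fytygdgfnd}. Since the claimed value $2\sqrt{r^2+s^2+2rs|\cos(\phi_z/2)|}$ is symmetric in $r$ and $s$, there are exactly three cases to handle, according to whether the unordered pairing of positions is $\{\{1,2\},\{3,4\}\}$, $\{\{1,3\},\{2,4\}\}$, or $\{\{1,4\},\{2,3\}\}$.

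In the first case $|z_1|=|z_2|=r$ and $|z_3|=|z_4|=s$, so putting $w=(re^{{\rm i}\phi_z},r,s,s)$ gives $|w_j|=|z_j|$ for $j=1,2,3,4$ and $\phi_w=\phi_z$; then $\|z\|_\xx=\|w\|_\xx$ by (\ref{kye-norm-phase=basic}), and $\|w\|_\xx=2\sqrt{r^2+s^2+2rs|\cos(\phi_z/2)|}$ by (\ref{norm-two-two}). In the second case I would apply the permutation $\sigma=\lan 1324\ran$ from (\ref{permutation}): then $z^\sigma=(z_1,z_3,z_2,z_4)$ has its equal-magnitude pairs in positions $\{1,2\}$ and $\{3,4\}$, so by Proposition \ref{kyeperm-char} (which gives $\|z^\sigma\|_\xx=\|z\|_\xx$ and $|\phi_{z^\sigma}|=|\phi_z|$, whence $|\cos(\phi_{z^\sigma}/2)|=|\cos(\phi_z/2)|$) the first case applied to $z^\sigma$ finishes the job.

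The third case $\{\{1,4\},\{2,3\}\}$ is the main obstacle, because no norm-preserving permutation helps here: every $\sigma$ in the list (\ref{permutation}) satisfies $\{\sigma(1),\sigma(4)\}\in\{\{1,4\},\{2,3\}\}$ (read off directly), so such a $\sigma$ carries the pairing $\{\{1,4\},\{2,3\}\}$ onto itself and cannot move us into the first case. Instead I would invoke Proposition \ref{fytygdgfnd} (ii): with $\tilde z=(z_1,\bar z_4,z_3,\bar z_2)$ we have $\|\tilde z\|_\xx=\|z\|_\xx$, while $|\tilde z_1|=|\tilde z_2|=r$, $|\tilde z_3|=|\tilde z_4|=s$, and writing $z_j=|z_j|e^{{\rm i}\theta_j}$ one checks $\phi_{\tilde z}=(\theta_1+\theta_4)-(\theta_2+\theta_3)=\phi_z$. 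Hence $\tilde z$ is of the shape treated in the first case, and $\|z\|_\xx=\|\tilde z\|_\xx=2\sqrt{r^2+s^2+2rs|\cos(\phi_z/2)|}$, completing the proof. (As an alternative to Proposition \ref{fytygdgfnd}, one could rerun the isosceles-triangle argument behind (\ref{norm-two-two}) directly for vectors of the form $(re^{{\rm i}\phi},s,s,r)$, but reusing the established identity is shorter.)
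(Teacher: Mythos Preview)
Your argument is correct and follows essentially the same route as the paper's proof: the three cases $\{\{1,2\},\{3,4\}\}$, $\{\{1,3\},\{2,4\}\}$, $\{\{1,4\},\{2,3\}\}$ are handled respectively by (\ref{norm-two-two}), by the permutation $\langle 1324\rangle$ via Proposition \ref{kyeperm-char}, and by the transformation $z\mapsto(z_1,\bar z_4,z_3,\bar z_2)$ via Proposition \ref{fytygdgfnd}. Your write-up adds the helpful remark that no permutation in (\ref{permutation}) can resolve the third case, and it spells out the phase-difference checks more explicitly, but the strategy is identical.
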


\begin{proof}
We first consider the case  $|z_1|=|z_2|$ and $|z_3|=|z_4|$. This is just (\ref{norm-two-two}).
If $|z_1|=|z_3|$ and $|z_2|=|z_4|$ then the identity follows from the first case and
Proposition \ref{kyeperm-char} with $\sigma=\langle 1324\rangle$. Finally,
if $|z_1|=|z_4|$ and $|z_2|=|z_3|$ then we get the result by Proposition \ref{fytygdgfnd}.
\end{proof}

We close this section to see how the norm $\|\ \|_\xx$ depends on the phase difference.
To do this, we fix nonnegative $s_i$'s, and define
the function
$$
\beta: \phi\mapsto \|(s_1e^{{\rm
i}\phi},s_2,s_3,s_4)\|_\xx=\|(s_1,s_2,s_3,s_4 e^{{\rm
i}\phi})\|_\xx.
$$
Note that $\beta$ is an even function by
(\ref{kye-norm-phase=basic}).
The relation (\ref{norm-two-two})
suggests that the function $\beta$ might be decreasing on $[0,\pi]$. This is the
case, in general.

\begin{proposition}\label{beta_decrease}
The function $\b$ is strictly decreasing on the interval $\phi\in[0,\pi]$ if and only if $s_j>0$ for $j=1,2,3,4$.
\end{proposition}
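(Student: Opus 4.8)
The plan is to treat the two directions separately, the ``only if'' part being immediate and the converse requiring a structural description of the maximizer. For the forward direction, suppose $s_j=0$ for some $j$. Then one entry of $(s_1e^{{\rm i}\phi},s_2,s_3,s_4)$ vanishes, so Proposition \ref{X-norm}(ii) gives $\beta(\phi)=\|(s_1,s_2,s_3,s_4)\|_1=s_1+s_2+s_3+s_4$ for all $\phi$; a constant function on $[0,\pi]$ is not strictly decreasing, so strict monotonicity of $\beta$ forces $s_j>0$ for every $j$. From now on I assume $s_j>0$ for all $j$ and prove the converse.

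Expanding the squared moduli in the definition (\ref{kye_norm-def}) and setting $G_1(x)=\sqrt{s_1^2+s_4^2+2s_1s_4\cos x}$, $G_2(x)=\sqrt{s_2^2+s_3^2+2s_2s_3\cos x}$, I would rewrite $\beta(\phi)=\max_\sigma\big(G_1(\phi+\sigma)+G_2(\sigma)\big)=\max_{s\in\mathbb R}\big(G_1(s)+G_2(\phi-s)\big)$, the last equality by the substitution $s=\phi+\sigma$ and evenness of $G_2$. Here $G_1,G_2$ are continuous, even, $2\pi$-periodic, and — precisely because $s_1s_4>0$ and $s_2s_3>0$ — strictly decreasing on $[0,\pi]$; writing $d(x)$ for the distance from $x$ to $2\pi\mathbb Z$ and $G_i^\flat=G_i|_{[0,\pi]}$, one has $G_i(x)=G_i^\flat(d(x))$. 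The crucial step is the claim that for $\phi\in[0,\pi]$ the maximum is attained at some $s\in[0,\phi]$. To prove it, fix $s$, put $d_1=d(s)$, $d_2=d(\phi-s)$, so $G_1(s)+G_2(\phi-s)=G_1^\flat(d_1)+G_2^\flat(d_2)$. Since $\phi\equiv\pm d_1\pm d_2\pmod{2\pi}$ and $d(\cdot)\le|\cdot|$, a short case check gives $d_1+d_2\ge\phi$. Then $e_i:=\phi\,d_i/(d_1+d_2)$ (or $e_i=0$ if $d_1+d_2=0$, which forces $\phi=0$) satisfy $e_1+e_2=\phi$ and $0\le e_i\le d_i$, so $e_1\in[0,\phi]$ and, by monotonicity of $G_i^\flat$, $G_1^\flat(d_1)+G_2^\flat(d_2)\le G_1(e_1)+G_2(\phi-e_1)$; since $[0,\phi]$ is compact the max is attained there, proving the claim.

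Granting the claim, strict monotonicity is short. Let $0\le\phi_1<\phi_2\le\pi$ and pick a maximizer $s_2\in[0,\phi_2]$ with $\beta(\phi_2)=G_1(s_2)+G_2(\phi_2-s_2)$, so $s_2,\ \phi_2-s_2\in[0,\pi]$. If $s_2\le\phi_1$, then testing $s=s_2$ in the variational formula for $\beta(\phi_1)$ and using $0\le\phi_1-s_2<\phi_2-s_2\le\pi$ together with the strict decrease of $G_2^\flat$ gives $\beta(\phi_1)\ge G_1(s_2)+G_2(\phi_1-s_2)>G_1(s_2)+G_2(\phi_2-s_2)=\beta(\phi_2)$. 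If $s_2>\phi_1$, then testing $s=\phi_1$ and using $0\le\phi_1<s_2\le\pi$, the strict decrease of $G_1^\flat$, and $G_2(\phi_2-s_2)\le G_2(0)$ gives $\beta(\phi_1)\ge G_1(\phi_1)+G_2(0)>G_1(s_2)+G_2(0)\ge\beta(\phi_2)$. Either way $\beta(\phi_1)>\beta(\phi_2)$, so $\beta$ is strictly decreasing on $[0,\pi]$.

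The one genuinely delicate point is the confinement claim: the maximizer of $G_1(s)+G_2(\phi-s)$ moves with the relative ``peakedness'' of $G_1$ versus $G_2$ and is not given by any fixed formula, so the concluding two-term comparison collapses unless one knows it lies in $[0,\phi]$. The mechanism that forces this is the inequality $d_1+d_2\ge\phi$ for the wrapped distances, combined with the radial rescaling $e_i=\phi\,d_i/(d_1+d_2)$; once that is in place, the remaining work (expanding the norm, the monotonicity of $G_i^\flat$, and the endpoint bookkeeping when $\phi_1=0$ or $\phi_2=\pi$) is routine.
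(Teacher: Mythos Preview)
Your proof is correct and follows the same high-level strategy as the paper --- locate the maximizer of the inner optimisation in the window $[0,\phi]$, then compare the values at $\phi_1<\phi_2$ --- but the execution differs enough to be worth noting. The paper writes $B(\sigma,\phi)=|s_1e^{{\rm i}(\sigma-\phi)}+s_4|+|s_2e^{{\rm i}\sigma}+s_3|$, first restricts $\sigma$ to $[0,\pi]$ via a cosine inequality, then uses differentiability ($B_1'(\phi_0)=B_2'(0)=0$ together with the sign of $(B_1+B_2)'$ at the endpoints) to place the maximizer strictly in $(0,\phi_0)$; the strict decrease is then obtained locally (for $\phi$ just below $\phi_0$) and globalised by a compactness argument. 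Your route is calculus-free: the confinement claim comes from the one-line observation $\phi=d(\phi)=d(\pm d_1\pm d_2)\le|\pm d_1\pm d_2|\le d_1+d_2$ and the radial rescaling, and the comparison is a direct global two-case split rather than a local-to-global passage. What this buys you is an argument that never touches derivatives and never needs the strict interior location of the maximizer (endpoint maximizers are handled uniformly by your case split); what the paper's approach buys is perhaps a more analytically transparent picture of \emph{where} the maximizer actually sits. Both are complete.
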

\bpf
The "only if" part follows from the definition of $\|\ \|_\xx$, or Proposition \ref{X-norm} (ii).
To prove the "if" part, suppose $s_j>0$ for $j=1,2,3,4$.
We consider the following two variable function
$$
\begin{aligned}
B(\sigma,\phi)&=|s_1e^{{\rm i}(\sigma-\phi)} +s_4|+|s_2 e^{{\rm
i}\sigma} +  s_3| \\
&=\sqrt{s_1^2+s_4^2+2s_1s_4\cos(\s-\phi)} +
\sqrt{s_2^2+s_3^2+2s_2s_3\cos\s}.
\end{aligned}
$$
Then $\beta(\phi)=\max_\sigma B(\sigma,\phi)$. We assume that
$0\le\phi\le\pi$. Suppose that $0\le\sigma\le \pi$. Then we have
$$
\begin{aligned}
\cos(\sigma-\phi)
&=\cos\sigma\cos\phi+\sin\sigma\sin\phi\\
&\ge \cos\sigma\cos\phi-\sin\sigma\sin\phi
=\cos(-\sigma-\phi),
\end{aligned}
$$
and $\cos\sigma=\cos(-\sigma)$. So, we have $B(\sigma,\phi)\ge
B(-\sigma,\phi)$, and see that the maximum $\max_\sigma
B(\sigma,\phi)$ occurs when $\sigma\in [0,\pi]$.

We fix $\phi_0$ with $0<\phi_0<\pi$, and consider two functions
$B_1$ and $B_2$ by
$$
B_1(\sigma)=|s_1e^{{\rm i}(\sigma-\phi_0)} +s_4|, \qquad
B_2(\sigma)=|s_2 e^{{\rm i}\sigma} +  s_3|.
$$
Then both $B_1$ and $B_2$ are differentiable at $\sigma=\phi_0$ and $\sigma=0$,
respectively, with $B_1^\prime(\phi_0)=B_2^\prime(0)=0$. The
function $B_1$ is increasing on $[0,\phi_0]$ and decreasing on
$[\phi_0,\pi]$. $B_2$ is decreasing on $[0,\pi]$. Therefore, the
maximum of $B(\s,\phi_0)=B_1(\sigma)+B_2(\sigma)$ occurs at $\sigma_0$ with
$0<\sigma_0<\phi_0$. Now, suppose that $\sigma_0<\phi<\phi_0$. Then
we have
$$
-\pi< -\phi_0< \sigma_0-\phi_0 <\sigma_0-\phi <0.
$$
This implies that
$$
\begin{aligned}
\beta(\phi_0)
&=B(\sigma_0,\phi_0)\\
&=|s_1e^{{\rm i}(\sigma_0-\phi_0)} +s_4|+|s_2 e^{{\rm i}\sigma_0} +  s_3|\\
&< |s_1e^{{\rm i}(\sigma_0-\phi)} +s_4|+|s_2 e^{{\rm i}\sigma_0} +
s_3| =B(\sigma_0,\phi)\le\beta(\phi).
\end{aligned}
$$
In short, we conclude that for every $\phi_0\in(0,\pi)$ there
exists $\sigma_0\in (0,\phi_0)$ such that
$$
\sigma_0<\phi<\phi_0 \ \Longrightarrow \beta(\phi)>\beta(\phi_0).
$$
Suppose that $0<\phi_1<\phi_2<\pi$. Because $\beta$ is continuous,
it has the maximum on the compact interval $[\phi_1,\phi_2]$ which must be taken
at $\phi_1$ by the above conclusion. Therefore, we have $\beta(\phi_1)>\beta(\phi_2)$.
\epf

\begin{corollary}
Suppose that $|z_i|=|w_i|>0$ for every $i=1,2,3,4$. Then we have
$|\phi_z|=|\phi_w|$  if and only if $\|z\|_\xx=\| w\|_\xx$.
\end{corollary}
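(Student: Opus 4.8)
The statement is an immediate consequence of Proposition~\ref{beta_decrease} once we reduce to the normalized setting, so the plan is first to invoke Proposition~\ref{X-norm-phase=basic}-type reductions and then to use strict monotonicity of $\beta$. For the ``only if'' direction, assume $|\phi_z|=|\phi_w|$. Since $\beta$ is even by \eqref{kye-norm-phase=basic} we may assume $\phi_z=|\phi_z|=|\phi_w|=\phi_w$ after possibly replacing $\phi_w$ by $-\phi_w$, and then $\|z\|_\xx$ and $\|w\|_\xx$ are computed from the same magnitudes $s_i=|z_i|=|w_i|$ and the same phase difference; more precisely, by \eqref{kye-norm-phase=basic} we have $\|z\|_\xx=\|(s_1e^{{\rm i}\phi_z},s_2,s_3,s_4)\|_\xx$ and likewise for $w$, and these two expressions coincide. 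Hence $\|z\|_\xx=\|w\|_\xx$. Note this direction does not even require positivity of the entries.

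For the ``if'' direction, suppose $\|z\|_\xx=\|w\|_\xx$ but, for contradiction, $|\phi_z|\neq|\phi_w|$. Using that $\beta$ is even, write $\phi_0:=|\phi_z|$ and $\phi_1:=|\phi_w|$ with, say, $0\le\phi_0<\phi_1\le\pi$; by \eqref{kye-norm-phase=basic} again, $\|z\|_\xx=\beta(\phi_0)$ and $\|w\|_\xx=\beta(\phi_1)$ where $\beta$ is the function attached to the common magnitudes $s_i=|z_i|=|w_i|>0$. By hypothesis every $s_i>0$, so Proposition~\ref{beta_decrease} applies and tells us $\beta$ is strictly decreasing on $[0,\pi]$; therefore $\beta(\phi_0)>\beta(\phi_1)$, i.e.\ $\|z\|_\xx>\|w\|_\xx$, contradicting $\|z\|_\xx=\|w\|_\xx$. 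Hence $|\phi_z|=|\phi_w|$.

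There is essentially no obstacle here: the only point requiring a little care is making sure the reductions via \eqref{kye-norm-phase=basic} genuinely express both $\|z\|_\xx$ and $\|w\|_\xx$ as values of the \emph{same} one-variable function $\beta$ built from the shared magnitude vector $(s_1,s_2,s_3,s_4)$, after which the evenness and strict monotonicity of $\beta$ do all the work. The hypothesis $|z_i|=|w_i|>0$ for all $i$ is exactly what is needed to apply Proposition~\ref{beta_decrease}, and without it the ``if'' direction fails (e.g.\ if some magnitude is zero, Proposition~\ref{X-norm}(ii) shows $\|z\|_\xx$ is independent of the phases).
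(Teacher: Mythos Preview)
Your proof is correct and is exactly the argument the paper intends: the corollary is stated immediately after Proposition~\ref{beta_decrease} with no separate proof, and the implicit reasoning is precisely the combination of \eqref{kye-norm-phase=basic} for the forward direction and the strict monotonicity of $\beta$ on $[0,\pi]$ for the converse. Your exposition is slightly more detailed than what the paper supplies, but the approach is identical.
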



\section{Dual norm and separability criterion}
\label{sec:norm}

Theorem \ref{kye-main-dual-norm} tells us that the separability problem
reduces to computing the dual norm.
For example, we have $\|(1,0,0,0)\|_\xx^\prime=1$, and so
the {\sf X}-state $\varrho_{a,b,c}$ given by
\begin{equation}\label{acin-exam}
\varrho_{a,b,c}=X((1,a,b,c),(1,a^{-1},b^{-1},c^{-1}),(1,0,0,0))
\end{equation}
is separable if and only if $abc^{-1}\ge 1$ and $a^{-1}b^{-1}c\ge 1$
if and only if $ab=c$. It was shown in \cite{abls} that $ab=c$ is a necessary condition
for separability of $\varrho_{a,b,c}$. Theorem \ref{kye-main-dual-norm} shows that this is
also sufficient for separability, without decomposing into the sum of product states.

In this section, we compute the dual norm in terms of entries in various
cases. We first deal with the vectors with real entries. To do this,
we consider the convex subset of the unit ball where the dual norm
is taken. More precisely, we define
$$
V_c=\{z\in\mathbb C^4: \|z\|_\xx\le 1,\ \re \lan c,z\ran =
\|c\|_\xx^\prime\},
$$
for a given $c\in\mathbb C^4$. The set $V_c$ is convex and nonempty,
and $z\in V_c$ implies $\|z\|_\xx=1$. Suppose that $c_i$ is a real
number for each $i=1,2,3,4$. Then we have $\re\lan c,z\ran =\re\lan
c,\bar z\ran$. Take any $z\in V_c$. Since $\|z\|_\xx=\|\bar
z\|_\xx$, we have $\bar z\in V_c$ and so, we have $\re z=\frac 12(z+\bar z)\in V_c$.
If $c\in\mathbb R^4$ then we have shown that there exists $z\in\mathbb R^4$ such that
$\re\lan c,z\ran=\|c\|_\xx^\prime$, and so this gives us
the formula
$$
\|c\|^\prime_\xx =\max\left\{ \frac{c_1z_1+c_2z_2+c_3z_3+c_4z_4}
{\max_\theta \left(|z_1e^{{\rm i}\theta} + z_4|+|z_2 e^{{\rm
i}\theta} + z_3|\right) }: z_1,z_2,z_3,z_4\in\mathbb R\right\},
$$
which has been already calculated in Section 5 of
\cite{han_kye_GHZ}. In order to explain this, we define the real
numbers
\begin{equation}\label{lambdas}
\begin{aligned}
\lambda_5 = 2(+c_1+c_2+c_3+c_4),\qquad
t_1&=c_1(-c_1^2+c_2^2+c_3^2+c_4^2)-2c_2c_3c_4,\\
\lambda_6 = 2(-c_1-c_2+c_3+c_4),\qquad
t_2&=c_2(+c_1^2-c_2^2+c_3^2+c_4^2)-2c_1c_3c_4,\\
\lambda_7 = 2(-c_1+c_2-c_3+c_4),\qquad
t_3&=c_3(+c_1^2+c_2^2-c_3^2+c_4^2)-2c_1c_2c_4,\\
\lambda_8 = 2(-c_1+c_2+c_3-c_4),\qquad
t_4&=c_4(+c_1^2+c_2^2+c_3^2-c_4^2)-2c_1c_2c_3,
\end{aligned}
\end{equation}
determined by $c\in\mathbb R^4$, and consider the following three cases:
\begin{enumerate}
\item[(A)]
$\lambda_5\lambda_6\lambda_7\lambda_8 \le 0$,
\item[(B)]
$\lambda_5\lambda_6\lambda_7\lambda_8 > 0$ and
$\left[t_1t_4\lambda_6\lambda_7 \ge 0\ {\rm or}\
t_2t_3\lambda_5\lambda_8 \le 0\right]$,
\item[(C)]
$\lambda_5\lambda_6\lambda_7\lambda_8 > 0$,
$t_1t_4\lambda_6\lambda_7 < 0$ and $t_2t_3\lambda_5\lambda_8 > 0$.
\end{enumerate}
The discussion in Section 5 of \cite{han_kye_GHZ} can be summarized
as follows:

\begin{proposition}\label{dual-norm-real-entry}
If $c_i$ is a real number for each $i=1,2,3,4$, then we have the
following:
\begin{enumerate}
\item[(i)]
in case of {\rm (A)} or {\rm (B)}, we have
$\|c\|_\xx^\prime=\|c\|_\infty$;
\item[(ii)]
in case of {\rm (C)}, we have $\|c\|_\xx^\prime=\frac 1{8}\Lambda(\lambda_5,\lambda_6,\lambda_7,\lambda_8)$,
where $\Lambda$ is in \eqref{eq:lambda}.
\end{enumerate}
\end{proposition}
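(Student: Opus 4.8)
The plan is to reduce the computation of $\|c\|_\xx^\prime$ for real $c$ to a finite-dimensional maximization over \emph{real} vectors and then to locate that maximum by a case analysis according to which face of the unit ball carries it; this is the computation of Section~5 of \cite{han_kye_GHZ}, which the following outlines.

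\emph{Reduction and symmetry.} By the discussion preceding the statement, for $c\in\mathbb R^4$ one has $\|c\|_\xx^\prime=\max\{\re\lan c,z\ran : z\in\mathbb R^4,\ \|z\|_\xx\le1\}$, so the task is to compute the support function at $c$ of the convex body $K=\{z\in\mathbb R^4:\|z\|_\xx\le1\}$. The body $K$ is invariant under the permutations \eqref{permutation} and under flipping the signs of any two coordinates, which are precisely the operations leaving $\|\ \|_\xx$ unchanged (Propositions~\ref{kyeperm-char} and \ref{dual-norm}(i)); since $\lambda_5,\dots,\lambda_8$, $t_1,\dots,t_4$ and the cases (A), (B), (C) transform consistently under the corresponding changes of $c$, we may freely normalize the ordering and signs of $c$ in what follows.

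\emph{Explicit norm and lower bounds.} For real $z$, $|z_ie^{{\rm i}\theta}+z_j|^2=z_i^2+z_j^2+2z_iz_j\cos\theta$ depends on $\theta$ only through $u=\cos\theta\in[-1,1]$, so
\[
\|z\|_\xx=\max_{u\in[-1,1]}\Big(\sqrt{z_1^2+z_4^2+2z_1z_4u}+\sqrt{z_2^2+z_3^2+2z_2z_3u}\Big),
\]
and the right-hand integrand is concave in $u$; hence the maximum is attained at $u=1$, at $u=-1$, or at an interior stationary point, the last occurring only when $z_1z_4$ and $z_2z_3$ are nonzero of opposite sign (so that $z_1z_2z_3z_4<0$, i.e. $\phi_z=\pi$). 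This splits $\mathbb R^4$ into three regions on which $\|z\|_\xx$ equals $|z_1+z_4|+|z_2+z_3|$, or $|z_1-z_4|+|z_2-z_3|$, or — one finds that on the interior region $z\in\Omega^-_0$ — the quantity $\Lambda(|z_1|,|z_2|,|z_3|,|z_4|)$ of \eqref{X-norm-phase_pi}. Lower bounds for $\|c\|_\xx^\prime$ come from testing: $z=\pm e_i$ has $\|z\|_\xx=1$ by Proposition~\ref{X-norm}(ii), giving $\|c\|_\xx^\prime\ge|c_i|$ and hence $\|c\|_\xx^\prime\ge\|c\|_\infty$; the sign vectors $\epsilon^{(5)}=(1,1,1,1)$, $\epsilon^{(6)}=(-1,-1,1,1)$, $\epsilon^{(7)}=(-1,1,-1,1)$, $\epsilon^{(8)}=(-1,1,1,-1)$ each have product of signs $+1$, hence $\phi=0$ and $\|\epsilon^{(j)}\|_\xx=4$ by \eqref{X-norm-share-mag}, so $\lan c,\epsilon^{(j)}\ran=\lambda_j/2$ yields $\|c\|_\xx^\prime\ge|\lambda_j|/8$; and in case (C) a suitable real vector $z^*$ with $\phi_{z^*}=\pi$ and $|z^*_i|$ built from $\lambda_5,\dots,\lambda_8$ lies in the interior region and, on evaluating $\|z^*\|_\xx$ via \eqref{X-norm-phase_pi} and \eqref{eq:lambda}, gives $\re\lan c,z^*\ran/\|z^*\|_\xx=\tfrac18\Lambda(\lambda_5,\lambda_6,\lambda_7,\lambda_8)$, so $\|c\|_\xx^\prime\ge\tfrac18\Lambda(\lambda_5,\dots,\lambda_8)$ there.

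\emph{Matching upper bound, and the obstacle.} The reverse inequality is proved region by region. On each of the two polyhedral regions, $K$ is cut out by one linear inequality $|z_1\pm z_4|+|z_2\pm z_3|\le1$ together with the region-defining inequalities, so $\re\lan c,\cdot\ran$ attains its maximum at a vertex; enumerating the vertices shows each yields a value $\pm c_i$ or $\pm\lambda_j/8$, so the supremum over these regions is $\le\max(\|c\|_\infty,\tfrac18\max_j|\lambda_j|)$. On the interior region one uses $\|z\|_\xx=\Lambda(|z_1|,\dots,|z_4|)$ and Lagrange multipliers: a maximizer $z^*$ satisfies $c=\mu\,\nabla\|z^*\|_\xx$, and substituting $\nabla\Lambda$ together with $\|z^*\|_\xx=1$ and eliminating reduces the critical value to $\tfrac18\Lambda(\lambda_5,\dots,\lambda_8)$; the sign inequalities defining case (C) are exactly what makes this stationary point a feasible interior maximizer with the correct signs, whereas in cases (A) and (B) no such point is feasible and the maximum is forced onto the polyhedral pieces, giving $\|c\|_\xx^\prime=\|c\|_\infty$. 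Combining with the lower bounds above yields both formulas. The main obstacle is precisely this last step — the vertex enumeration on the polyhedral faces and, more delicately, the Lagrange elimination on the interior region, which must both reproduce the symmetric expression $\Lambda(\lambda_5,\dots,\lambda_8)$ and determine exactly which sign configuration of $c$ (case (A)/(B) versus (C)) makes the interior critical point the global maximizer. This is the content of Section~5 of \cite{han_kye_GHZ}; as $c$ is here an arbitrary real vector rather than one coming from a GHZ-diagonal state, one only verifies that the normalizations used there remain available through the symmetries noted above, and the algebra is then identical.
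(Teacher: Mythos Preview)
Your proposal is correct and matches the paper's approach: the paper does not give a self-contained proof of this proposition but states it as a summary of the computation in Section~5 of \cite{han_kye_GHZ}, after first reducing to real $z$ via the $V_c$ argument in the text immediately preceding the statement --- exactly the reduction you invoke. Your outline of that computation (piecewise description of $\|z\|_\xx$ on $\mathbb R^4$, testing $e_i$ and the sign vectors $\epsilon^{(j)}$ for lower bounds, vertex/Lagrange analysis for the upper bound, with the sign conditions (A)/(B)/(C) distinguishing which candidate wins) is faithful to that reference, and your remark that the symmetries in Proposition~\ref{kyeperm-char} and the phase-difference invariance allow the normalizations used there is the only point that needs checking beyond the cited material.
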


Applying Proposition \ref{dual-norm} (iii), we can compute $\|c\|_\xx^\prime$
when at least one entry of $c$ is zero except for three entries, say $c_i,c_j,c_k$. In this case, it
is easily seen that the condition (A) for $\{|c_i|,|c_j|,|c_k|,0\}$ is satisfied if and only if
the three numbers $|c_i|, |c_j|$ and $|c_k|$ do not make a triangle. Indeed, we have
$$
\lambda_5\lambda_6\lambda_7\lambda_8=2^4(|c_3|^2-(|c_1|+|c_2|)^2)((|c_2|-|c_1|)^2-|c_3|^2).
$$
when $c_4=0$. Furthermore, the condition (B) for $\{|c_i|,|c_j|,|c_k|,0\}$ holds if and only if $|c_i|, |c_j|, |c_k|$ make an obtuse or right triangle, and the condition (C) for $\{|c_i|,|c_j|,|c_k|,0\}$ holds if and only if they make an acute triangle. Therefore, we have the
following:

\begin{theorem}\label{kye-three-entries-tri}
Suppose that at least one entry of $c\in\mathbb C^4$ is zero except for three entries
$c_i,c_j,c_k$. Then we have the following:
\begin{enumerate}
\item[(i)]
if the three numbers $|c_i|, |c_j|$ and $|c_k|$ do not make a triangle or make an obtuse or right triangle, then we have
$\|c\|_\xx^\prime=\|c\|_\infty$;
\item[(ii)]
if the three numbers $|c_i|, |c_j|$ and $|c_k|$ make an acute triangle, then we have
$$
\|c\|_\xx^\prime=
\frac{2|c_ic_jc_k|}{\sqrt{(|c_i|+|c_j|+|c_k|)(-|c_i|+|c_j|+|c_k|)(|c_i|-|c_j|+|c_k|)(|c_i|+|c_j|-|c_k|)}}.
$$
\end{enumerate}
\end{theorem}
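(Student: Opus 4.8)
The plan is to reduce the computation of $\|c\|_\xx^\prime$ in this three-nonzero-entry case to Proposition~\ref{dual-norm-real-entry} via Proposition~\ref{dual-norm}~(iii), and then to translate the algebraic trichotomy (A)/(B)/(C) into the geometric language of triangles. First, by Proposition~\ref{dual-norm}~(iii), we may replace $c$ by $(|c_1|,|c_2|,|c_3|,|c_4|)$, so that $c\in\mathbb R_+^4$ with one coordinate equal to zero. By the permutation invariance of $\|\ \|_\xx^\prime$ under the group in~(\ref{permutation}) (Proposition~\ref{kyeperm-char}), combined with Proposition~\ref{fytygdgfnd}~(i) which handles the $B$--$C$ and $A$--$C$, $A$--$B$ system swaps, I would argue that we may assume the zero entry is $c_4$, i.e.\ $c=(c_1,c_2,c_3,0)$ with $c_1,c_2,c_3\ge 0$. (One should check that these moves suffice to bring any of the four possible zero positions to the fourth slot; the even permutations plus the identities in Proposition~\ref{fytygdgfnd} generate enough symmetry.)

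Next, with $c_4=0$, I would compute the quantities $\lambda_5,\lambda_6,\lambda_7,\lambda_8$ and $t_1,\dots,t_4$ from~(\ref{lambdas}). The paper already records the key identity
$$
\lambda_5\lambda_6\lambda_7\lambda_8=2^4\bigl(c_3^2-(c_1+c_2)^2\bigr)\bigl((c_2-c_1)^2-c_3^2\bigr),
$$
and one sees that $c_3^2-(c_1+c_2)^2\le 0$ always (triangle inequality direction), so the sign of the product is governed by $(c_2-c_1)^2-c_3^2$. Hence condition~(A), i.e.\ $\lambda_5\lambda_6\lambda_7\lambda_8\le 0$, holds exactly when $(c_2-c_1)^2-c_3^2\ge 0$, which is precisely the statement that $c_1,c_2,c_3$ fail to form a (nondegenerate) triangle. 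When they do form a triangle we are in the regime $\lambda_5\lambda_6\lambda_7\lambda_8>0$, and I would then substitute $c_4=0$ into $t_1,t_2,t_3,t_4$ and into the products $t_1t_4\lambda_6\lambda_7$ and $t_2t_3\lambda_5\lambda_8$ to see which of (B) or (C) occurs. The expectation is that $t_i$ becomes (up to sign and positive factors) the ``cosine-type'' expression $c_i(-c_1^2+c_2^2+c_3^2+\cdots)$ whose sign detects whether the angle opposite $c_i$ is acute or obtuse; so (C) — requiring a strict sign pattern on $t_1t_4$ and $t_2t_3$ — should reduce exactly to all three angles being acute, while (B) captures the obtuse-or-right case. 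This is the routine but slightly delicate sign-chase I would carry out carefully.

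Finally, in case~(ii) I would plug $\lambda_5,\lambda_6,\lambda_7,\lambda_8$ (with $c_4=0$) into the formula $\|c\|_\xx^\prime=\tfrac18\Lambda(\lambda_5,\lambda_6,\lambda_7,\lambda_8)$ from Proposition~\ref{dual-norm-real-entry}~(ii) and simplify. With $\lambda_5=2(c_1+c_2+c_3)$, $\lambda_6=2(-c_1-c_2+c_3)$, $\lambda_7=2(-c_1+c_2-c_3)$, $\lambda_8=2(-c_1+c_2+c_3)$ one computes the three pairwise products $\lambda_5\lambda_6+\lambda_7\lambda_8$, etc., appearing in the definition~(\ref{eq:lambda}) of $\Lambda$; these should collapse to expressions proportional to $c_1^2$, $c_2^2$, $c_3^2$ respectively, yielding numerator $\propto |c_1c_2c_3|$ and denominator the square root of the Heron-type product $(c_1+c_2+c_3)(-c_1+c_2+c_3)(c_1-c_2+c_3)(c_1+c_2-c_3)$, after which the factors of $2$ and the $\tfrac18$ reconcile to give exactly the stated formula. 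I expect the main obstacle to be bookkeeping: verifying that the symmetry reduction genuinely covers all four zero-positions and all sign cases, and then managing the polynomial identities in the (B)-versus-(C) dichotomy without error — these are not conceptually hard but are error-prone, so I would organize them as a short sequence of clearly-stated algebraic lemmas before assembling the conclusion.
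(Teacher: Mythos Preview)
Your approach is essentially the paper's: reduce to nonnegative real entries via Proposition~\ref{dual-norm}~(iii), then read off the (A)/(B)/(C) trichotomy of Proposition~\ref{dual-norm-real-entry} in triangle language, and finally simplify $\tfrac18\Lambda(\lambda_5,\lambda_6,\lambda_7,\lambda_8)$ to the stated circumradius-type formula.

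One genuine slip: your claim that ``$c_3^2-(c_1+c_2)^2\le 0$ always (triangle inequality direction)'' is false --- nothing forces $c_3\le c_1+c_2$. Consequently your intermediate equivalence ``condition~(A) holds exactly when $(c_2-c_1)^2-c_3^2\ge 0$'' is wrong: it misses the case $c_3\ge c_1+c_2$. The correct argument is that the two factors satisfy
\[
\bigl(c_3^2-(c_1+c_2)^2\bigr)+\bigl((c_2-c_1)^2-c_3^2\bigr)=-4c_1c_2\le 0,
\]
so they cannot both be strictly positive; hence their product is $\le 0$ if and only if at least one factor is $\ge 0$, which is precisely the failure of one of the three triangle inequalities. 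With this fix your (A)-analysis is complete, and the (B)/(C) and final $\Lambda$-simplification steps go through as you outline.

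A minor simplification: you do not need Proposition~\ref{fytygdgfnd} for the zero-slot reduction. The eight permutations in~(\ref{permutation}) already act transitively on $\{1,2,3,4\}$ (each index appears as $\sigma(4)$ for some $\sigma$ in the list), so Proposition~\ref{kyeperm-char} alone lets you assume $c_4=0$.
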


\begin{corollary}\label{lower-esti,,,-dual}
If at least two entries of $c\in\mathbb C^4$ are zero, then $\|c\|_\xx^\prime=\|c\|_\infty$.
\end{corollary}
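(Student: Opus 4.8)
The plan is to read this off Theorem~\ref{kye-three-entries-tri}. Recall first that $\|c\|_\xx^\prime\ge\|c\|_\infty$ holds for every $c\in\mathbb C^4$ (the obvious lower bound noted in the introduction; concretely, since $\|c\|_\xx^\prime=\max\{|\langle c,z\rangle|:\|z\|_\xx=1\}$, evaluating at $z=e_p$ with $|c_p|=\|c\|_\infty$ and using $\|e_p\|_\xx=1$ from Proposition~\ref{X-norm}~(ii) gives $\|c\|_\xx^\prime\ge|c_p|$). So the entire content of the statement is the reverse inequality $\|c\|_\xx^\prime\le\|c\|_\infty$.

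Now assume $c$ has at least two zero entries. Choose any index $\ell$ with $c_\ell=0$ and set $\{i,j,k\}=\{1,2,3,4\}\setminus\{\ell\}$; then every entry of $c$ outside $\{i,j,k\}$ vanishes, so the hypothesis of Theorem~\ref{kye-three-entries-tri} is met with this triple. Since $c$ has at least two vanishing entries while only one of them, $c_\ell$, lies outside $\{i,j,k\}$, at least one of $c_i,c_j,c_k$ also vanishes. Hence the three nonnegative numbers $|c_i|,|c_j|,|c_k|$ cannot be the side lengths of a triangle (a zero side never occurs in a genuine triangle, and the only borderline case — two equal positive lengths together with a zero — is degenerate and in any event falls into case~(i) of the theorem). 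We are therefore in case~(i) of Theorem~\ref{kye-three-entries-tri}, which yields precisely $\|c\|_\xx^\prime=\|c\|_\infty$, completing the proof.

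There is no real obstacle here; this is a short bookkeeping argument, the only point requiring a moment's attention being the boundary reading of ``triangle'' in Theorem~\ref{kye-three-entries-tri}, which is dealt with above. For a proof independent of the real-coefficient computations imported from Section~5 of \cite{han_kye_GHZ}, one may instead argue directly. First one checks that $\|z\|_\xx\ge|z_p|+|z_q|$ for all distinct indices $p,q$: for $\{p,q\}=\{1,4\}$ or $\{2,3\}$ this is immediate by choosing $\sigma$ in \eqref{kye_norm-def} so as to align that pair (making one of the two absolute values equal $|z_p|+|z_q|$ while the other stays $\ge0$), and for a mixed pair it follows from the elementary inequality $|w+a|+|w-a|\ge 2\max(|w|,|a|)$ applied, with $(w,a)$ running over the two pairs $(z_1e^{\mathrm i\sigma},\bar z_4)$ and $(z_2e^{\mathrm i\sigma},\bar z_3)$, to an arbitrary angle $\sigma$ and its opposite $\sigma+\pi$, giving $\tfrac12\bigl(f(\sigma)+f(\sigma+\pi)\bigr)\ge\max(|z_1|,|z_4|)+\max(|z_2|,|z_3|)$ for the function $f$ defining $\|z\|_\xx$. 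Then, writing $p,q$ for the (at most two) indices at which $c$ is nonzero, one estimates $|\langle c,z\rangle|\le|c_p||z_p|+|c_q||z_q|\le\|c\|_\infty(|z_p|+|z_q|)\le\|c\|_\infty\|z\|_\xx$, whence $\|c\|_\xx^\prime\le\|c\|_\infty$.
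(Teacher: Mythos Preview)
Your proposal is correct, and in fact both of your arguments appear in the paper: the main one (reading it off Theorem~\ref{kye-three-entries-tri}) is exactly the intended reason the corollary is placed immediately after that theorem, and your alternative direct argument is essentially the duality proof the paper spells out later in Section~\ref{sec:norm-estimate} via Proposition~\ref{lower-esti,,,}. The one minor difference is that for the mixed-pair inequality $\|z\|_\xx\ge|z_p|+|z_q|$ the paper argues by a case split on the range of $\phi_z$ and an explicit choice of $\sigma$, whereas your averaging trick $\tfrac12\bigl(f(\sigma)+f(\sigma+\pi)\bigr)\ge\max(|z_1|,|z_4|)+\max(|z_2|,|z_3|)$ via $|w+a|+|w-a|\ge2\max(|w|,|a|)$ is a cleaner, case-free route to the same estimate.
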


Now, we look for the formula of the dual norm
when the four entries are partitioned into two groups with two entries with the common magnitudes.
This is, of course, the counterpart of Theorem \ref{kye_two-partition}. Actually,
the following lemma makes it possible to use Theorem \ref{kye_two-partition} for our purpose.

\begin{lemma}\label{1221}
For $c=(c_1,c_2,c_2,c_1) \in \mathbb C^4$, we have
$$
\|c\|_{\sf X}' =
\sup \{ |\lan c,z\ran| : z=(z_1,z_2,z_2,z_1),~ \|z\|_{\sf X} \le 1 \}
$$
\end{lemma}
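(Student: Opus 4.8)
The dual norm $\|c\|_{\sf X}'$ is by definition $\sup\{|\lan c,z\ran| : \|z\|_{\sf X}\le 1\}$, so the inequality ``$\ge$'' is immediate since we are taking the supremum over a smaller set on the right. The content is the reverse inequality: we must show that the supremum is already attained (or approached) over the symmetric slice $z=(z_1,z_2,z_2,z_1)$. The natural way to do this is an averaging/symmetrization argument, exploiting that the weight vector $c=(c_1,c_2,c_2,c_1)$ is invariant under the permutation $\s=\lan 4321\ran$, which by Proposition \ref{kyeperm-char} preserves $\|\ \|_{\sf X}$.

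\medskip\noindent
The plan is as follows. First I would fix an arbitrary $z=(z_1,z_2,z_3,z_4)\in\mathbb C^4$ with $\|z\|_{\sf X}\le 1$, and let $z^\s=(z_4,z_3,z_2,z_1)$ be its image under $\s=\lan 4321\ran$. By Proposition \ref{kyeperm-char}, $\|z^\s\|_{\sf X}=\|z\|_{\sf X}\le 1$. Since $c=(c_1,c_2,c_2,c_1)$ is $\s$-symmetric, a direct computation gives $\lan c, z^\s\ran = c_1 z_4 + c_2 z_3 + c_2 z_2 + c_1 z_1 = \lan c,z\ran$. After possibly multiplying $z$ by a unimodular scalar (which changes neither $\|z\|_{\sf X}$ nor $|\lan c,z\ran|$), I may assume $\lan c,z\ran$ is real and nonnegative, hence equals $|\lan c,z\ran|$; then $\re\lan c,z\ran=\re\lan c,z^\s\ran=|\lan c,z\ran|$. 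Now set $w=\tfrac12(z+z^\s)=\bigl(\tfrac{z_1+z_4}{2},\tfrac{z_2+z_3}{2},\tfrac{z_2+z_3}{2},\tfrac{z_1+z_4}{2}\bigr)$, which lies in the symmetric slice. By the triangle inequality for $\|\ \|_{\sf X}$, $\|w\|_{\sf X}\le \tfrac12(\|z\|_{\sf X}+\|z^\s\|_{\sf X})\le 1$, and by linearity of the bilinear form, $\lan c,w\ran = \tfrac12(\lan c,z\ran+\lan c,z^\s\ran)=\lan c,z\ran$. Hence $|\lan c,w\ran|=|\lan c,z\ran|$ with $w$ in the admissible symmetric set, which proves ``$\le$''.

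\medskip\noindent
The only slightly delicate point is the normalization step ensuring $\lan c,z\ran=|\lan c,z\ran|$ \emph{and} that this survives both the averaging and the symmetrization; but this is handled cleanly by first rotating $z$ so that $\lan c,z\ran\ge 0$, after which $\lan c,z^\s\ran=\lan c,z\ran$ automatically by the $\s$-symmetry of $c$, and the average inherits the same value. I do not expect any serious obstacle here: the argument is a textbook symmetrization, and every ingredient (permutation invariance of $\|\ \|_{\sf X}$, convexity of the unit ball, scalar invariance) is already available from Proposition \ref{kyeperm-char} and the norm axioms established in Section \ref{kye-section-norm}. One could equivalently phrase it via the order-unit / separability description of the dual norm, but the direct symmetrization is shorter and self-contained.
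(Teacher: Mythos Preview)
Your argument is correct and is essentially the same symmetrization as the paper's: form $w=\tfrac12(z+z^{\langle 4321\rangle})$, note $\langle c,w\rangle=\langle c,z\rangle$ by the $\langle 4321\rangle$-symmetry of $c$, and bound $\|w\|_{\sf X}$ via the triangle inequality together with Proposition~\ref{kyeperm-char}. The unimodular rotation step you insert is harmless but unnecessary, since $\langle c,z^\sigma\rangle=\langle c,z\rangle$ already holds as an equality of complex numbers (not merely of real parts), so $|\langle c,w\rangle|=|\langle c,z\rangle|$ follows directly.
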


\begin{proof}
For a given $z\in\mathbb C^4$, put $w=\frac 12(z_1+z_4, z_2+z_3,z_2+z_3,z_1+z_4)\in\mathbb C^4$. Then we see that
$w_1=w_4$, $w_2=w_3$ and $\langle c,z\rangle=\langle c,w\rangle$. Therefore, the inequality
$$
\|w\|_\xx \le \frac 12\left(
\|(z_1,z_2,z_3,z_4)\|_\xx+\|(z_4,z_3,z_2,z_1)\|_\xx\right)=\|z\|_\xx
$$
gives the required result. Here, the last identity follows from Proposition \ref{kyeperm-char}.
\end{proof}

\begin{theorem}\label{criterion}
Let $c\in\mathbb C^4$ with $\phi_c\neq 0$. If there exists a partition
$\{i_1,i_2\}\cup\{i_3,i_4\}=\{1,2,3,4\}$ such that
$|c_{i_1}|=|c_{i_2}|=r$ and $|c_{i_3}|=|c_{i_4}|=s$ with $r,s\neq 0$,
then we have
$$
\|c\|_\xx^\prime
=\sqrt
{r^2t_0^2 + 2rst_0|\sin(\phi_c/2)|+s^2 \over t_0^2+1},
$$
where
$\displaystyle{t_0={r^2-s^2+\sqrt{(r^2-s^2)^2+(2rs\sin(\phi_c/2))^2} \over 2rs|\sin(\phi_c/2)|}}$.
If $\phi_c=0$ then $\|c\|_\xx^\prime=\max\{r,s\}$.
\end{theorem}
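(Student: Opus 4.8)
The plan is to reduce, using the symmetry results of this section, to a configuration covered by Lemma~\ref{1221} and Theorem~\ref{kye_two-partition}, and then to carry out an elementary two-variable optimisation. First the normalisation. If $\phi_c=0$ the value $\max\{r,s\}$ is the $\phi_c=0$ instance of the formula below (and also follows from Proposition~\ref{dual-norm-real-entry}(i)), so assume $\phi_c\neq0$ and, by Proposition~\ref{dual-norm}(i), $\phi_c\in(0,\pi]$. Depending on which of the three pairings is given, one permutation from the list (\ref{permutation}) (Proposition~\ref{kyeperm-char}) together with, when needed, a system-interchange identity from Proposition~\ref{fytygdgfnd}(i) carries $c$ to a vector with pairing $\{1,4\}\cup\{2,3\}$, with $|c_1|=|c_4|=r$, $|c_2|=|c_3|=s$, and with phase difference unchanged up to sign; then Proposition~\ref{dual-norm}(i) replaces $c$ by $d=(re^{{\rm i}\phi_c/2},s,s,re^{{\rm i}\phi_c/2})$, of the form $(d_1,d_2,d_2,d_1)$ demanded by Lemma~\ref{1221}. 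So from now on $c=(c_1,c_2,c_2,c_1)$ with $|c_1|=r$, $|c_2|=s$, $\arg(c_1/c_2)=\phi_c/2$ and $\phi_c\in(0,\pi]$.

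By Lemma~\ref{1221}, $\|c\|_\xx^\prime=\sup\{2|c_1z_1+c_2z_2|:z=(z_1,z_2,z_2,z_1),\ \|z\|_\xx\le1\}$. For such $z$, writing $\rho=|z_1|$, $\tau=|z_2|$, $\delta=\arg z_1-\arg z_2$, Theorem~\ref{kye_two-partition} gives $\|z\|_\xx=2\sqrt{\rho^2+\tau^2+2\rho\tau|\cos\delta|}$ (since $\phi_z=2\delta$), and $|c_1z_1+c_2z_2|^2=r^2\rho^2+s^2\tau^2+2rs\rho\tau\cos(\phi_c/2+\delta)$; homogeneity in $z$ then turns the task into computing
\[
(\|c\|_\xx^\prime)^2=\sup_{\rho,\tau\ge0,\ \delta\in\mathbb R}\ \frac{r^2\rho^2+s^2\tau^2+2rs\rho\tau\cos(\phi_c/2+\delta)}{\rho^2+\tau^2+2\rho\tau|\cos\delta|}.
\]

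For the lower bound I would make the non-obvious choice $\delta=-\pi/2$: the denominator collapses to $\rho^2+\tau^2$ and $\cos(\phi_c/2+\delta)=\sin(\phi_c/2)\ge0$, so the quotient becomes $(r^2\rho^2+s^2\tau^2+2rs\rho\tau\sin(\phi_c/2))/(\rho^2+\tau^2)$, whose supremum over $\rho,\tau$ is $K:=\tfrac{r^2+s^2}{2}+\sqrt{(\tfrac{r^2-s^2}{2})^2+r^2s^2\sin^2(\phi_c/2)}$. For the matching upper bound the point is that $K$ is pinned down by $(K-r^2)(K-s^2)=r^2s^2\sin^2(\phi_c/2)$ and $K\ge\max\{r^2,s^2\}$; transposing the desired inequality, it suffices to show $(K-r^2)\rho^2+(K-s^2)\tau^2+2\rho\tau(K|\cos\delta|-rs\cos(\phi_c/2+\delta))\ge0$ for all $\rho,\tau\ge0,\delta$, and bounding the first two terms below by $2\rho\tau\sqrt{(K-r^2)(K-s^2)}=2rs\rho\tau\sin(\phi_c/2)$ (arithmetic--geometric inequality) reduces this to the elementary estimate $K|\cos\delta|\ge rs(\cos(\phi_c/2+\delta)-\sin(\phi_c/2))$, which follows from $\cos(\phi_c/2)\cos\delta\le|\cos\delta|$, $\sin(\phi_c/2)(1+\sin\delta)\ge0$ and $K\ge\max\{r^2,s^2\}\ge rs$. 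Hence $(\|c\|_\xx^\prime)^2=K$; since $t_0$ is precisely the positive root of $rs|\sin(\phi_c/2)|(t^2-1)=(r^2-s^2)t$, a short computation gives $K=s^2+rs|\sin(\phi_c/2)|\,t_0=\dfrac{r^2t_0^2+2rst_0|\sin(\phi_c/2)|+s^2}{t_0^2+1}$, the claimed formula; and $K=\max\{r^2,s^2\}$ when $\phi_c=0$.

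The genuine obstacle is locating the optimal phase: the naive choice phase-aligning $c_1z_1$ with $c_2z_2$ is not optimal, and recognising that $\delta=-\pi/2$ simultaneously trivialises the denominator and produces exactly the $\sin(\phi_c/2)$ cross-term --- together with the companion identity $(K-r^2)(K-s^2)=r^2s^2\sin^2(\phi_c/2)$ underlying the upper bound --- is the heart of the proof. The other point needing care is the bookkeeping in the normalisation, i.e. checking that the chosen permutation/conjugation really produces the pairing $\{1,4\}\cup\{2,3\}$ with the phase difference intact.
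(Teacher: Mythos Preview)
Your proof is correct, but the optimisation step takes a genuinely different route from the paper. After the same reduction to $c=(c_1,c_2,c_2,c_1)$ via Proposition~\ref{dual-norm}(i), Proposition~\ref{kyeperm-char}, Proposition~\ref{fytygdgfnd}(i) and Lemma~\ref{1221}, the paper parametrises $z_1=te^{{\rm i}\tau_1}$, $z_2=e^{{\rm i}\tau_2}$, sets $f(\tau,t)$ equal to the squared quotient (your ratio, with $t=\rho/\tau$ and $\tau=\delta$), and then carries out a full critical-point analysis: it computes $\partial f/\partial\tau$ and $\partial f/\partial t$, exhibits the algebraic identity
\[
A^2+B^2=(r^2+s^2-2rs\cos\theta)(t^2+2t\cos\tau+1)(r^2t^2+2rst\cos(\theta+\tau)+s^2)
\]
for the two numerators, deduces that the only interior critical point is a zero of $f$, and concludes that the maximum must lie on the boundary rays $\tau=\pm\pi/2$; a one-variable optimisation there produces $t_0$. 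By contrast you go straight to $\delta=-\pi/2$ for the lower bound and replace the calculus by the eigenvalue identity $(K-r^2)(K-s^2)=r^2s^2\sin^2(\phi_c/2)$ together with AM--GM for the upper bound. Your argument is shorter and more transparent---the upper bound is a clean quadratic-form inequality rather than a critical-point search---whereas the paper's approach, though heavier, is more systematic and explains \emph{why} the maximum sits at $|\cos\delta|=0$ (there are no competing interior extrema) rather than relying on the inspired guess. Both routes culminate in the same closed form, and your verification that $K=(r^2t_0^2+2rst_0|\sin(\phi_c/2)|+s^2)/(t_0^2+1)$ via $a(t_0^2-1)=bt_0$ is correct.
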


\begin{proof}
By the exactly same argument as in the proof of Theorem \ref{kye_two-partition},
we may assume that $|c_1|=|c_4|=r$ and $|c_2|=|c_3|=s$. Furthermore, we have
$$
\|c\|_\xx^\prime
= \|(re^{{\rm i}{\phi_c}/2},s,s,re^{{\rm i}{\phi_c}/2})\|_{\sf X}'
$$
by Proposition \ref{dual-norm} (i), so we may assume $c=(c_1,c_2,c_2,c_1)$ without loss of generality.
Let $z=(z_1,z_2,z_2,z_1)$ with $z_1=te^{{\rm i}\tau_1}$ and $z_2=e^{{\rm i}\tau_2}$. Then we have
$$
\|(z_1,z_2,z_2,z_1)\|_{\sf X}
=2\sqrt{t^2+1+2t|\cos(\tau_1-\tau_2)|}
$$
by Theorem \ref{kye_two-partition}. We also have
$$
|\lan c, z\ran|=2|c_1z_1+c_2z_2|=2|rte^{{\rm i}(\theta_1-\theta_2+\tau_1-\tau_2)}+s|.
$$
We put $\theta:=\theta_1-\theta_2$ and $\tau:=\tau_1-\tau_2$. Then $\phi_c=2\theta$ and $\phi_z=2\tau$,
and so we may assume that $\pi/2 \le \theta \le 3\pi/2$.
We define $f : [-\pi/2,3\pi/2] \times [0,\infty) \to \mathbb R$ by
\begin{equation}\label{dual-f-function}
\begin{aligned}
f(\tau,t)
=\left( {|\lan c, z\ran| \over \|z\|_{\sf X}} \right)^2
={|rte^{{\rm i}(\theta+\tau)} + s|^2 \over t^2 +1+2t|\cos \tau|}
={r^2t^2 + 2rst\cos(\theta+\tau)+s^2 \over t^2+1+2t|\cos\tau|}.
\end{aligned}
\end{equation}
It remains to find the maximum of $f$ on the domain $[-\pi/2,3\pi/2] \times [0,\infty)$.
Note that $f$ is not differentiable on the ray $\tau=\pi/2$.

We first consider on the open domain $(-\pi/2,\pi/2) \times (0,\infty)$, where we have
$$
\begin{aligned}
{\partial f \over \partial \tau}
&=
2t{(r^2\sin \tau -rs\sin(\theta+\tau))t^2 -(2rs\sin \theta) t +(s^2\sin \tau -rs\sin(\theta+\tau)) \over (t^2+1+2t\cos \tau)^2},
\\
{\partial f \over \partial t}
&=
2{(r^2\cos \tau -rs\cos(\theta+\tau))t^2+(r^2-s^2)t+(rs\cos(\theta+\tau)-s^2\cos\tau)
\over (t^2+1+2t\cos \tau)^2}.
\end{aligned}
$$
Let $A$ and $B$ be the numerators of ${1 \over 2t}{\partial f \over \partial \tau}$ and ${1 \over 2}{\partial f \over \partial t}$, respectively. From
$$
A^2+B^2=(r^2+s^2-2rs\cos \theta)(t^2+2t\cos \tau +1)(r^2t^2+2rst\cos(\theta+\tau)+s^2),
$$
we obtain a unique critical point
$$
\tau=\pi-\theta \in (-\pi/2,\pi/2) \quad \text{and} \quad t=s/r\in (0,\infty),
$$
where the function $f$ vanishes. When $\pi/2 < \tau < 3\pi/2$, we substitute $\tau'=\tau-\pi \in (-\pi/2,\pi/2)$ to get
$$
f(\tau,t)={r^2t^2 + 2rst\cos(\theta+\tau)+s^2 \over t^2+1-2t\cos\tau}
={r^2t^2 + 2rst\cos(\theta+\pi+\tau')+s^2 \over t^2+1+2t\cos\tau'}.
$$
Since $\pi - (\theta+\pi)=-\theta$ does not lie in $(-\pi/2,\pi/2)$,
the function $f$ does not have a critical point on the open domain $(\pi/2,3\pi/2)\times(0,\infty)$.

Next, we investigate the function $f$ on the boundaries $\tau=-\pi/2$ and $t=0$, and on the ray $\tau=\pi/2$.
Since
$$
f\left(\pm {\pi \over 2},t\right)={r^2t^2 \mp 2rst\sin\theta+s^2 \over t^2+1} \qquad \text{and} \qquad
f(\tau,0)=s^2=f\left(\pm {\pi \over 2},0\right),
$$
the maximum of $f$ occurs on the ray $\tau=\pi/2$ when $\pi \le \theta \le 3\pi/2$,
and on the ray $\tau=-\pi/2$ when $\pi/2 \le \theta \le \pi$.
We substitute $\tau=\pm \pi/2$ in the above ${\partial f \over \partial t}$ to get
$$
{\partial f \over \partial t}\left(\pm{\pi \over 2},t\right)=
2{\pm(rs\sin\theta) t^2 + (r^2-s^2)t \mp rs\sin\theta
\over (t^2+1)^2}.
$$
Solving the equation, we have
$$
t_0=
\begin{cases}
\displaystyle{{s^2-r^2-\sqrt{(s^2-r^2)^2+(2rs\sin\theta)^2} \over 2rs\sin\theta}} & \pi < \theta \le 3\pi/2, \\
\displaystyle{{s^2-r^2-\sqrt{(s^2-r^2)^2+(2rs\sin\theta)^2} \over
-2rs\sin\theta}} & \pi/2 \le \theta < \pi, \\
0 & \theta=\pi.
\end{cases}
$$
This completes the proof when $\phi_c\neq 0$, because $\theta=\phi_c/2$.
For the last case of $\phi_c=0$, the function $f$ has the maximum $s^2$ when $t=0$ or equivalently when $z_1=0$.
Note that our assumption $z_1=te^{{\rm i}\tau_1}, z_2=e^{{\rm i}\tau_2}$ does not cover the case $z_2=0$.
In this case, we obtain the maximum $r^2$ by the symmetry.
Alternatively, the case $\phi_c=0$ also comes out from Proposition \ref{dual-norm-real-entry}.
\end{proof}

Suppose that $|c_i|=r$ for each $i=1,2,3,4$. Then we have $t_0=1$ in Theorem \ref{criterion} and we have $\|c\|_\xx^\prime=
r \sqrt{1+|\sin(\phi_c/2)|}$.
Compare with (\ref{X-norm-share-mag}).
Therefore, we see that an {\sf X}-state $\varrho=X(a,b,c)$
is separable if and only if
\begin{equation}\label{share-mag-criterion}
\Delta_\varrho\ge r \sqrt{1+|\sin(\phi_\varrho/2)|},
\end{equation}
when all the anti-diagonals have the magnitude $r$.
This recover the result in \cite{han_kye_phase} without a decomposition.
We also consider the state
$$
\varrho_{\theta,r,s}=X({\bf 1},{\bf 1}, (e^{{\rm i}\theta}r,r,s,s)).
$$
Theorem \ref{kye-main-dual-norm} tells us that
$\varrho_{0,r,s}$ is separable if and only if $\max\{r,s\}\le 1$ and $\phi_{\pi,r,s}$ is separable if and only if
$r^2+s^2\le 1$, because
$\|(-r,r,s,s)\|_\xx^\prime=\sqrt{r^2+s^2}$ by Theorem \ref{criterion} or Proposition \ref{dual-norm-real-entry}.
Note that Theorem \ref{criterion} gives a interpolation between these two cases, $\theta=0$ and $\theta=\pi$. See Figure 2.

\begin{figure}[t]
\centering
\centerline{\includegraphics[width=2in]{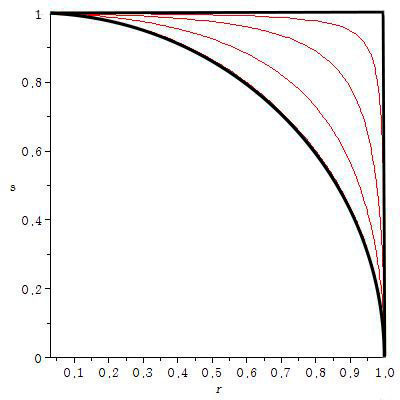}}
\caption{
Two thick curves, circle and rectangle, represent the boundary of the separability region $(r,s)$ of
the state $\varrho_{\theta,r,s}$
for $\theta=\pi$ and $\theta=0$, respectively. The other curves represent the separability regions
for $\theta=\pi/2,\pi/4,\pi/10$.
 \label{overflow}}
\end{figure}

It seems to be very difficult to compute the dual norm $\|c\|_\xx^\prime$ in general cases.
Because $\|c\|_\infty$ is a lower bound for the dual norm by (\ref{dual-basic-bound}),
it is natural to look for condition under which
the equality $\|c\|_\xx^\prime= \|c\|_\infty$ holds.

\begin{proposition}\label{dual-norm-bound}
For each $c\in\mathbb C^4$, we have
$\|c\|_\infty\le\|c\|_\xx^\prime\le \|(-|c_1|,|c_2|,|c_3|,|c_4|)\|_\xx^\prime$.
\end{proposition}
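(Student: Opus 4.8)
The plan is to prove the two inequalities separately; the left one is the obvious lower bound already mentioned in the introduction, and the right one rests on the monotonicity of $\|\ \|_\xx$ in the phase difference established in Proposition \ref{beta_decrease}.

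For $\|c\|_\infty\le\|c\|_\xx^\prime$ I would argue in one line: each canonical basis vector $e_i$ has a zero entry, so $\|e_i\|_\xx=\|e_i\|_1=1$ by Proposition \ref{X-norm} (ii), and hence from $\|c\|_\xx^\prime=\max\{|\lan c,z\ran|:\|z\|_\xx\le1\}$ we get $\|c\|_\xx^\prime\ge|\lan c,e_i\ran|=|c_i|$ for $i=1,2,3,4$, so that $\|c\|_\xx^\prime\ge\|c\|_\infty$.

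For the upper bound, write $\tilde c=(-|c_1|,|c_2|,|c_3|,|c_4|)$. Fix $z\in\mathbb C^4$ with $\|z\|_\xx\le1$ and set $\tilde z=(-|z_1|,|z_2|,|z_3|,|z_4|)$, which has phase difference $\phi_{\tilde z}=\pi$. The signs are arranged so that each product $\tilde c_i\tilde z_i$ equals $|c_i||z_i|\ge0$, whence
$$
|\lan c,z\ran|\ \le\ \sum_{i=1}^4|c_i|\,|z_i|\ =\ \sum_{i=1}^4\tilde c_i\tilde z_i\ =\ \lan\tilde c,\tilde z\ran\ =\ |\lan\tilde c,\tilde z\ran|.
$$
Next I would check that $\|\tilde z\|_\xx\le\|z\|_\xx$. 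If all coordinates of $z$ are nonzero, then $\tilde z$ has the same magnitudes as $z$, and writing $\beta$ for the function defined before Proposition \ref{beta_decrease} with $s_i=|z_i|$, we have $\|\tilde z\|_\xx=\beta(\pi)\le\beta(|\phi_z|)=\|z\|_\xx$, using $|\phi_z|\le\pi$, the monotonicity of $\beta$ on $[0,\pi]$, and $(\ref{kye-norm-phase=basic})$. If some coordinate of $z$ vanishes, then the corresponding coordinate of $\tilde z$ vanishes too, and both norms reduce to $\|z\|_1$ by Proposition \ref{X-norm} (ii), so again $\|\tilde z\|_\xx=\|z\|_\xx$.

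Combining this with the display and the defining bound $|\lan\tilde c,\tilde z\ran|\le\|\tilde c\|_\xx^\prime\,\|\tilde z\|_\xx$ of the dual norm gives $|\lan c,z\ran|\le\|\tilde c\|_\xx^\prime$ for every $z$ in the unit ball of $\|\ \|_\xx$; taking the supremum over such $z$ yields $\|c\|_\xx^\prime\le\|\tilde c\|_\xx^\prime$, which is the assertion. The only point needing care is the step $\|\tilde z\|_\xx\le\|z\|_\xx$: Proposition \ref{beta_decrease} supplies the monotonicity only when all entries are strictly positive, so the case in which a coordinate of $z$ vanishes must be disposed of by Proposition \ref{X-norm} (ii) instead. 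This is the sole (and minor) obstacle; everything else is a direct application of the definition of the dual norm together with results already in hand.
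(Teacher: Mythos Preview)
Your proof is correct and follows essentially the same route as the paper's: for the upper bound you define $\tilde c=(-|c_1|,|c_2|,|c_3|,|c_4|)$ and $\tilde z=(-|z_1|,|z_2|,|z_3|,|z_4|)$, bound $\re\lan c,z\ran\le\lan\tilde c,\tilde z\ran$, and then invoke $\|\tilde z\|_\xx\le\|z\|_\xx$ via Proposition~\ref{beta_decrease}. Your separate treatment of the case where some coordinate of $z$ vanishes (using Proposition~\ref{X-norm}~(ii)) is actually more careful than the paper's own proof, which cites Proposition~\ref{beta_decrease} without comment even though its strict-monotonicity statement presupposes all $s_j>0$.
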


\begin{proof}
It suffices to prove the second inequality.
Let $\tilde c = (-|c_1|,|c_2|,|c_3|,|c_4|)$ and $\tilde z = (-|z_1|,|z_2|,|z_3|,|z_4|)$.
We have
$$
\frac{{\rm Re}\, \lan c,z\ran}{\|z\|_\xx} \le\frac{\lan \tilde c,
\tilde z\ran}{\|z\|_\xx} \le\frac{\lan \tilde c,
\tilde z\ran}{\|\tilde z\|_\xx} \le \|\tilde c\|_\xx^\prime
$$
for every $z\in\mathbb C^4$. Here, the second inequality holds since
we have $\|\tilde z\|_\xx\le \|z\|_\xx$ by Proposition
\ref{beta_decrease}.
\end{proof}

We can obtain the closed formula of $ \|(-|c_1|,|c_2|,|c_3|,|c_4|)\|_\xx^\prime$ using Proposition
\ref{dual-norm-real-entry}. In particular, if $(-|c_1|,|c_2|,|c_3|,|c_4|)$ satisfies the condition (A) or
(B), then $\|c\|_\xx^\prime=\|c\|_\infty$ by Proposition \ref{dual-norm-bound}. In order to find this
condition, we consider the case of $(-c_1,c_2,c_3,c_4)$ with
positive $c_i$'s. In this case, the numbers $\l_i$ and $t_j$ in (\ref{lambdas}) determined by $c\in\mathbb R^4$
become $\lambda^-_i$
and $t^-_j$ as follows.
\begin{equation}
\begin{aligned}\label{-lambdas}
\lambda^-_5 = 2(-c_1+c_2+c_3+c_4),\qquad
-t_1^-&=c_1(-c_1^2+c_2^2+c_3^2+c_4^2)+2c_2c_3c_4,\\
\lambda^-_6 = 2(+c_1-c_2+c_3+c_4),\qquad
\phantom{-}t_2^-&=c_2(+c_1^2-c_2^2+c_3^2+c_4^2)+2c_1c_3c_4,\\
\lambda^-_7 = 2(+c_1+c_2-c_3+c_4),\qquad
\phantom{-}t_3^-&=c_3(+c_1^2+c_2^2-c_3^2+c_4^2)+2c_1c_2c_4,\\
\lambda^-_8 = 2(+c_1+c_2+c_3-c_4),\qquad
\phantom{-}t_4^-&=c_4(+c_1^2+c_2^2+c_3^2-c_4^2)+2c_1c_2c_3.
\end{aligned}
\end{equation}
We also consider the form
$$
Q[a,b,c](x)= -x^3+(a^2+b^2+c^2)x+2abc
$$
with positive variables $a,b,c$ and $x$. This is invariant under
all the permutations of variables $a,b$ and $c$. We have
\begin{equation}\label{kye-def_u_i}
-t_1^-=Q[c_2,c_3,c_4](c_1)=:u_1,\qquad
t_i^-=Q[c_j,c_k,c_\ell](c_i)=:u_i\ (i=2,3,4),
\end{equation}
where $\{i,j,k,\ell\}=\{1,2,3,4\}$.
By elementary calculus, we see that the function $Q[a,b,c]$ has
exactly one root $\alpha[a,b,c]$ in the interval $[0,\infty)$, and
$$
Q[a,b,c](x)<0\quad \Longleftrightarrow\quad x>\alpha[a,b,c].
$$
It is easy to check that
$$
Q[a,b,c](a+b+c)<0,\quad Q[a,b,c](a)>0,\quad Q[a,b,c](b)>0,\quad Q[a,b,c](c)>0,
$$
which implies
$$
a+b+c>\alpha[a,b,c],\quad a<\alpha[a,b,c],\quad b<\alpha[a,b,c],\quad c<\alpha[a,b,c].
$$
If $u_i=Q[c_j,c_k,c_\ell](c_i)\le 0$ ($i=1,2,3,4$), then we have
$c_i \ge \alpha[c_j,c_k,c_\ell]>c_j,c_k,c_\ell$. Therefore, two of
$u_i$'s cannot be non-positive, and so we conclude that all of $u_i$'s
are positive or exactly one of them is non-positive. It is also clear
that $u_i=Q[c_j,c_k,c_\ell](c_i)\le 0$ holds if and only if
$c_i\ge\alpha[c_j,c_k,c_\ell]>c_j,c_k,c_\ell$.

\begin{proposition}\label{quadrangle}
Let $c_j > 0$ for $j=1,2,3,4$.
Suppose that $\lambda^-_i$ and $t^-_i$ are defined by {\rm
(\ref{-lambdas})}. Then we have the following:
\begin{enumerate}
\item[(i)]
$\prod_{i=5}^8\lambda^-_i > 0$ if and only if $c_i$'s make a
quadrangle if and only $\lambda^-_i>0$ for each $i=5,6,7,8$;
\item[(ii)]
$(-c_1,c_2,c_3,c_4)$ satisfies the condition {\rm
(A)} or {\rm (B)} if and only if exactly one of $u_i$'s in {\rm (\ref{kye-def_u_i})} is non-positive.
\end{enumerate}
\end{proposition}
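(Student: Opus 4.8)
The plan is to dispatch the two parts in order, since (ii) relies on (i).

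For part (i) I would first note that ``the $c_i$'s make a quadrangle'' means, by the generalized polygon inequality for four positive segments, that each $c_i$ is strictly less than the sum of the other three, and inspecting (\ref{-lambdas}) shows these are exactly the inequalities $\lambda^-_5,\lambda^-_6,\lambda^-_7,\lambda^-_8>0$; this gives the equivalence of the last two conditions. Since ``all $\lambda^-_i>0$'' trivially implies $\prod_{i=5}^8\lambda^-_i>0$, only the reverse implication needs work, and here the key observation is that \emph{at most one} of $\lambda^-_5,\dots,\lambda^-_8$ can be non-positive: if, say, $\lambda^-_5\le 0$ then $c_1\ge c_2+c_3+c_4$, so $c_1$ strictly exceeds $c_2,c_3,c_4$ and each of $\lambda^-_6,\lambda^-_7,\lambda^-_8$ is visibly positive; the general case follows by relabeling. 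Consequently, if $\prod_{i=5}^8\lambda^-_i>0$ then no factor vanishes and an even number are negative, hence none is negative and all four are positive.

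For part (ii), I would begin by recording two bookkeeping facts: the conditions (A), (B), (C) are mutually exclusive and exhaustive, and the discussion preceding the proposition shows that the alternatives ``all $u_i>0$'' and ``exactly one $u_i\le 0$'' are likewise exhaustive and mutually exclusive. Taking complements, it then suffices to prove that $(-c_1,c_2,c_3,c_4)$ satisfies (C) if and only if all $u_i>0$. Next I would unwind (C) using the identities $t_1^-=-u_1$ and $t_i^-=u_i$ for $i=2,3,4$ from (\ref{kye-def_u_i}) together with part (i): the first clause $\prod_{i=5}^8\lambda^-_i>0$ is the quadrangle condition and forces every $\lambda^-_i>0$, so $\lambda^-_6\lambda^-_7>0$ and $\lambda^-_5\lambda^-_8>0$, whereupon the clause $t_1^-t_4^-\lambda^-_6\lambda^-_7<0$ becomes $u_1u_4>0$ and the clause $t_2^-t_3^-\lambda^-_5\lambda^-_8>0$ becomes $u_2u_3>0$. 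Thus (C) is equivalent to the conjunction: the $c_i$'s make a quadrangle, $u_1u_4>0$, and $u_2u_3>0$.

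Finally I would match this conjunction with ``all $u_i>0$''. If all $u_i>0$, then $u_1u_4>0$ and $u_2u_3>0$ hold trivially, and the quadrangle condition follows from the facts recalled before the proposition, namely $u_i>0 \Rightarrow c_i<\alpha[c_j,c_k,c_\ell]<c_j+c_k+c_\ell$ for every $i$. Conversely, if (C) holds then $u_1u_4>0$, and since at most one $u_i$ is non-positive we cannot have $u_1$ and $u_4$ both negative, so both are positive; the same argument applied to $u_2u_3>0$ gives $u_2,u_3>0$, hence all $u_i>0$. The whole argument is really an assembly of part (i) and the elementary properties of $Q[a,b,c]$ already established; the only place that demands attention is the asymmetric sign convention $t_1^-=-u_1$ (versus $t_i^-=u_i$) when translating the sign conditions in (C) into conditions on the $u_i$, but this is a matter of care rather than a genuine obstacle.
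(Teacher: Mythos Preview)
Your proof is correct. Part (i) matches the paper's argument (the paper phrases the key fact as ``the sum of any two $\lambda_i^-$ is positive,'' which is equivalent to your observation that at most one $\lambda_i^-$ can be non-positive). For part (ii) you argue via the complement, showing that (C) is equivalent to ``all $u_i>0$,'' whereas the paper argues directly that (A) or (B) is equivalent to ``exactly one $u_i\le 0$''; the underlying ingredients (part (i), the fact that at most one $u_i$ is non-positive, and the inequalities $c_i<\alpha[c_j,c_k,c_\ell]<c_j+c_k+c_\ell$) are identical, so this is the same proof organized contrapositively rather than a genuinely different route.
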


\begin{proof}
Let $\{i,j,k,\ell\}=\{1,2,3,4\}$.
The sum of any two $\lambda_i^-$ is positive.
Statement (i) is easy to check. For the `only if' part of (ii), we
first consider the condition (A), namely $\prod_{i=5}^8\lambda^-_i\le 0$.
Then we have $\lambda^-_i \le 0$ for some $i$, that is, $c_i\ge c_j+c_k+c_\ell$.
Since $c_i\ge c_j+c_k+c_\ell > \alpha[c_j,c_k,c_\ell]$, we have $u_i=Q[c_j,c_k,c_\ell](c_i)<0$.
Suppose that $\prod_{i=5}^8\lambda^-_i> 0$. Then the condition (B)
holds if and only if $(-t^-_1)t^-_4 \le 0$ or $t^-_2t^-_3 \le 0$ if and only
if one of $u_i$ is non-positive.

For the `if' part of (ii), suppose that $u_i\le 0$, $u_j>0$, $u_k>0$
and $u_\ell>0$. Then we have $c_j, c_k, c_\ell<\alpha[c_j, c_k,
c_\ell]\le c_i$. If $c_i\ge c_j+c_k+c_\ell$, then $c_i$'s do not make
a quadrangle. If $c_i< c_j+c_k+c_\ell$, then they make a quadrangle
and satisfy the condition (B). This completes the proof.
\end{proof}

We combine Proposition \ref{dual-norm-bound} and the Proposition \ref{quadrangle} to get a sufficient condition on $c\in\mathbb C^4$ under which the equality
$\|c\|_\xx^\prime=\|c\|_\infty$ holds.

\begin{theorem}\label{kye-dual-infty}
Suppose that $|c_k|>0$ for $k=1,2,3,4$. Define the real numbers
$v_k$ by
$$
v_k=|c_k|\left(-2|c_k|^2+\sum_{i=1}^4|c_i|^2\right)+2\prod_{i \ne k} |c_i|,
$$
for $k=1,2,3,4$. Then we have the following:
\begin{enumerate}
\item[(i)]
either all of $v_k$'s are positive or exactly one of them is non-positive;
\item[(ii)]
if one of $v_k$'s is non-positive, then $\|c\|_\xx^\prime=\|c\|_\infty$;
\item[(iii)]
if $|c_k|$'s does not make a quadrangle, then one of $v_k$'s is non-positive.
\end{enumerate}
\end{theorem}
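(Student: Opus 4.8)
The plan is to observe that each $v_k$ coincides with one of the quantities $u_i$ of (\ref{kye-def_u_i}) built from the magnitudes $|c_1|,\dots,|c_4|$, and then to deduce all three parts from Propositions \ref{dual-norm-bound}, \ref{dual-norm-real-entry} and \ref{quadrangle}. Writing $\{k,i,j,\ell\}=\{1,2,3,4\}$ and recalling $Q[a,b,c](x)=-x^3+(a^2+b^2+c^2)x+2abc$, a direct expansion gives
$$
v_k=-|c_k|^3+(|c_i|^2+|c_j|^2+|c_\ell|^2)|c_k|+2|c_i||c_j||c_\ell|=Q\bigl[|c_i|,|c_j|,|c_\ell|\bigr](|c_k|),
$$
so the $v_k$ are exactly the numbers $u_1,u_2,u_3,u_4$ attached by (\ref{kye-def_u_i}) to the real vector $\tilde c:=(-|c_1|,|c_2|,|c_3|,|c_4|)$, whose entries are all positive by hypothesis. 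Part (i) is then nothing but the observation made just before Proposition \ref{quadrangle}: since $u_i\le 0$ forces $|c_i|\ge\alpha[|c_j|,|c_k|,|c_\ell|]>|c_j|,|c_k|,|c_\ell|$, at most one of the $v_k$ can be non-positive.

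For (ii), assume some $v_k\le 0$. By (i) this means exactly one of the $v_k$ is non-positive, so Proposition \ref{quadrangle}(ii) applied to $\tilde c$ shows that $\tilde c$ satisfies condition (A) or (B). Hence Proposition \ref{dual-norm-real-entry}(i) gives $\|\tilde c\|_\xx^\prime=\|\tilde c\|_\infty=\|c\|_\infty$, and Proposition \ref{dual-norm-bound} sandwiches
$$
\|c\|_\infty\le\|c\|_\xx^\prime\le\|(-|c_1|,|c_2|,|c_3|,|c_4|)\|_\xx^\prime=\|c\|_\infty,
$$
forcing $\|c\|_\xx^\prime=\|c\|_\infty$.

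For (iii), if $|c_1|,\dots,|c_4|$ do not form a quadrangle then, after relabeling, $|c_k|\ge|c_i|+|c_j|+|c_\ell|$ for some $k$; since $|c_i|+|c_j|+|c_\ell|>\alpha[|c_i|,|c_j|,|c_\ell|]$ (noted before Proposition \ref{quadrangle}), we get $v_k=Q[|c_i|,|c_j|,|c_\ell|](|c_k|)<0$. Equivalently, the failure of the quadrangle inequality makes some $\lambda^-_m\le 0$, i.e.\ $\tilde c$ satisfies condition (A), and then Proposition \ref{quadrangle}(ii) produces a non-positive $u_i$. This argument is essentially bookkeeping, so I do not expect a genuine obstacle; the only steps requiring care are verifying the identity $v_k=Q[|c_i|,|c_j|,|c_\ell|](|c_k|)$ and checking that the standing hypothesis $|c_k|>0$ is precisely what licenses the use of Propositions \ref{dual-norm-real-entry} and \ref{quadrangle}.
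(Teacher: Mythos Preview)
Your proof is correct and follows exactly the route the paper intends: it states the theorem immediately after remarking ``We combine Proposition \ref{dual-norm-bound} and Proposition \ref{quadrangle}'' without writing out details, and your argument is precisely that combination (via Proposition \ref{dual-norm-real-entry}). One small wording slip: you say $\tilde c=(-|c_1|,|c_2|,|c_3|,|c_4|)$ has ``entries all positive,'' but of course its first entry is negative---what you mean (and what is needed for Proposition \ref{quadrangle}) is that the magnitudes $|c_1|,\dots,|c_4|$ are all positive.
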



\section{Estimates of the norms}
\label{sec:norm-estimate}

In the previous section, we have computed the dual norm in some cases
in order to provide necessary and sufficient criteria
for separability which is ready to apply with routine calculations. The formulae are too complicated
to understand in some cases. For example, see the formula given in Theorem \ref{criterion}.
Furthermore, it seems to be hopeless to get exact formula for the dual norm in general cases.
In this sense, it is worthwhile to estimate the dual norm. Especially,
lower bounds for the dual norm will give necessary criteria for separability.
In this section, we try to estimate the norms $\|\ \|_\xx$ and $\|\ \|_\xx^\prime$.

Theorem \ref{criterion} itself is quite useful to estimate the dual norm.
For a given $c=(r_1e^{{\rm i}\phi},r_2,r_3,r_4)$, we consider
$$
c^\prime=\left({r_1+r_4 \over 2}e^{{\rm i}\phi},{r_2+r_3 \over 2},
    {r_2+r_3 \over 2},{r_1+r_4 \over 2}\right).
$$
Then we have
$$
\begin{aligned}
\|c\|_\xx^\prime
&= \|(r_1e^{{\rm i}\phi/2},r_2,r_3,r_4e^{{\rm i}\phi/2})\|_{\sf X}' \\
&= {1 \over 2}\left(\|(r_1e^{{\rm i}\phi/2},r_2,r_3,r_4e^{{\rm i}\phi/2})\|_{\sf X}'
  + (r_4e^{{\rm i}\phi/2},r_3,r_2,r_1e^{{\rm i}\phi/2})\|_{\sf X}'\right) \\
&\ge \|\textstyle{({r_1+r_4 \over 2}e^{{\rm i}\phi/2},{r_2+r_3 \over 2},
    {r_2+r_3 \over 2},{r_1+r_4 \over 2}e^{{\rm i}\phi/2})}\|_{\sf X}'=\|c^\prime\|_\xx^\prime,
\end{aligned}
$$
by Proposition \ref{dual-norm} (i) and Proposition \ref{kyeperm-char}. By Theorem \ref{kye-main-dual-norm}, we have the following:

\begin{proposition}\label{2+2->arbi}
If an {\sf X}-state $X(a,b,c)$ is separable, then $X(a,b,c^\prime)$ is also separable.
\end{proposition}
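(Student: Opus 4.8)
The plan is to deduce this from the dual-norm criterion together with the monotonicity already exploited in the paragraph preceding the statement. By Theorem \ref{kye-main-dual-norm}, $X(a,b,c)$ is separable precisely when $\Delta_\varrho \ge \|c\|_\xx^\prime$, and the number $\Delta_\varrho$ depends only on the diagonal data $a,b$, hence is unchanged when $c$ is replaced by $c'$. So the whole proposition reduces to the single inequality $\|c'\|_\xx^\prime \le \|c\|_\xx^\prime$: granting it, $\Delta_{X(a,b,c')} = \Delta_{X(a,b,c)} \ge \|c\|_\xx^\prime \ge \|c'\|_\xx^\prime$, and a second appeal to Theorem \ref{kye-main-dual-norm} gives separability of $X(a,b,c')$.

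To prove $\|c'\|_\xx^\prime \le \|c\|_\xx^\prime$ I would first move both vectors into a symmetric normal form by Proposition \ref{dual-norm} (i). Writing $c = (r_1 e^{{\rm i}\phi}, r_2, r_3, r_4)$ with $\phi = \phi_c$, the vector $u := (r_1 e^{{\rm i}\phi/2}, r_2, r_3, r_4 e^{{\rm i}\phi/2})$ has the same entrywise magnitudes as $c$ and phase difference $\phi$, so $\|c\|_\xx^\prime = \|u\|_\xx^\prime$; likewise $\|c'\|_\xx^\prime$ equals the dual norm of $v := (\tfrac{r_1+r_4}{2} e^{{\rm i}\phi/2}, \tfrac{r_2+r_3}{2}, \tfrac{r_2+r_3}{2}, \tfrac{r_1+r_4}{2} e^{{\rm i}\phi/2})$, since $v$ has magnitudes $\tfrac{r_1+r_4}{2},\tfrac{r_2+r_3}{2},\tfrac{r_2+r_3}{2},\tfrac{r_1+r_4}{2}$ and phase difference $\phi$, exactly as $c'$. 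Now $v = \tfrac12(u + u^\sigma)$ with $\sigma = \langle 4321\rangle$, one of the eight permutations in (\ref{permutation}), so $\|u^\sigma\|_\xx^\prime = \|u\|_\xx^\prime$ by Proposition \ref{kyeperm-char}, and the triangle inequality for $\|\ \|_\xx^\prime$ yields $\|c'\|_\xx^\prime = \|v\|_\xx^\prime \le \tfrac12(\|u\|_\xx^\prime + \|u^\sigma\|_\xx^\prime) = \|u\|_\xx^\prime = \|c\|_\xx^\prime$.

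There is no serious obstacle; the argument is essentially the chain of (in)equalities displayed just before the statement. The only points to check are the bookkeeping in the two applications of Proposition \ref{dual-norm} (i)---that replacing a pair of entries by their common average preserves both the list of magnitudes and the phase difference---and that the reversal permutation $\langle 4321\rangle$ indeed appears in (\ref{permutation}); both are immediate. One could also carry out the estimate keeping $c$ itself throughout and invoking Proposition \ref{dual-norm} (i) only at the final step, but passing to the symmetric vectors $u,v$ is what makes Proposition \ref{kyeperm-char} directly applicable.
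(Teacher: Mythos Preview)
Your proof is correct and follows essentially the same approach as the paper: reduce to the dual-norm inequality $\|c'\|_\xx^\prime \le \|c\|_\xx^\prime$ via Theorem \ref{kye-main-dual-norm}, pass to the symmetric form $u=(r_1e^{{\rm i}\phi/2},r_2,r_3,r_4e^{{\rm i}\phi/2})$ by Proposition \ref{dual-norm} (i), and average $u$ with its reversal $u^{\langle 4321\rangle}$ using Proposition \ref{kyeperm-char} and the triangle inequality. This is exactly the chain of (in)equalities displayed just before the proposition, as you note.
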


Note that $c^\prime$ satisfies the assumption of Theorem \ref{criterion}. The {\sf X}-part of a separable three qubit state is again separable \cite{han_kye_GHZ}. Thus, we may get a necessary
criterion for a three qubit state whose {\sf X}-part is given by $X(a,b,c)$ for arbitrary $c\in\mathbb C^4$.
Applying Theorem \ref{criterion} directly, we have the necessary criterion by the inequality
$$
\|c\|_\xx^\prime \ge
\sqrt{{m_1^2t_0^2 + 2m_1m_2t_0|\sin(\phi_c/2)|+m_2^2 \over t_0^2+1}},
$$
where
$m_1={r_1+r_4 \over 2}$, $m_2={r_2+r_3 \over 2}$ and
$t_0={m_1^2-m_2^2+\sqrt{(m_1^2-m_2^2)^2+(2m_1m_2\sin(\phi_c/2))^2} \over 2m_1m_2|\sin(\phi_c/2)|}$.
By the exactly same argument as in the proof of Theorem \ref{kye_two-partition}, this inequality also holds
when $m_1={r_{i_1}+r_{i_2} \over 2}$, $m_2={r_{i_3}+r_{i_4} \over 2}$.
If we put a suitable $(\tau,t)$ in the function $f$ in (\ref{dual-f-function}), then we may get
interesting separability criteria. We exhibit two applications in this direction.

\begin{proposition}\label{gggjghhl}
Suppose that $\varrho$ is a three qubit state with the {\sf X}-part $X(a,b,c)$.
If $\varrho$ is separable then the inequality
$$
\Delta_\varrho \ge {\sqrt{2}m_1m_2 \over \sqrt{m_1^2+m_2^2}} \sqrt{1+|\sin(\phi_c/2)|}
$$
holds for
$m_1={|c_{i_1}|+|c_{i_2}| \over 2}$ and $m_2={|c_{i_3}|+|c_{i_4}| \over 2}$.
\end{proposition}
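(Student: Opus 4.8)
The plan is to bound $\|c\|_\xx^\prime$ from below by a single well-chosen test vector, after first reducing the anti-diagonal $c$ to a $2$--$2$ symmetric shape. Since the {\sf X}-part of a separable three qubit state is separable, $X(a,b,c)$ is separable, so Theorem~\ref{kye-main-dual-norm} gives $\Delta_\varrho\ge\|c\|_\xx^\prime$; hence it suffices to prove
$$
\|c\|_\xx^\prime\ \ge\ \frac{\sqrt 2\,m_1m_2}{\sqrt{m_1^2+m_2^2}}\,\sqrt{1+|\sin(\phi_c/2)|},
$$
and we may assume $m_1,m_2>0$, the inequality being trivial otherwise.

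First I would invoke the computation established just before the statement (for an arbitrary partition $\{i_1,i_2\}\cup\{i_3,i_4\}$): $\|c\|_\xx^\prime\ge\|c'\|_\xx^\prime$, where $c'$ has magnitude $m_1$ at the positions $i_1,i_2$, magnitude $m_2$ at $i_3,i_4$, and $\phi_{c'}=\phi_c$. Running the reduction used in the proofs of Theorem~\ref{kye_two-partition} and Theorem~\ref{criterion} (Propositions~\ref{kyeperm-char} and \ref{fytygdgfnd} together with Proposition~\ref{dual-norm}(i)), I may then assume $c'=(m_1e^{{\rm i}\phi_c/2},\,m_2,\,m_2,\,m_1e^{{\rm i}\phi_c/2})$. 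Now for any nonzero $z=(z_1,z_2,z_2,z_1)$ we have $\|c'\|_\xx^\prime\ge|\langle c',z\rangle|/\|z\|_\xx$, and $\|z\|_\xx$ is evaluated by Theorem~\ref{kye_two-partition}. The key step is the choice $z_1=m_2\,e^{{\rm i}\tau_1}$ and $z_2=m_1\,e^{{\rm i}\tau_2}$ with $\tau_1-\tau_2=\mp\pi/2$, the sign chosen so that $\cos(\phi_c/2+\tau_1-\tau_2)=|\sin(\phi_c/2)|$; then $\phi_z=\mp\pi$, so $\|z\|_\xx=2\sqrt{m_1^2+m_2^2}$, while $|\langle c',z\rangle|=2m_1m_2\,|1+e^{{\rm i}(\phi_c/2+\tau_1-\tau_2)}|=2m_1m_2\sqrt{2(1+|\sin(\phi_c/2)|)}$. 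Dividing the two gives exactly the asserted bound. This is precisely the value of the function $f$ of (\ref{dual-f-function}) at $(\tau,t)=(\mp\pi/2,\,m_2/m_1)$, the substitution anticipated in the discussion before the statement.

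There is no genuine obstacle here, only bookkeeping. The one point requiring attention is the sign of $\tau_1-\tau_2=\mp\pi/2$: it must be taken according to the sign of $\sin(\phi_c/2)$ so that $\cos(\phi_c/2+\tau_1-\tau_2)$ equals $+|\sin(\phi_c/2)|$ rather than $-|\sin(\phi_c/2)|$; equivalently, one may replace $\phi_c$ by $-\phi_c$ (i.e., transpose the state), which changes neither $\|c\|_\xx^\prime$ nor the right-hand side. One should also note that the degenerate case $\phi_c=0$ is covered: the test vector never has $z_2=0$ and the computation never divides by $\sin(\phi_c/2)$, so the bound $\Delta_\varrho\ge \sqrt 2\,m_1m_2/\sqrt{m_1^2+m_2^2}$ holds there as well.
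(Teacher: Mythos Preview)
Your proof is correct and follows essentially the same approach as the paper: reduce to the $2$--$2$ symmetric case via Proposition~\ref{2+2->arbi} (and the permutation/conjugation invariances), then evaluate the quotient $|\langle c',z\rangle|/\|z\|_\xx$ at the specific test point corresponding to $(\tau,t)=(\pm\pi/2,\,m_2/m_1)$ in the function $f$ of (\ref{dual-f-function}). The paper's proof simply records the value $f(\pm\pi/2,\,s/r)=\frac{2r^2s^2(1\mp\sin(\phi_c/2))}{r^2+s^2}$ and then invokes Proposition~\ref{2+2->arbi}, but your explicit test-vector formulation is the same computation unpacked.
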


\begin{proof}
We first consider the case when $r_1=r_4=r$ and $r_2=r_3=s$, and get a necessary condition
$$
\Delta_\varrho^2 \ge
f\left(\pm {\pi \over 2},{s \over r}\right) = {2r^2s^2\left(1\mp \sin (\phi_c/2)\right) \over r^2+s^2}.
$$
By Proposition \ref{2+2->arbi}, the assertion holds when
$m_1={|c_{1}|+|c_{4}| \over 2}$ and $m_2={|c_{2}|+|c_{3}| \over 2}$. The other cases follow by the same argument
as in the proof of Theorem \ref{kye_two-partition}.
\end{proof}

If the anti-diagonals share a common magnitude $r$, then
Proposition \ref{gggjghhl} enables us to recover the `only if' part of (\ref{share-mag-criterion}).
The next one has an interesting geometric interpretation.

\begin{proposition}\label{,hluighkfgdgh}
Let $\varrho$ be a three qubit state with the {\sf X}-part $X(a,b,c)$.
Suppose that the triangle with two sides $|c_{i_1}|+|c_{i_2}|$, $|c_{i_3}|+|c_{i_4}|$ and the internal angle $|\phi_c/2|$ is acute.
If $\varrho$ is separable, then we have
$$
\Delta_\varrho\ge R,
$$
where $R$ is the radius of the circumscribed circle.
\end{proposition}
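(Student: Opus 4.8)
The plan is to reduce to the balanced case of Theorem \ref{criterion} and then, instead of computing the dual norm exactly, to test the function $f$ from \eqref{dual-f-function} at a single well-chosen value of its second argument; this turns the statement into a discriminant condition, which in turn reduces to a short trigonometric estimate about acute triangles.

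First I would invoke Proposition \ref{2+2->arbi} together with the permutation argument from the proof of Theorem \ref{kye_two-partition}: since $\Delta_\varrho$ depends only on the diagonal entries and the phase difference is unaffected by the passage $c\mapsto c'$, it suffices to treat the case $|c_1|=|c_4|=r$, $|c_2|=|c_3|=s$ with $r=\tfrac12(|c_{i_1}|+|c_{i_2}|)>0$ and $s=\tfrac12(|c_{i_3}|+|c_{i_4}|)>0$, and by Proposition \ref{dual-norm}~(i) I may assume $\phi_c\in(0,\pi)$. Put $\gamma=\phi_c/2\in(0,\pi/2)$, so the triangle in question has sides $2r,2s$ and included angle $\gamma$; write $R$ for its circumradius and $A,B$ for the angles opposite $2r,2s$, so that $r=R\sin A$, $s=R\sin B$, $A+B+\gamma=\pi$, with $A,B,\gamma\in(0,\pi/2)$ by acuteness. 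By Theorem \ref{kye-main-dual-norm} and the computation in the proof of Theorem \ref{criterion}, separability of $\varrho$ gives
$$
\Delta_\varrho\ \ge\ \|c\|_\xx^\prime\ \ge\ \sqrt{f(-\pi/2,t)}\ =\ \sqrt{\frac{r^2t^2+2rst\sin\gamma+s^2}{t^2+1}}\qquad(t\ge0),
$$
so it is enough to produce some $t\ge0$ with $f(-\pi/2,t)\ge R^2$, i.e. with $(R^2-r^2)t^2-2rs\sin\gamma\,t+(R^2-s^2)\le0$.

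Since every side of a triangle is strictly shorter than the diameter of its circumcircle unless the opposite angle is right, acuteness forces $r<R$ and $s<R$, so the left-hand side is a genuine upward parabola in $t$ with nonnegative constant term and nonpositive linear term; both of its roots, when real, are therefore $\ge0$. Hence a suitable $t\ge0$ exists exactly when the discriminant is nonnegative, i.e. when $r^2s^2\sin^2\gamma\ge(R^2-r^2)(R^2-s^2)$; substituting $r=R\sin A$, $s=R\sin B$ and cancelling $R^4$ this reads $\sin^2 A\sin^2 B\sin^2\gamma\ge\cos^2 A\cos^2 B$, equivalently (all quantities being positive) $\sin A\sin B\sin\gamma\ge\cos A\cos B$. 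For this last inequality I would use $\gamma=\pi-(A+B)$, hence $\cos\gamma=\sin A\sin B-\cos A\cos B$, to rewrite it as $\sin A\sin B\,(1-\sin\gamma)\le\cos\gamma$; and indeed $\sin A\sin B\,(1-\sin\gamma)\le 1-\sin\gamma\le\cos\gamma$, the last step being the elementary bound $\sin\gamma+\cos\gamma\ge1$ for $\gamma\in[0,\pi/2]$. Combining, $\Delta_\varrho\ge\sqrt{f(-\pi/2,t)}\ge R$ for a suitable $t$, which is the assertion.

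I expect the only genuinely new content to be the inequality $\sin A\sin B\sin\gamma\ge\cos A\cos B$ for acute triangles, together with the observation that testing $f$ at one value of $t$ (rather than optimizing over $t$) already suffices and recasts the problem as a discriminant computation; the reduction to the balanced case and the bookkeeping around the sign of $\sin(\phi_c/2)$ are routine and modelled on the proofs of Propositions \ref{2+2->arbi} and \ref{gggjghhl}.
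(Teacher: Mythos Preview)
Your argument is correct, but it differs substantially from the paper's.

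The paper does not restrict to $\tau=\pm\pi/2$. Instead, working with the convention $\theta=\phi_c/2+\pi$, it plugs in the single point $\tau=-\theta$ and $t=\dfrac{r-s\cos(\phi_c/2)}{s-r\cos(\phi_c/2)}$ (the acuteness conditions $r>s\cos(\phi_c/2)$ and $s>r\cos(\phi_c/2)$ guarantee $t>0$). With this choice one has $\cos(\theta+\tau)=1$ and $|\cos\tau|=|\cos(\phi_c/2)|$, and a short computation gives the exact value
\[
f(\tau,t)=\frac{(rt+s)^2}{t^2+2t\cos(\phi_c/2)+1}=\frac{r^2+s^2-2rs\cos(\phi_c/2)}{\sin^2(\phi_c/2)}=R^2,
\]
the last equality being the combined law of sines and cosines for the triangle with sides $2r,2s$ and included angle $\phi_c/2$. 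No discriminant or auxiliary inequality is needed.

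Your route keeps $\tau=-\pi/2$ (as in Proposition~\ref{gggjghhl}) and trades the explicit evaluation for an existence argument: the condition $f(-\pi/2,t)\ge R^2$ becomes the quadratic inequality $(R^2-r^2)t^2-2rs\sin\gamma\,t+(R^2-s^2)\le 0$, and you reduce its solvability for $t\ge 0$ to the trigonometric inequality $\sin A\sin B\sin\gamma\ge\cos A\cos B$ for acute triangles, which you verify via $\cos\gamma=\sin A\sin B-\cos A\cos B$ and $\sin\gamma+\cos\gamma\ge 1$ on $(0,\pi/2)$. All steps are sound. What you gain is a unified treatment with Proposition~\ref{gggjghhl} (same ray $\tau=-\pi/2$) and an appealing side-inequality about acute triangles; what you lose is the explicit witness and the fact that the paper's choice hits $R^2$ on the nose rather than merely reaching it.
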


\begin{proof}
We first consider the case when $r_1=r_4=r$ and $r_2=r_3=s$.
Since the triangle is acute, we have $2r>2s\cos (\phi_c/2)$ and $2s>2r\cos (\phi_c/2)$.
We take
$$
\theta=\phi_c/2+\pi \in (\pi/2,3\pi/2), \quad \tau=-\theta, \quad
t={r+s\cos \theta \over s+r \cos \theta}={r-s\cos (\phi_c/2) \over s-r \cos (\phi_c/2)}>0,
$$
and put in $f$ in (\ref{dual-f-function}). We get the condition
$$
\Delta_\varrho^2 \ge
f(\tau,t) =
{r^2+s^2-2rs\cos(\phi_c/2) \over \sin^2(\phi_c/2)}.
$$
By the laws of sines and cosines, the right hand side is equal to $R^2$.
For general cases, we use the same argument as in the proof of Proposition \ref{gggjghhl}.
\end{proof}

Now, we turn our attention to estimate the norm $\|\ \|_\xx$. First of all, the following inequality
\begin{equation}\label{esti-basic}
\|z\|_\infty \le \|z\|_\xx \le \|z\|_1,\qquad z\in\mathbb C^4
\end{equation}
is clear by the definition of the norm. We have already seen that
the upper bound $\|z\|_1$ is sharp in several cases by Proposition
\ref{X-norm}. In order to get other lower bounds, we consider
the natural projection map $\pi_S$ from $\mathbb C^4$ onto
$|S|$-dimensional space by taking entries whose indices belong to
$S$, where $|S|$ denotes the cardinality of $S$. For example,
$\pi_{\{1,3\}}$ maps $(z_1,z_2,z_3,z_4)\in\mathbb C^4$ to $(z_1,z_3)\in\mathbb C^2$.

\begin{proposition}\label{lower-esti,,,}
If $|S|\le 2$ then
the map $\pi_S:(\mathbb C^4, \|\ \|_\xx)\to (\mathbb C^{\abs{S}}, \|\ \|_1)$ is norm decreasing.
That is, we have the inequality
$\|z\|_\xx\ge \|\pi_S(z)\|_1$ for every $z\in\mathbb C^4$.
\end{proposition}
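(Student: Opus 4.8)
The plan is to prove $\|z\|_\xx\ge\|\pi_S(z)\|_1=\sum_{i\in S}|z_i|$ directly, cutting down the number of cases with the symmetries already available. The cases $|S|\le 1$ are immediate, since $\|\pi_{\{j\}}(z)\|_1=|z_j|\le\|z\|_\infty\le\|z\|_\xx$ by \eqref{esti-basic}, so I would concentrate on $|S|=2$. Here I would first invoke Proposition \ref{kyeperm-char}, according to which the eight permutations listed in \eqref{permutation} all preserve $\|\ \|_\xx$, and observe that as $\sigma$ runs over these eight permutations the pairs $\{\sigma(1),\sigma(4)\}$ and $\{\sigma(1),\sigma(2)\}$ together run over all six two-element subsets of $\{1,2,3,4\}$. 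Hence it suffices to prove the two inequalities $\|z\|_\xx\ge|z_1|+|z_4|$ and $\|z\|_\xx\ge|z_1|+|z_2|$ for every $z\in\mathbb C^4$; for a general pair $\{i,j\}$ one applies the appropriate one of these to $z^\sigma$ for a suitable $\sigma$ from \eqref{permutation} and uses $\|z^\sigma\|_\xx=\|z\|_\xx$ together with $(z^\sigma)_k=z_{\sigma(k)}$.

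For the pair $\{1,4\}$ the argument is direct: since $\max_\sigma|z_1e^{{\rm i}\sigma}+\bar z_4|=|z_1|+|z_4|$, choosing $\sigma_0$ achieving this maximum and dropping the nonnegative second summand in the definition of $\|z\|_\xx$ gives $\|z\|_\xx\ge|z_1e^{{\rm i}\sigma_0}+\bar z_4|=|z_1|+|z_4|$. The pair $\{1,2\}$ is the real content, because the two summands of $\|z\|_\xx$ then involve disjoint coordinates and one must exhibit a single phase at which both are large. The key elementary fact is that for any $a,b\in\mathbb C$ the set $E(a,b):=\{\sigma:|ae^{{\rm i}\sigma}+b|\ge|a|\}$ contains a closed half-circle of phases, since $|ae^{{\rm i}\sigma}+b|^2-|a|^2=|b|^2+2\,{\rm Re}(a\bar b\,e^{{\rm i}\sigma})\ge 0$ as soon as ${\rm Re}(a\bar b\,e^{{\rm i}\sigma})\ge 0$ (and $E(a,b)$ is the whole circle when $ab=0$). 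Applying this to the first summand with $a=z_1$, $b=\bar z_4$ gives a closed half-circle $H_1$ on which $|z_1e^{{\rm i}\sigma}+\bar z_4|\ge|z_1|$, and to the second summand with $a=z_2$, $b=\bar z_3$ a closed half-circle $H_2$ on which $|z_2e^{{\rm i}\sigma}+\bar z_3|\ge|z_2|$. Two closed half-circles on $\mathbb R/2\pi\mathbb Z$ always meet — the midpoint of the shorter arc joining their centres lies within $\pi/2$ of each centre, hence in both — so taking $\sigma_1\in H_1\cap H_2$ yields $\|z\|_\xx\ge|z_1e^{{\rm i}\sigma_1}+\bar z_4|+|z_2e^{{\rm i}\sigma_1}+\bar z_3|\ge|z_1|+|z_2|$.

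The only step needing an idea is the half-circle intersection in the pair $\{1,2\}$, i.e.\ recognizing that the two terms of the maximum defining $\|z\|_\xx$ can be kept simultaneously large; even that is a two-line elementary fact, and the permutation bookkeeping and the degenerate subcases ($z_i=0$ and the like) are routine. Finally, I would point out that the proposition is simply the transpose of Corollary \ref{lower-esti,,,-dual}: since a linear map and its transpose with respect to the bilinear pairing $\lan c,z\ran=\sum c_iz_i$ have equal operator norm, the statement that $\pi_S\colon(\mathbb C^4,\|\ \|_\xx)\to(\mathbb C^{|S|},\|\ \|_1)$ is norm-decreasing is equivalent to the statement that the zero-padding inclusion $(\mathbb C^{|S|},\|\ \|_\infty)\to(\mathbb C^4,\|\ \|_\xx^\prime)$ is norm-decreasing, that is, $\|c\|_\xx^\prime\le\|c\|_\infty$ whenever $c$ has at least two vanishing entries; together with $\|c\|_\infty\le\|c\|_\xx^\prime$ this is exactly Corollary \ref{lower-esti,,,-dual}, so that corollary furnishes a one-line alternative proof.
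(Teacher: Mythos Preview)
Your proof is correct and takes a genuinely different route from the paper's own argument. The paper writes
\[
\|z\|_\xx=\max_\sigma\Bigl(\sqrt{s_1^2+s_4^2+2s_1s_4\cos(\phi_z+\sigma)}+\sqrt{s_2^2+s_3^2+2s_2s_3\cos\sigma}\Bigr)
\]
and, after disposing of $S=\{1,4\}$ and $S=\{2,3\}$ trivially, handles the four cross-pairs simultaneously by plugging in a specific value $\sigma\in\{0,\pm\pi/2\}$ chosen according to which quadrant $\phi_z$ lies in; for each such choice one gets first term $\ge\max(s_1,s_4)$ and second term $\ge\max(s_2,s_3)$. Your approach instead uses the permutation invariance of Proposition~\ref{kyeperm-char} to reduce to the two representative pairs $\{1,4\}$ and $\{1,2\}$, then treats $\{1,2\}$ via the observation that each summand $|ae^{{\rm i}\sigma}+b|$ exceeds $|a|$ on a closed half-circle of phases and that two closed half-circles must intersect. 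The paper's argument is more explicit (one can read off the witnessing phase), while yours is a clean existence argument that avoids the case split on $\phi_z$ and makes the geometric reason transparent.

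Your final paragraph, deducing the proposition from Corollary~\ref{lower-esti,,,-dual} by transposing $\pi_S$, is legitimate and non-circular here: in the paper that corollary is obtained in Section~\ref{sec:norm} from Theorem~\ref{kye-three-entries-tri} (ultimately from Proposition~\ref{dual-norm-real-entry}), independently of Proposition~\ref{lower-esti,,,}. The paper in fact remarks on this duality immediately after the proposition, but only in the direction Proposition $\Rightarrow$ Corollary; your observation that the reverse implication also gives a proof is a nice complement.
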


\begin{proof}
If $|S|=1$ then we have $\|z\|_\xx\ge \|z\|_{\infty}\ge\|\pi_S(z)\|_1$. Let $|S|=2$.
We write $|z_i|=s_i$.
Evidently $\| z\|_\xx\ge \max\{s_1+s_4,s_2+s_3\}$, and so the inequality holds when
$S=\{1,4\}$ or $S=\{2,3\}$. For the remaining cases, we note
$$
\| z\|_\xx=\max_\sigma \bigg(
\sqrt{s_1^2+s_4^2+2s_1s_4\cos(\phi_z+\s)} +
\sqrt{s_2^2+s_3^2+2s_2s_3\cos\s} \bigg),
$$
by \eqref{eq:zx}.
If $\phi_z \in [0,\p/2] \cup [3\p/2,2\p]$ then the inequalities hold
by choosing $\s=0$. If $\phi_z \in [\p/2,\p]$ then the inequalities also
hold by choosing $\s=-\p/2$. Finally, we have the inequalities by choosing $\s=\p/2$
when $\phi_z \in [\p,3\p/2]$.
\end{proof}

Proposition \ref{lower-esti,,,} does not hold when $|S|=3$.
For example, we have $\|(1,1,1,-1)\|_\xx=2\sqrt 2$ by
(\ref{X-norm-phase_pi}) or (\ref{X-norm-share-mag}), and $\|\pi_{\{1,2,3\}}(1,1,1,-1)\|_1=3$.
We also have $\|(1,1,1,0)\|_\xx=3$ by Proposition \ref{X-norm} (ii). So, it
should be noted that $\|(z_1,z_2,z_3,0)\|_\xx$ may exceed
$\|(z_1,z_2,z_3,z_4)\|_\xx$. Proposition \ref{lower-esti,,,} suggests to introduce
\begin{equation}\label{special-norm}
\|z\|_\square
=\max\{\|\pi_S(z)\|_1:S\subset\{1,2,3,4\},\ |S|=2\}.
\end{equation}
It is easily checked that $\|\ \|_\square$ is a norm on the vector space $\mathbb C^4$. In fact, we have
$\|z\|_\square=\max_{j,k=1,2,3,4,~j\ne k} \{\abs{z_j}+\abs{z_k} \}$.
Then Proposition \ref{lower-esti,,,} can be written by the inequality
$\|(z_1,z_2,z_3,z_4)\|_\xx\ge \|z\|_\square$.

Using Proposition \ref{lower-esti,,,}, we can prove Corollary \ref{lower-esti,,,-dual} by duality,
To see this, we note that the dual map $\iota_S:\mathbb C^2\to \mathbb C^4$ of $\pi_S$
is the natural embedding, for example, $\iota_{\{1,4\}}(\gamma_1,\gamma_2)=(\gamma_1,0,0,\gamma_2)$.
It is also norm-decreasing from $(\mathbb C^2,\|\ \|_\infty)$ to $(\mathbb C^4,\|\ \|_\xx^\prime)$,
by Proposition \ref{lower-esti,,,}.
If at least two entries of $c$ is zero
then we can take $S\subset \{1,2,3,4\}$ and $\gamma\in\mathbb C^2$ such that $\iota_S(\gamma)=c$.
Then we have
$\|c\|_\infty=\|\gamma\|_\infty\ge \|\iota_S(\gamma)\|_\xx^\prime=\|c\|_\xx^\prime$.
The other inequality
follows from
\begin{equation}\label{dual-basic-bound}
\|c\|_\infty \le \|c\|_\xx^\prime \le \|c\|_1,
\end{equation}
which is the dual statement of (\ref{esti-basic}).

Proposition \ref{beta_decrease} shows that the function $\beta$ takes the minimum at $\phi=\pi$, and so
it is natural to seek for lower bounds of two expressions in (\ref{X-norm-phase_pi})
in order to find lower bounds for $\|\ \|_\xx$.
For the second expression, we consider the function
$f(z)=\displaystyle{\frac{-z_1+z_2+z_3+z_4}{\Lambda(-z_1,z_2,z_3,z_4)}}$ with $z_i \in \mathbb R$ and $z_1z_2z_3z_4<0$,
which appears in the Appendix of \cite{han_kye_GHZ} with $c=(-1,1,1,1)$. It was shown in \cite{han_kye_GHZ}
that $f$ has the maximum $\sqrt 2$ on the ray $\{\lambda(-1,1,1,1) : \lambda>0\}$.
This shows the inequality
\begin{equation}\label{ineq-gen-1}
\Lambda(a_1,a_2,a_3,a_4)\ge \frac 1{\sqrt 2}(a_1+a_2+a_3+a_4),\qquad a_i>0,
\end{equation}
where the equality holds if and only if $a_1=a_2=a_3=a_4$.
On the other hand, we have the inequality
\begin{equation}\label{ineq-gen-2}
\Lambda(a_1,a_2,a_3,a_4)\ge (a_1+a_2+a_3+a_4)-2a_i,\qquad a_i>0,
\end{equation}
from the identity
$$
\begin{aligned}
&(a_1a_2+a_3a_4)(a_1a_3+a_2a_4)(a_1a_4+a_2a_3)-a_1a_2a_3a_4(a_1+a_2+a_3+a_4-2a_i)^2\\
&\phantom{XXXXXXXXXXXXXXXXX}
=(a_1a_2a_3a_4)^2
\left(\frac 1{a_1}+\frac 1{a_2}+\frac 1{a_3}+\frac 1{a_4}-\frac 2{a_i}\right)^2.
\end{aligned}
$$
For the first expression in (\ref{X-norm-phase_pi}), we also note that
\begin{equation}\label{ineq-gen-3}
\frac 1{a_i}\ge\sum_{j\neq i}\frac 1{a_j}
\ \Longrightarrow\
a_1+a_2+a_3+a_4-2a_i\ge\frac 1{\sqrt 2}(a_1+a_2+a_3+a_4).
\end{equation}
Indeed, for $i=1$, the assumption part of (\ref{ineq-gen-3}) implies
$$
\begin{aligned}
a_2+a_3+a_4
&\ge a_1\left(\frac 1{a_2}+\frac 1{a_3}+\frac 1{a_4}\right)(a_2+a_3+a_4)\\
&=a_1\left(3+\frac{a_2}{a_3}+\frac{a_3}{a_2}+\frac{a_3}{a_4}+\frac{a_4}{a_3}+\frac{a_4}{a_2}+\frac{a_2}{a_4}\right)\\
&\ge 9a_1\ge(\sqrt 2+1)^2a_1,
\end{aligned}
$$
from which we get the conclusion part of (\ref{ineq-gen-3}).
We have $\|z\|_\xx\ge \|(-|z_1|, |z_2|, |z_3|, |z_4|)\|_\xx$
by Proposition \ref{beta_decrease}. Therefore,
the formula (\ref{X-norm-phase_pi}) together with
(\ref{ineq-gen-1}), (\ref{ineq-gen-2}) and (\ref{ineq-gen-3}) implies the following:

\begin{proposition}\label{lower-est-sqrt_2}
For every $z\in\mathbb C^4$, we have the following:
\begin{enumerate}
\item[(i)]
$\|z\|_1\ge \|z\|_\xx\ge \frac 1{\sqrt 2}\|z\|_1$;
\item[(ii)]
$\|z\|_\xx\ge \sum^4_{j=1} \abs{z_j} - 2\min_j \abs{z_j}$.
\end{enumerate}
\end{proposition}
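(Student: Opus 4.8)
The plan is to reduce to the case $\phi_z=\pi$, where the explicit formula (\ref{X-norm-phase_pi}) is available, and then invoke the three elementary inequalities (\ref{ineq-gen-1}), (\ref{ineq-gen-2}) and (\ref{ineq-gen-3}) already recorded above. The upper bound $\|z\|_1\ge\|z\|_\xx$ in (i) is just the right half of (\ref{esti-basic}), so only the two lower bounds require work.

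First I would dispose of the case in which some entry of $z$ vanishes: then $\min_j|z_j|=0$ and Proposition \ref{X-norm}~(ii) gives $\|z\|_\xx=\|z\|_1$, so (i) holds and (ii) holds with equality. Hence we may assume $|z_j|>0$ for all $j$. Writing $s_j=|z_j|$, we have $\|z\|_\xx=\beta(\phi_z)$ for the even function $\beta$ introduced before Proposition \ref{beta_decrease}, by (\ref{kye-norm-phase=basic}); since $\beta$ is decreasing on $[0,\pi]$ it attains its minimum at $\phi=\pi$, so $\|z\|_\xx=\beta(\phi_z)\ge\beta(\pi)=\|w\|_\xx$ where $w=(-s_1,s_2,s_3,s_4)$ has $\phi_w=\pi$. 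As $\|w\|_1=\|z\|_1$ and $\min_j|w_j|=\min_j|z_j|$, it suffices to prove (i) and (ii) for $w$, i.e. to bound from below the right-hand side of (\ref{X-norm-phase_pi}).

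Now I split according to the two cases of (\ref{X-norm-phase_pi}). If $w\in\Omega^-_0$, then $\|w\|_\xx=\Lambda(s_1,s_2,s_3,s_4)$; applying (\ref{ineq-gen-1}) gives $\|w\|_\xx\ge\frac1{\sqrt2}\|z\|_1$, which is the lower bound in (i), and applying (\ref{ineq-gen-2}) with the index $i$ realizing $s_i=\min_j s_j$ gives $\|w\|_\xx\ge\|z\|_1-2\min_j|z_j|$, which is (ii). If instead $w\in\Omega^-_i$ for some $i$, then $\|w\|_\xx=\|z\|_1-2s_i$, and the defining inequality $\frac1{s_i}\ge\sum_{j\ne i}\frac1{s_j}$ forces $s_i\le s_j$ for every $j$, so $s_i=\min_j|z_j|$ and (ii) holds with equality; that same inequality is precisely the hypothesis of (\ref{ineq-gen-3}), which yields $\|z\|_1-2s_i\ge\frac1{\sqrt2}\|z\|_1$, completing (i). The whole argument is bookkeeping once the reduction is in place; the only delicate point is the passage to $\phi_z=\pi$, which needs all entries nonzero through Proposition \ref{beta_decrease} — exactly the hypothesis failing in the degenerate case treated at the outset — so I anticipate no real obstacle.
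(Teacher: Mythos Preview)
Your proof is correct and follows essentially the same route as the paper: reduce via Proposition~\ref{beta_decrease} to the vector $(-|z_1|,|z_2|,|z_3|,|z_4|)$ with phase difference $\pi$, then apply the explicit formula (\ref{X-norm-phase_pi}) together with the inequalities (\ref{ineq-gen-1}), (\ref{ineq-gen-2}), (\ref{ineq-gen-3}). Your write-up is in fact more explicit than the paper's, which simply states that these ingredients imply the result; your separate treatment of the zero-entry case and the case split $\Omega_0^-$ versus $\Omega_i^-$ make the argument self-contained.
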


The lower bound in
Proposition \ref{lower-est-sqrt_2} (ii) is stronger than that in Proposition \ref{lower-esti,,,}.
Further, the lower bounds in Proposition \ref{lower-esti,,,} and Proposition
\ref{lower-est-sqrt_2} (i) are independent. For example, we consider
$z=(p,q,q,q)$ with real numbers $p$ and $q$. When $p$ and $q$ share the same sign, we have $\|z\|_\xx=|p|+3|q|$ by Proposition \ref{X-norm} (iii). When their signs are different, we evaluate the norm
$\|z\|_\xx$ with the formula (\ref{X-norm-phase_pi}). Then we see
that $z\in \Omega^-_1$ if and only if $|q|\ge 3|p|$, and
$\Omega^-_2$, $\Omega^-_3$ and $\Omega^-_4$ are empty. Furthermore,
if $z\in\Omega^-_0$ then $\|z\|_\xx\le 1$ if and only if
$(|p|+|q|)^3\le |p|$. With this information, we may figure out the
region of $(p,q)$ so that $(p,q,q,q)$ belongs to the unit ball with
respect to the norm $\|\ \|_\xx$. See Figure 3.

\begin{figure}[t]
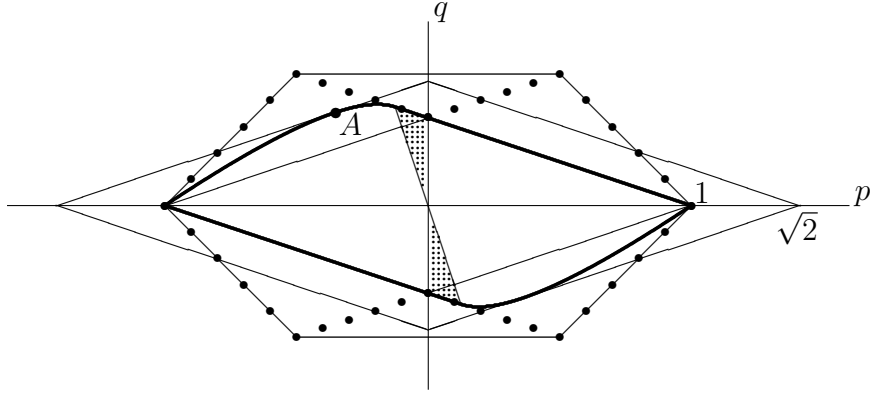

\centering
\input X-state_figure.tex
\end{picture}
\end{center}
\caption{ The region surrounded by thick lines and curves denotes $(p,q)$
so that $\|(p,q,q,q)\|_\xx\le 1$.
Proposition \ref{lower-est-sqrt_2} (i) tells us that the boundary is
sitting between two diamonds which represent the balls of radii $1$
and $\sqrt 2$ with respect to the $\ell^1$-norm, respectively. On
the first and third quadrants, the phase difference is $0$, and so
the boundary coincides with that of the $\ell^1$-unit ball by
Proposition \ref{X-norm} (iii). On the second and fourth quadrants,
the dotted and undotted regions are determined by $\Omega_1^-$ and
$\Omega_0^-$ in the formula (\ref{X-norm-phase_pi}), respectively.
The point $A\textstyle{\left(-\frac
{1}{2\sqrt 2},\frac 1{2\sqrt 2}\right)}$ is on the boundary because
$\|(-1,1,1,1)\|_\xx=2\sqrt 2$ by (\ref{X-norm-phase_pi}) or
(\ref{X-norm-share-mag}). The hexagon represents the unit ball with
respect to the norm $\|\ \|_\square$ defined in (\ref{special-norm}).
Finally, the concave polygon given  by dots represents the region determined by
Proposition \ref{lower-est-sqrt_2} (ii).
}
\end{figure}

Next, we consider the relation between the norm $\|c\|_\xx$ and its dual norm $\|c\|_\xx^\prime$.
This will gives us lower bounds for the dual norm $\|\ \|_\xx^\prime$ in terms of $\|\ \|_\xx$.

\begin{proposition}
For each $c\in\mathbb C^4$, we have
$\displaystyle{\|c\|_\xx^\prime
\ge\max_\phi\frac{\|(c_1,c_2,c_3,c_4 e^{{\rm i}\phi})\|_\xx}{2\sqrt 2\sqrt{1+|\cos (\phi/2)|}}}$.
\end{proposition}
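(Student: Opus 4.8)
The plan is to combine the duality formula $\|c\|_\xx^\prime=\max_z \re\langle c,z\rangle/\|z\|_\xx$ with the explicit value of $\|\ \|_\xx$ on vectors whose entries all have modulus one, namely $\|w\|_\xx=2\sqrt2\,\sqrt{1+|\cos(\phi_w/2)|}$ by \eqref{X-norm-share-mag}. The denominator $2\sqrt2\,\sqrt{1+|\cos(\phi/2)|}$ in the claimed bound is exactly $\|w\|_\xx$ for such a $w$ with $\phi_w=\phi$, so it will be enough to recognize the numerator $\|(c_1,c_2,c_3,c_4 e^{{\rm i}\phi})\|_\xx$ as the maximum of $\re\langle c,w\rangle$ over the unit-modulus vectors $w$ with $\phi_w=\phi$, and then use the duality bound for one such $w$.

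Accordingly, the first and main step is the identity
$$
\|(c_1,c_2,c_3,c_4 e^{{\rm i}\phi})\|_\xx=\max\{\re\langle c,w\rangle : |w_j|=1\ (j=1,2,3,4),\ \phi_w=\phi\}.
$$
To prove it I would start from $\|d\|_\xx=\max_\sigma\bigl(|d_1 e^{{\rm i}\sigma}+\bar d_4|+|d_2 e^{{\rm i}\sigma}+\bar d_3|\bigr)$, apply $|a|=\max_\psi\re(ae^{{\rm i}\psi})$ to each of the two moduli, and use $\re(\bar d_j e^{{\rm i}\psi})=\re(d_j e^{-{\rm i}\psi})$ to rewrite $\|d\|_\xx$ as the maximum of $\re\!\left(\sum_{j=1}^{4} d_j e^{{\rm i}\alpha_j}\right)$ over angles $\alpha_1,\alpha_2,\alpha_3,\alpha_4$ subject to the single linear relation $\alpha_1+\alpha_4=\alpha_2+\alpha_3$, which records that the same $\sigma$ occurs in both moduli. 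Specializing to $d=(c_1,c_2,c_3,c_4 e^{{\rm i}\phi})$ and absorbing the fixed phase $\phi$ of the fourth entry into $\alpha_4$ converts this relation into $\phi_w=\phi$ for the unit-modulus vector $w$ with phases $\alpha_j$, which is the asserted identity. Keeping this three-parameter optimization and its constraint straight is the only delicate point, and is where I expect the main effort to go; everything else is formal.

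With the identity in hand the proof closes immediately: for any $w$ with $|w_j|=1$ and $\phi_w=\phi$, \eqref{X-norm-share-mag} gives $\|w\|_\xx=2\sqrt2\,\sqrt{1+|\cos(\phi/2)|}$, a value independent of such $w$, so the duality formula gives $\|c\|_\xx^\prime\ge\re\langle c,w\rangle/\|w\|_\xx$; maximizing over these $w$ and applying the identity yields
$$
\|c\|_\xx^\prime\ \ge\ \frac{\|(c_1,c_2,c_3,c_4 e^{{\rm i}\phi})\|_\xx}{2\sqrt2\,\sqrt{1+|\cos(\phi/2)|}},
$$
and taking the maximum over $\phi$ is the assertion. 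As a sanity check, when $|c_i|=r$ for all $i$ the right-hand side equals $r\max_\phi\sqrt{(1+|\cos((\phi_c+\phi)/2)|)/(1+|\cos(\phi/2)|)}$, which simplifies to $r\sqrt{1+|\sin(\phi_c/2)|}=\|c\|_\xx^\prime$, so the bound is sharp on that family.
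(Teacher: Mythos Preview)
Your proof is correct and follows essentially the same approach as the paper: both test the dual norm against unit-modulus vectors $w$ with prescribed phase difference $\phi_w=\phi$, use \eqref{X-norm-share-mag} to compute $\|w\|_\xx$, and identify the maximum of $\re\langle c,w\rangle$ over such $w$ with $\|(c_1,c_2,c_3,c_4e^{{\rm i}\phi})\|_\xx$. The only presentational difference is that the paper constructs the test vector $z_{\theta,\phi}$ explicitly and then invokes Proposition~\ref{fytygdgfnd}(ii) at the end, whereas your change of variables $\alpha_1+\alpha_4=\alpha_2+\alpha_3$ packages the same computation as a direct identity and bypasses that citation.
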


\begin{proof}
We take
$z_{\theta,\phi}=
(e^{{\rm i}\theta}e^{{\rm i}\tau}, e^{{\rm i}\theta}e^{{\rm i}\psi}, e^{{\rm i}\tau}, e^{{\rm i}\phi}e^{{\rm i}\psi})\in\mathbb C^4$
with
$\tau=-{\rm arg} (c_1e^{{\rm i}\theta}+c_3)$ and $\psi=-{\rm arg}(c_2e^{{\rm i}\theta}+c_4e^{{\rm i}\phi})$,
for given $\theta$ and $\phi$.
By definition we have
$\displaystyle{\|c\|_\xx^\prime
\ge\max_{\theta,\phi}\frac{{\rm Re}\,\lan c,z_{\theta,\phi}\ran}{\|z_{\theta,\phi}\|_\xx}}$.
Note that
$$
\lan c, z_{\theta,\phi}\ran
=(c_1e^{{\rm i}\theta}+c_3)e^{{\rm i}\tau}+(c_2e^{{\rm i}\theta}+c_4e^{{\rm i}\phi})e^{{\rm i}\psi}
=|c_1e^{{\rm i}\theta}+c_3|+|c_2e^{{\rm i}\theta}+c_4e^{{\rm i}\phi}|,
$$
and so, it follows that
$\max_\theta \lan c, z_{\theta,\phi}\ran
=\|(c_1,c_2,\bar c_4e^{-{\rm i}\phi},\bar c_3)\|_\xx
=\|(c_1,c_2,c_3,c_4e^{{\rm i}\phi})\|_\xx
$
by Proposition \ref{fytygdgfnd} (ii). Because the phase difference of $z_{\theta,\phi}$ is given by $\phi$, we also have
$\|z_{\theta,\phi}\|_\xx=2\sqrt 2\sqrt{1+|\cos (\phi/2)|}$ by (\ref{X-norm-share-mag}).
\end{proof}

If we take $\phi=\pi$, then we have the following estimate:
$$
\displaystyle{\|c\|_\xx^\prime\ge \frac 1{2\sqrt 2}\|(c_1,c_2,c_3,-c_4)\|_\xx},
\qquad
c\in\mathbb C^4.
$$
We close this section to give two upper bounds for the dual norm $\|c\|_\xx^\prime$,
which are sharper than $\|c\|_\xx^\prime\le \|c\|_1$.
If $|c_{i_1}|\le|c_{i_2}|\le|c_{i_3}|\le|c_{i_4}|$, then we use Proposition \ref{lower-esti,,,} to obtain
$$
\begin{aligned}
\abs{\lan c,z\ran}
\le\sum_{i=1}^4|c_i||z_i|
&\le |c_{i_2}|(|z_{i_1}|+|z_{i_2}|)+|c_{i_4}|(|z_{i_3}|+|z_{i_4}|)\\
&\le |c_{i_2}|\|z\|_\xx +|c_{i_4}|\|z\|_\xx
=( |c_{i_2}| +|c_{i_4}|)\|z\|_\xx.
\end{aligned}
$$
So, we have an upper bound $\|c\|_\xx^\prime\le |c_{i_2}|+|c_{i_4}|$
whenever $|c_{i_1}|\le|c_{i_2}|\le|c_{i_3}|\le|c_{i_4}|$.
Finally, we have another upper bound $\|c\|_\xx^\prime\le \sqrt 2\|c\|_\infty$
by Proposition \ref{lower-est-sqrt_2} (i) and the duality.


\section{Discussion}

The duality between tensor products and linear maps plays a central role in the separability
problem through the bi-linear pairing
$\langle\varrho,\phi\rangle$, as it was found by Horodecki's \cite{horo-1}.
See also \cite{eom-kye}.
Here, the bi-linear pairing  can be described in terms of
Choi matrix $C_\phi$ of the map $\phi$ through $\langle\varrho,\phi\rangle={\text{\rm tr}}(\varrho C_\phi^\ttt)$.
This duality tells us that a state $\varrho$ is separable
if and only if $\langle\varrho,\phi\rangle \ge 0$ for every positive linear map $\phi$.
In other words, a state $\varrho$ is entangled if and only if there exists a positive linear map $\phi$
such that $\langle\varrho,\phi\rangle < 0$.

Therefore, this duality is very useful to
detect entanglement, and this has been formulated as the notion of entanglement witnesses
\cite{terhal}, which is nothing but the Choi matrix of a positive linear map.
On the other hand, we have to know all
positive linear maps, in order to
show that a given state is separable with this duality.
This was done for $2\otimes 2$ and $2\otimes 3$ cases
by the classical results of St{\o}rmer \cite{stormer} and Woronowicz \cite{woronowicz}, as they showed that
every positive linear map between $2\times 2$ and $3\times 3$ matrices is a decomposable
positive map, which has an exact form.

The above mentioned duality between bi-partite separable states and positive linear maps
has been extended \cite{kye_3qb_EW} to the duality between $n$-partite
separable states and positive multi-linear
maps with $(n-1)$ variables. In the three qubit case,
we know \cite{han_kye_tri} the condition for an {\sf X}-shaped self-adjoint matrices
under which they are Choi matrices of a positive bi-linear maps.
This enables us to show that the inequality (\ref{main-cri-dual-norm,,,})
is a sufficient condition for separability.
This is one of very few cases to find a sufficient condition without decomposition
into the sum of pure product states, as it was mentioned in Introduction.

Nevertheless, it is of an independent interest to look for decomposition.
For some separable three qubit {\sf X}-states, we already know decompositions.
This is the case for most GHZ-diagonal states \cite{guhne_pla_2011},
for those whose rank is less than or equal to six \cite{han_kye_phase},
for those whose anti-diagonal entries share a common magnitude \cite{han_kye_phase}.
We exhibit one more case. We have shown that the state $\varrho_{a,b,c}$ defined in (\ref{acin-exam})
is separable if and only if $ab=c$.
In order to decompose these states, we note that
\bea
\label{eq:psj}
&&
\ket{\ps_j}
\notag\\
&=&
u_{j0}(\ket{000}+\ket{111})
+
u_{j1}\ket{001}
+
u_{j2}\ket{010}
+
u_{j3}\ket{011}
+
u_{j4}\ket{100}
+
u_{j5}\ket{101}
+
u_{j6}\ket{110}
\notag\\
&=&
{1\over\sqrt7}
(\ket{0}+\omega^{4j}\ket{1}) \otimes
(\ket{0}+\omega^{2j}\ket{1}) \otimes
(\ket{0}+\omega^{j}\ket{1})
\notag\\
\eea
is a product vector for $j=0,1,\dots,6$, when
$u_{jk}={1\over\sqrt7}\omega^{jk}$ with the seventh root $\omega=e^{2\pi {\rm i} \over 7}$ of unity. That is, $[u_{jk}]$
is an order-seven discrete Fourier transform.
One can verify the identity
$$
\varrho_{1,1,1}=\sum^6_{j=0}\proj{\ps_j},
$$
which tells us that $\varrho_{1,1,1}$ can be decomposed into the sum of seven pure product states.
Because $\varrho_{1,1,1}$ has rank seven, this is an optimal decomposition, that is,
the number of pure product states is smallest among all the possible decompositions.
Note that two states $\varrho_{1,1,1}$ and $\varrho_{a,b,c}$
are equivalent under the local invertible operator
$P=\diag(1,c^{-1/2})\ox\diag(1,b^{1/2})\ox\diag(1,a^{1/2})$ whenever $ab=c$.
Therefore, we can also get an optimal decomposition of $\varrho_{a,b,c}$
into the sum of seven product states.

We recall that the length of a separable state is defined by the number
of pure product states appearing in an optimal decomposition. The length should be greater than or equal to
the rank of $\varrho$.
It was shown in \cite{chen_dj_2xd,chen_dj_semialg} that the length of a $2\otimes 2$ or $2\otimes 3$
separable state $\varrho$
is given by the maximum of ${\text{\rm rank}}\varrho$ and ${\text{\rm rank}}\varrho^\Gamma$,
and the length of an $m\otimes n$ separable state
may exceed the whole dimension $mn$ when $(m-2)(n-2)>1$.
Later, $3\otimes 3$ and $2\otimes 4$ separable states with length ten
have been constructed in \cite{{ha-kye-sep_2x4},{ha-kye-sep-face}}.
In the case of $\varrho_{a,b,c}$, we see that both the rank and the length are seven.
If a three qubit separable {\sf X}-state has rank four then it was shown in \cite{han_kye_phase} that
length is also four.
In case of a separable {\sf X}-state of  rank five (respectively six), the length is given by
$5$ or $6$ (respectively $6$, $7$ or $8$) \cite{han_kye_phase}.
It is not known that if there exists a three qubit state whose length exceeds the whole dimension $8$.
In this regard, three qubit {\sf X}-states might be the first possible target to find such states,
because we already know the separability condition.

\end{document}